\documentclass[11pt,reqno]{amsart}

\usepackage{mathrsfs}
\usepackage{amsmath}
\usepackage{amssymb}
\usepackage{amsfonts}
\usepackage{amsthm}
\usepackage{stmaryrd}
\usepackage{linearA}
\usepackage{float}

\usepackage{dsfont}
\usepackage[T1]{fontenc}
\usepackage[english]{babel}

\usepackage{a4wide}
\usepackage{graphicx}
\usepackage{pgfkeys}
\usepackage{longtable}
\usepackage{pdflscape}
\usepackage{color}
\usepackage{array}
\usepackage[left=2.5cm,right=2.5cm,top=3cm,bottom=3cm]{geometry}
\numberwithin{equation}{section}
\usepackage{comment}
\usepackage{tikz}
\usetikzlibrary{shapes}

\usepackage{enumitem}

\usepackage{hyperref}
\hypersetup{pdfauthor={Name}}
\hypersetup{colorlinks=true,allcolors=[rgb]{0,0,0.6}}

\newtheorem{theorem}{Theorem}[section]
\newtheorem{lemma}[theorem]{Lemma}
\newtheorem{proposition}[theorem]{Proposition}
\newtheorem{corollary}[theorem]{Corollary}
\newtheorem{definition}[theorem]{Definition}

\newtheorem{assumption}{Hypothesis}

\DeclareMathOperator*{\slim}{s-lim} 
\DeclareMathOperator*{\wlim}{w-lim} 
\DeclareMathOperator{\Ran}{Ran} 
\DeclareMathOperator{\Ker}{Ker} 
\DeclareMathOperator{\Id}{Id} 
\DeclareMathOperator{\Span}{Span} 
\DeclareMathOperator{\im}{Im} 
\DeclareMathOperator{\re}{Re} 
\newcommand{\Hi}{\mathcal{H}} 
\newcommand{\B}{\mathcal{B}}
\newcommand{\C}{\mathbb{C}} 
\newcommand{\R}{\mathbb{R}} 
\newcommand{\N}{\mathbb{N}} 
\newcommand{\norme}[1]{\left\Vert #1\right\Vert} 
\newcommand{\scal}[2]{\left\langle{#1} \middle., {#2}\right\rangle} 
\newcommand{\abso}[1]{\left|{#1}\right|}
\newcommand{\Res}{\mathcal{R}}

\author[N. Frantz]{Nicolas Frantz}
\address[N. Frantz]{Laboratoire Analyse Géométrie Modélisation \\
CY Cergy Paris Universit{\'e} 
95302 Cergy Pontoise, France}
\email{nicolas.frantz@cyu.fr}
\title{Scattering theory for some non-self-adjoint operators}
\bibliographystyle{abbrv} 
\begin{document}

\maketitle
\begin{abstract}
	We consider a non-self-adjoint $H$ given as the perturbation of a self-adjoint operator $H_0$. We suppose that $H$ is of the form $H=H_0+CWC$ where $C$ is a bounded, positive definite and relatively compact with respect to $H_0$, and $W$ is bounded. We suppose that $C(H_0-z)^{-1}C$ is uniformly bounded in $z\in\mathbb{C}\setminus\mathbb{R}$. We define the regularized wave operators associated to $H$ and $H_0$ by $W_\pm(H,H_0):=\displaystyle\slim_{t\rightarrow\infty} e^{\pm itH}r_\mp(H)\Pi_\mathrm{p}(H^\star)^\perp e^{\mp itH_0}$ where $\Pi_\mathrm{p}(H^\star)$ is the projection onto the direct sum of all the generalized eigenspace associated to eigenvalue of $H^\star$ and $r_\mp$ is a rational function that regularizes the `incoming/outgoing spectral singularities' of $H$. We prove the existence and study the properties of the regularized wave operators. In particular we show that they are asymptotically complete if $H$ does not have any spectral singularity. 
\end{abstract}
\tableofcontents
\newpage

\section{Introduction}
We are interested in this paper in the scattering theory for non-self-adjoint operators. Non-self-adjoint `Hamiltonians' in Quantum Mechanics are particularly relevant in various context, see e.g. \cite{Ba15_01,Kr17_01} and references therein for a detailed exposition.

In particular, non-conservative phenomena may be described by effective or phenomenological non-self-adjoint operators. A famous model involving non-self-adjoint operators is the \emph{nuclear optical model} describing the interaction of a neutron (or a proton) and a nucleus in Nuclear Physics. It was introduced by Feshbach, Porter and Weisskopf in \cite{FePoWe54_01} as an empirical model taking into account the formation of a compound nucleus. 
In this model, if the neutron is initially (at time $t=0$) in the normalized state $\varphi_0$, then it evolves into the unnormalized state $\varphi_t=e^{-itH}\varphi_0$ at time $t$, where $\varphi_t$ is the solution to the Schr\"odinger equation 
\begin{equation}\label{eq:Sch equ}
	i\partial_t\varphi_t=H\varphi_t.
\end{equation}
Here $H=-\Delta + V(x)$ is a dissipative operator acting on $L^2(\R^3,\C)$, with $\im(V(x))\leq 0$. Part of the energy of the neutron may be transferred to the nucleus, eventually leading to the capture of the neutron by the nucleus. The transfer of energy is mathematically illustrated by the dissipative nature of the `Hamiltonian' $H$. In particular, given a normalized state $\varphi_0$, the probability that the neutron escapes from the nucleus is given by 
\begin{equation*}
	\rho_{\mathrm{scatt}}(\varphi_0)=\lim_{t\rightarrow\infty}\norme{e^{-itH}\varphi_0}_{L^2}^2.
\end{equation*}
and $\rho_\mathrm{abs}(\varphi_0)$, the probability of absorption of the neutron by the nucleus, is given by 
\begin{equation*}
	\rho_\mathrm{abs}(\varphi_0)=1-\rho_\mathrm{scatt}(\varphi_0).
\end{equation*}
If the neutron is initially in a state whose probability of scattering is not zero, then it is expected that there exists a scattering state $\varphi\in\Hi$ such that $\norme{\varphi}_{L^2(\R^3,\C)}=\rho_\mathrm{scatt}(\varphi_0)$ and 
\begin{equation*}
	\lim_{t\rightarrow\infty}\norme{e^{-itH}\varphi_0-e^{-itH_0}\varphi}_{L^2(\R^3,\C)}=0.
\end{equation*}
This motivate the development of a scattering theory for non-self-adjoint operators.
The nuclear optical model leads to predictions that correspond to experimental scattering data to a high precision. Theoretical justifications of the model have been given in \cite{Fe58_01,Fe58_02,Fe62_01} (see also \cite{Ba15_01,Fe92_01,Ho71_01}). We mention the works \cite{FaFr18_01,FaNi18_01,Fa21_01} for an abstract dissipative scattering theory and \cite{Mo_76/77,Si79} for dissipative Schr\"odinger operators. 
Moreover, scattering theory for non-self-adjoint operators on Hilbert spaces has been considered in other contexts, see for example \cite{KaYa76_09} for the construction of local wave operators assuming that a limiting absorption principle holds, \cite{St13_11} for one-dimensional Schr\"odinger operators with a complex potential on the half-line, \cite{StS19_05} for an abstract framework on scattering theory for non-self-adjoint operators under an assumption of Kato's smoothness of the perturbation, and \cite{Ka65_01} in the case where the perturbation is not too large in Kato's sense.

We consider here an abstract class of operators of the form $H=H_0+V$ acting on a Hilbert space $\Hi$, where $H_0$ is self-adjoint with absolutely continuous spectrum and $V$ is a bounded operator relatively compact with respect to $H_0$. In particular, the essential spectrum of $H$ and that of $H_0$ coincide. We suppose that $V$ decomposes into the form $V=CWC$, with $W$ bounded and $C$ a bounded operator such that 
\begin{equation}\label{eq: Lim abs H_0}
	\sup_{z\in\C\backslash \R}\norme{C(H_0-z)^{-1}C}_{\mathcal{B}(\Hi)}<\infty.
\end{equation}
Such factorizations go back to the seminal work of Kato \cite{Ka65_01}.

We aim at constructing and studying the wave operators associated to $H$ and $H_0$. As $H$ is a perturbation of a self-adjoint operator by a bounded operator, $-iH$ is the generator of a strongly continuous group of evolution. 
We suppose in addition that $H$ (and hence $H^\star$) has a finite number of eigenvalues with finite algebraic multiplicities. We can then define the \emph{regularized wave operators} associated to $H$ and $H_0$ by
\begin{equation}\label{eq:def_WO-intro}
	W_\pm(H,H_0)=\slim_{t\rightarrow\pm\infty}e^{itH}\Pi_\mathrm{p}(H^\star)^\perp r_\mp(H)e^{-itH_0}.
\end{equation}
Here $\Pi_\mathrm{p}(H^\star)^\perp$ is the projection onto the orthogonal of the point spectral subspace of $H^\star$, and $r_\mp(H)$ are rational functions of $H$ regularizing the incoming/outgoing spectral singularities of $H$. See below for precise definitions. In our context, spectral singularities are defined as points of the essential spectrum of $H$ where $H$ does not satisfy a suitable limiting absorption principle (see Definition \ref{def:point_spectral_regulier_classique_pour_H} below). 
For Schr\"odinger operators with a bounded, compactly supported potential, spectral singularities correspond to real resonances (see e.g. \cite[chapter 3]{DyZw19_01} for a definition of resonances and \cite[section 5]{FaFr18_01} for a discussion on spectral singularities and real resonances).

We are not aware of such a definition of regularized wave operators previously in the literature, even in context of Schr\"odinger operators. In this paper, we prove that, under suitable assumptions, the regularized wave operators $W_\pm(H,H_0)$ exist, are injective, and such that their ranges are dense in $\Ran(\Pi_\mathrm{p}(H^\star)^\perp)$. (We will recall from \cite{FaFr22_06} that $\Ran(\Pi_\mathrm{p}(H^\star)^\perp)$ coincide with the `absolutely spectral subspace' of $H$, see Subsection \ref{subsec:orga} for the definition of $\mathcal{H}_{\mathrm{ac}}(H)$. These properties therefore extend the corresponding well-known properties that hold for wave operators in unitary scattering theory, which supports our definition of $W_\pm(H,H_0)$. 

Finally we show that if $H$ has no spectral singularity, then wave operators are \emph{asymptotically complete}, in the sense that they define bijections from $\Hi$ to $\Ran(\Pi_\mathrm{p}(H^\star)^\perp)$. A consequence of Asymptotic Completeness is that the solutions of the Schr\"odinger equation \eqref{eq:Sch equ} with initial conditions in $\Ran(\Pi_\mathrm{p}(H^\star)^\perp)$ are uniformly bounded in time. 

\medskip

\emph{Notation}.
In the following, if $\Hi_1$ and $\Hi_2$ are two Hilbert spaces, $\mathcal{B}(\Hi_1,\Hi_2)$ stands for the set of continuous linear operators from $\Hi_1$ to $\Hi_2$. If $\Hi_1=\Hi_2$, we simplify the notation by setting $\mathcal{B}(\Hi_1)=\mathcal{B}(\Hi_1,\Hi_1)$. We let  $\Res_B(z):=(B-z)^{-1}$ the resolvant of an operator $B$. For $H_0$, we denote $\Res_0(z)$ its resolvant. Moreover $D(\lambda,\varepsilon)$ is the open disk centered at $\lambda$ with radius $\varepsilon$ and $\C^\pm=\lbrace z\in\C, \pm\im(z)>0\rbrace$. Finally, the set of all integers between $1$ and $n$ is denoted $\llbracket 1,n\rrbracket$ and $A^{\mathrm{cl}}$ is the closure of the set $A$.

\section{Abstract setting}

\subsection{The model}

Consider $(\Hi,\scal{.}{.}_\Hi)$ a complex separable Hilbert space and an operator on $\Hi$ of the form 
\begin{equation}\label{eq:model def H}
	H:=H_0+V,
\end{equation}
where $H_0$ is self-adjoint and semi-bounded from below and $V\in\B(\Hi)$ is a bounded operator. In particular, $H$ is a closed operator with domain $\mathcal{D}(H)=\mathcal{D}(H_0)$ and its adjoint is given by 
\begin{equation*}
	H^\star=H_0+V^\star, \quad \mathcal{D}(H^\star)=\mathcal{D}(H_0).
\end{equation*}
Without loss of generality, we suppose that $H_0\geq 0$. 

As $H$ is a perturbation of the self-adjoint operator $H_0$ by the bounded operator $V$, $-iH$ generates a strongly continuous one-parameter group $\left\lbrace e^{-itH} \right\rbrace_{t\in\R}$ which satisfies
\begin{equation*}
	\norme{e^{-itH}}_{\B(\Hi)}\leq e^{\abso{t}\norme{V}_{\B(\Hi)}},\quad t\in\R,
\end{equation*}
(see e.g \cite{Da07_01} or \cite{EnNa20_01}).

We assume that there exists an operator $C\in\B(H)$, relatively compact with respect to $H_0$, such that $V$ is of the form
\begin{equation}\label{eq:model factorization V}
	V=CWC,
\end{equation}
with $W\in\B(\Hi)$. Finally, we require that $C$ be a \emph{metric operator}, which means that $C$ is a positive and injective (see e.g. \cite{AnTr12_12}).

\subsection{Spectral subspaces, spectral projection}

We let $\sigma(H)$ be the spectrum of $H$ and $\rho(H)=\C\setminus\sigma(H)$ its resolvent set. We define the point spectrum of $H$ as the set of all eigenvalues of $H$,
\begin{equation*}
	\sigma_{\mathrm{p}}(H):=\big\{\lambda \in \mathbb{C} , \, \Ker(H-\lambda)\neq\{0\}\big\}.
\end{equation*}
For each eigenvalue $\lambda$, we define its algebraic multiplicity $\mathrm{m}_\lambda(H)$ as the dimension of the generalized eigenspace associated to $\lambda$
\begin{equation*}
	\mathrm{m}_\lambda(H):=\mathrm{dim}\left(\bigcup_{k=1}^\infty\Ker\left( (H-\lambda)^k \right) \right). 
\end{equation*}
If $\lambda$ is an isolated eigenvalue of $H$, we denote by 
\begin{equation}\label{2eq:Projection_de_Riesz_def_pour_H}
	\Pi_\lambda(H):=\frac{1}{2\pi  i}\int_\gamma \left(z\Id-H\right)^{-1}\mathrm{d}z,
\end{equation}
the usual Riesz projection, where $\gamma$ is a circle oriented counterclockwise and centered at $\lambda$, of sufficiently small radius (so that $\lambda$ is the only point of the spectrum of $H$ contained in the interior of $\gamma$) is finite dimensional. The discrete spectrum of $H$, $\sigma_{\mathrm{disc}}(H)$, is the set of all isolated eigenvalues $\lambda$ such that the range of the associated Riesz projection is finite dimensional.

As $V$ is a relatively compact perturbation of $H_0$, the essential spectrum $\sigma_{\mathrm{ess}}(H):=\sigma(H)\backslash\sigma_\mathrm{disc}(H)$ and the essential spectrum of $H_0$ coincide. Moreover, the discrete spectrum $\sigma_{\mathrm{disc}}(H)$ is at most countable and can only accumulate at points of $\sigma_{\mathrm{ess}}(H)$. See Figure \ref{P2fig1}. We define in addition the set of eigenvalues embedded in the essential spectrum of $H$:
\begin{equation*}
	\sigma_{\mathrm{emb}}(H):=\sigma_{\mathrm{p}}(H)\cap\sigma_{\mathrm{ess}}(H).
\end{equation*}

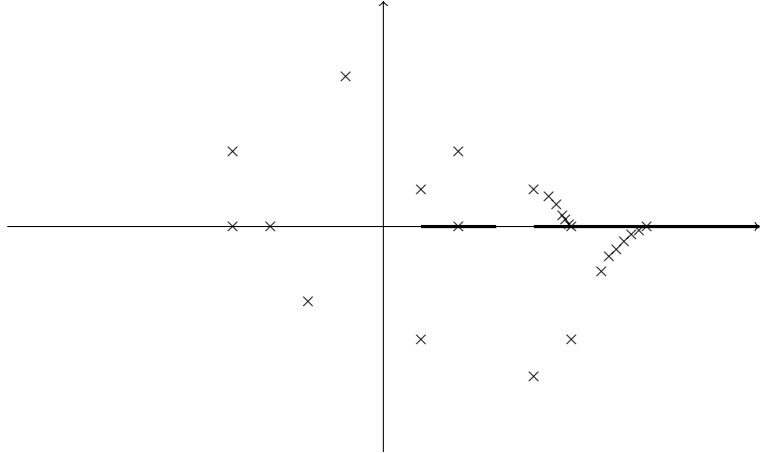
\begin{figure}
	\begin{center}
		\begin{tikzpicture}
			
			\draw[->] (-5,0) --(5,0);
			\draw[->] (0,-3) --(0,3);
			\draw[-, very thick] (0.5,0)--(1.5,0);
			\draw[-, very thick] (2,0)--(5,0);
			
			\draw (1,1) node {\tiny{$\times$}};
			
			\draw (0.5,0.5) node {\tiny{$\times$}};
			
			\draw (-2,1) node {\tiny{$\times$}};
			
			
			\draw (0.5,-1.5) node {\tiny{$\times$}};
			
			\draw (2,-2) node {\tiny{$\times$}};
			
			\draw (-1,-1) node {\tiny{$\times$}};
			
			\draw (-0.5,2) node {\tiny{$\times$}};
			
			\draw (2.5,-1.5) node {\tiny{$\times$}};

			\draw (1,0) node {\tiny{$\times$}};
			
			
			\draw(2,0.5) node {\tiny{$\times$}};
			
			\draw(2.2,0.4) node {\tiny{$\times$}};
			
			\draw(2.3,0.3) node {\tiny{$\times$}};
			
			\draw(2.38,0.15) node {\tiny{$\times$}};
			
			\draw(2.42,0.1) node {\tiny{$\times$}};
			
			\draw(2.47,0.03) node {\tiny{$\times$}};
			
			\draw (2.5,0) node {\tiny{$\times$}};
			
			\draw(-1.5,0)  node {\tiny{$\times$}};
			
			\draw(-2,0) node {\tiny{$\times$}};
			
			\draw(3.5,0) node {\tiny{$\times$}};
			
			\draw(3.4,-0.05)node {\tiny{$\times$}};
			
			\draw(3.3,-0.11)node {\tiny{$\times$}};
			
			\draw(3.2,-0.20) node {\tiny{$\times$}};
			
			\draw(3.1,-0.3) node {\tiny{$\times$}};
			
			\draw(3,-0.4) node {\tiny{$\times$}};
			
			\draw(2.9,-0.6) node {\tiny{$\times$}};

		\end{tikzpicture}
	\end{center}
	\caption{ \footnotesize  \textbf{Spectrum of $H$.} The essential spectrum of $H$, contained in $[ 0 , \infty )$, is represented by thick lines and coincides with that of $H_0$. The eigenvalues of $H$ are represented by crosses. The discrete spectrum of $H$ consists of isolated eigenvalues of finite algebraic multiplicities which may accumulate at any point of the essential spectrum. The point spectrum of $H$ may also contain eigenvalues embedded in the essential spectrum. }\label{P2fig1}
\end{figure}

\subsubsection{Eigenspaces corresponding to isolated eigenvalues}\label{P2subsubsec:isolated}
For $\lambda\in\sigma_\mathrm{disc}(H)$, since the restriction of $H$ to $\Ran(\Pi_\lambda(H))$ may have a nontrivial Jordan form, $\Ran(\Pi_\lambda(H))$ is in general spanned by generalized eigenvectors of $H$ associated to $\lambda$, i.e., by vectors $u\in\mathcal{D}(H^k)$ such that $(H-\lambda)^{k}u=0$ for some $1\leq k\leq \mathrm{m}_\lambda(H)$. Moreover the dimension of the range of $\Pi_\lambda(H)$ satisfies $\dim\Ran(\Pi_\lambda(H))=\mathrm{m}_\lambda(H)$.
We set 
\begin{equation*}
	\Hi_\mathrm{disc}(H):=\Span\left\lbrace u\in\Ran(\Pi_\lambda(H)),~\lambda\in\sigma_\mathrm{disc}(H)\right\rbrace^{\mathrm{cl}},
\end{equation*}
where $A^{\mathrm{cl}}$ stands for the closure of a subset $A\subset\Hi$. We will sometimes assume that the discrete spectrum of $H$ is finite. The spectral projection $\Pi_{\mathrm{disc}}(H)$ onto $\Hi_\mathrm{disc}(H)$ is then defined by
\begin{equation}\label{P2eq:def_Pi_disc}
	\Pi_{\mathrm{disc}}(H):=\sum_{\lambda\in\sigma_{\mathrm{disc}}(H)}\Pi_\lambda(H).
\end{equation}

  \subsubsection{Eigenspaces corresponding to embedded eigenvalues}\label{subsubsec:embedded}
		If $\lambda$ is an eigenvalue of $H$ embedded in its essential spectrum then the Riesz projection corresponding to $\lambda$ is ill-defined. Under some additional conditions, however, one can define the spectral projection $\Pi_\lambda(H)$ as follows.
		
		In the following (see Hypothesis \ref{hyp:Conjugate operator} below), we will suppose the existence of a \emph{antiunitary operator} $J\in\mathcal{B}(\Hi)$ verifying
			\begin{equation}\label{eq:existence_J}
			 J\mathcal{D}(H_0)\subset\mathcal{D}(H_0) \quad\text{and}\quad	\forall u\in\mathcal{D}(H_0), \quad JHu=H^\star Ju.
			\end{equation}
		In particular, $J$ establishes a one-to-one correspondence between $\Ker((H-\lambda)^{k})$ and $\Ker((H^\star-\bar{\lambda})^{k})$ for all $k\in\mathbb{N}$ and hence
		\begin{equation*}
		\mathrm{m}_\lambda(H)=\mathrm{m}_{\bar{\lambda}}(H^\star).
		\end{equation*} 
		To shorten notation, let $\mathrm{m}_\lambda=\mathrm{m}_\lambda(H)=\mathrm{m}_{\bar\lambda}(H^\star)$. In order to define projections onto the generalized eigenspace associated to embedded eigenvalues, we will suppose that for each embedded eigenvalue $\lambda\in\sigma_{\mathrm{ess}}(H)$, $\mathrm{m}_\lambda$ is finite and the symmetric bilinear form
		\begin{equation}\label{eq:invertibility_mat}
			\Ker((H-\lambda)^{\mathrm{m}_\lambda})\times	\Ker((H-\lambda)^{\mathrm{m}_\lambda}) \ni (u,v)\mapsto \langle Ju,v\rangle_\Hi \quad\text{is non-degenerate}.
		\end{equation}
		This implies that there exists a basis $(\varphi_k)_{1\leq k\leq \mathrm{m}_\lambda}$ of $\Ker((H-\lambda)^{\mathrm{m}_\lambda})$ such that
		\begin{equation*}
		\langle J \varphi_i , \varphi_j \rangle_{\Hi} = \delta_{ij} , \quad 1\le i,j\le m_\lambda.
		\end{equation*}
		Thus we can define the spectral projection $\Pi_\lambda(H)$ onto the generalized eigenspace corresponding to $\lambda$ as
		\begin{equation}\label{eq:proj_emb}
			 \Pi_\lambda(H)u = \sum_{k=1}^{\mathrm{m}_\lambda}\scal{J\varphi_k}{u}_{\Hi}\varphi_k, \quad u \in \Hi.
		\end{equation}
		It is not difficult to observe that $\Pi_\lambda(H)$ is a projection commuting with $H$, such that $\Pi_\lambda(H)\in\mathcal{B}(\Hi)$ and $\Pi_\lambda(H)^\star=\Pi_\lambda(H^\star)$. For more detail about this projection, see \cite{FaFr22_06}. 
		
		\subsubsection{Projection onto the point spectrum}
		
		In the following we will suppose that $H$ has a finite number of eigenvalues with finite algebraic multiplicities, see Hypothesis \ref{hyp:VPH}. Thus the sum of all the projections associated to generalized eigenspaces of $H$, defined by 
		\begin{equation*}
			\Pi_\mathrm{p}(H):=\sum_{\lambda\in\sigma_\mathrm{p}(H)}\Pi_\mathrm{\lambda}(H)		
		\end{equation*}
		is well-defined. Next we define the point spectral subspace of $H$ as the range of $\Pi_\mathrm{p}(H)$, 
		\begin{equation*}
			\Hi_\mathrm{p}(H):=\Ran(\Pi_\mathrm{p}(H)). 
		\end{equation*}
			Finally, we observe that 
		\begin{equation*}
			\Hi_\mathrm{p}(H)=\displaystyle\sum_{\lambda\in\sigma_\mathrm{p}(H)}\Ker(H-\lambda)^{\mathrm{m}_{\lambda}}.
		\end{equation*}

\section{Assumptions and main results}

\subsection{Hypotheses}	

In this section we detail our main abstract assumptions. In Section \ref{subsec:Appli} we will show that they are satisfied in the case of complex Schr\"odinger operators, with compactly supported potentials.

In our first hypothesis, we require that, at any point of its essential spectrum, $H_0$ satisfies a \emph{limiting absorption principle} with weight $C$.

\begin{assumption}[Limiting absorption principle for $H_0$]\label{hyp:LAP H0}
		We have
	\begin{equation}\label{eq:LAP_H0}
	\sup_{z\in\C^\pm}\big\|C\Res_0(z)C\big\|_{\mathcal{B}(\Hi)}<\infty.
	\end{equation}
\end{assumption} 

	Note that \eqref{eq:LAP_H0} implies (see e.g. \cite[Proposition 4.1]{CFKS}) that the spectrum of $H_0$ is purely absolutely continuous, i.e. that $\sigma_{\mathrm{pp}}(H_0)=\emptyset$, $\sigma_{\mathrm{ac}}(H_0)=\sigma(H_0)$, $\sigma_{\mathrm{sc}}(H_0)=\emptyset$, where $\sigma_{\mathrm{pp}}(H_0)$, $\sigma_{\mathrm{ac}}(H_0)$, $\sigma_{\mathrm{sc}}(H_0)$ stand for the usual pure point, absolutely continuous and singular continuous spectra of the self-adjoint operator $H_0$. 

	Moreover, by Fatou's Theorem (see \cite{Rud1980}), \eqref{eq:LAP_H0} implies that the limits $C\Res_0(\lambda\pm i0^+)C$ exist for almost every $\lambda\in\sigma_{\mathrm{ess}}(H)$, in the norm topology of $\mathcal{B}(\Hi)$, and that the map $\R\ni\lambda\mapsto C\Res_0(\lambda\pm i0^+)C\in\mathcal{B}(\Hi)$ is bounded (observe that $C\Res_0(\lambda\pm i0^+)C=C\Res_0(\lambda)C$ if $\lambda\in\R\setminus\sigma_{\mathrm{ess}}(H)$).
	
Note also that Hypothesis \ref{hyp:LAP H0} implies (see \cite{Ka65_01} or \cite[Theorem XIII.25 and its corollary]{ReSi80_01}) that $C$ is \emph{relatively smooth} with respect to $H_0$ in the sense of Kato, i.e. that there exists a constant $c_0$ such that 
		\begin{equation}\label{eq:Kato_smooth}
			\forall u\in\Hi, \quad \int_\R\norme{Ce^{-itH_0}u}_\Hi^2\mathrm{d}t\leq c_0^2\norme{u}_\Hi^2.
		\end{equation}
		Recall that \eqref{eq:Kato_smooth} is equivalent to 
		\begin{equation}\label{eq:Smooth_Ka65_01}
			\forall u\in\Hi, \quad \int_\R\left(\norme{C\Res_0(\lambda-i0^+)u}_\Hi^2+\norme{C\Res_0(\lambda+i0^+)u}_\Hi^2\right)\mathrm{d}\lambda\leq 2\pi c_0^2\norme{u}_\Hi^2,
		\end{equation}
		where $\lambda\mapsto C\Res_0(\lambda\pm i0^+)u$ denotes the limit of $\lambda\mapsto C\Res_0(\lambda\pm i\varepsilon)u$ in $L^2(\R;\Hi)$ as $\varepsilon\to0^+$.

Next we assume that the point spectral subspace of $H$ is finite.
		
	\begin{assumption}[Eigenvalues of $H$]\label{hyp:VPH}
		$H$ has only a finite number of eigenvalues with finite algebraic multiplicities. 
	\end{assumption}

Hypothesis \ref{hyp:VPH} prevents the essential spectrum of $H$ from having an accumulation point of eigenvalues. It does not exclude, however, the presence of eigenvalues embedded in the essential spectrum of $H$.

Our next hypothesis concerns the spectral singularities of $H$. First we recall the definition of a regular spectral point. Note that this one is independent of the previous assumption. 

\begin{definition}[Regular spectral point and spectral singularity]\label{def:point_spectral_regulier_classique_pour_H}$ $
	\begin{enumerate}
	\item 	Let $\lambda\in\sigma_\mathrm{ess}(H)$. 
	\begin{enumerate}[label=(\roman*)]
		\item We say that $\lambda$ is an outgoing/incoming regular spectral point of $H$ if $\lambda$ is not an accumulation point of eigenvalues located in $\lambda\pm i\left( 0,\infty\right)$ and if the limit 
		\begin{equation}\label{eq:def_reg_spec_pt}
			C\Res_H(\lambda\pm i0^+)CW:=\lim_{\varepsilon\rightarrow 0^+} C\Res_H(\lambda\pm i\varepsilon)CW
		\end{equation}
		exists in the norm topology of $\mathcal{B}(\Hi)$. If $\lambda$ is not an outgoing/incoming regular spectral point, we say that $\lambda$ is an outgoing/incoming spectral singularity of $H$.
		\item We say that $\lambda$ is a regular spectral point of $H$ if $\lambda$ is both an outgoing and an incoming regular spectral point of $H$. If $\lambda$ is not a regular spectral point, we say that $\lambda$ is a \emph{spectral singularity} of $H$.
	\end{enumerate}
	\item We say that infinity is an outgoing/incoming regular spectral point of $H$ if there exists $m>0$ such that for all $\lambda>m$, $\lambda$ is an outgoing/incoming regular spectral point 
	and such that the map 
	\begin{equation*}
		[m,\infty)\ni\lambda\mapsto C\Res_H(\lambda\pm i0^+)CW\in\mathcal{B}(\Hi)
	\end{equation*}
	is bounded. 
	If infinity is not outgoing/incoming regular spectral point of $H$, we say that $H$ has an outgoing/incoming spectral singularity at infinity. 
	\end{enumerate}
	\end{definition}	

Next, we introduce the notion of \emph{order} of a spectral singularity. 

\begin{definition}[Order of a spectral singularity]$ $
	\begin{enumerate}
	\item Let $\lambda\in\sigma_\mathrm{ess}(H)$ be an outgoing/incoming spectral singularity of $H$. We say that $\lambda$ is a spectral singularity of finite order if there exist an integer $n$ and $\varepsilon>0$ such that 
	\begin{equation}\label{eq:def order of spectral singularity}
		\sup_{z\in D(\lambda,\varepsilon)\cap\C^\pm}\abso{\lambda-z}^n\norme{C\Res_H(z)CW}_{\mathcal{B}(\Hi)}<\infty.
	\end{equation}
	Otherwise we say that $\lambda$ is an outgoing/incoming spectral singularity of infinite order. If $\lambda$ is an outgoing/incoming spectral singularity of finite order we define its order as the smallest integer satisfying \eqref{eq:def order of spectral singularity}. 
	\item If $H$ has an outgoing/incoming spectral singularity at infinity, we say that infinity is an outgoing/incoming spectral singularity of finite order if there exists an integer $n$, $\varepsilon_0>0,m>0$ and $z_0\in\rho(H)\backslash\R$, such that 
	\begin{equation}\label{eq:def order spectral singularity infinity}
		\sup_{\substack{\re(z)>m \\ \pm\im(z)<\varepsilon_0}}\abso{z-z_0}^{-n}\norme{C\Res_H(z)CW}_{\mathcal{B}(\Hi)}<\infty.
	\end{equation}
	Otherwise we say that infinity is an outgoing/incoming spectral singularity of infinite order. If infinity is an outgoing/incoming spectral singularity of finite order we define its order as the smallest integer satisfying \eqref{eq:def order spectral singularity infinity}. 
\end{enumerate}
\end{definition}

Remark that $\lambda$ is a spectral singularity of finite order means that, in a neighborhood of $\lambda$, the map $z\mapsto C\Res_H(z)CW$ blows up at most polynomially as $z$ approaches $\lambda$. Moreover if the map $z\mapsto C\Res_H(z)CW$ has a meromorphic continuation across $\sigma_\mathrm{ess}(H)$, then the spectral singularities of $H$ correspond to poles of the meromorphic continuation of the weighted resolvent of $H$, and the order of the spectral singularity correspond to the order of the pole. This is in particular the case for Schr\"odinger operators (see \ref{subsec:Appli}).

In the following, we assume that $H$ has a finite number of spectral singularities and that each spectral singularity has a finite order.

	\begin{assumption}[Spectral singularities for $H$]\label{hyp:spectral singularities}
		$H$ only has a finite number of outgoing/incoming spectral singularities in $\sigma_{\mathrm{ess}}(H)\cup\{\infty\}$ and each spectral singularity has a finite order.  Moreover, for all closed  interval $I\subset\sigma_{\mathrm{ess}}(H)$ not containing any spectral singularity, there exists $\varepsilon_0>0$ such that 
		\begin{equation*}
			\sup_{\substack{\mathrm{Re}(z)\in I\\ \pm\mathrm{Im}(z)\in(0,\varepsilon_0)}}  \big\|C\Res_H(z)CW\big\|_{\mathcal{B}(\Hi)}<\infty.
		\end{equation*}
	\end{assumption}
	
This hypothesis has the following consequence. Let $\lambda_1,\ldots\lambda_n\in \sigma_{\mathrm{ess}}(H)$  be the spectral singularities of $H$ belonging to $\sigma_{\mathrm{ess}}(H)$, of order $\nu_1,\dots,\nu_n<\infty$, respectively, and let $\nu_\infty$ be the order of $\infty$ in the case where $\infty$ is a spectral singularity (otherwise, we use the convention that $\nu_\infty=0$). Then there exists $\varepsilon_0>0$ such that 
		\begin{equation}\label{eq:limit_uniform}
			\sup_{\substack{\mathrm{Re}(z)\in \sigma_{\mathrm{ess}}(H)\\ \pm\mathrm{Im}(z)\in(0,\varepsilon_0)}} \frac{1}{|z-z_0|^{\nu_\infty}} \Big( \prod_{j=1}^n\frac{ |z-\lambda_j|^{\nu_j} }{ |z-z_0|^{\nu_j} } \Big ) \big\|C\Res_H(z)CW\big\|_{\mathcal{B}(\Hi)}<\infty,
		\end{equation}
	where $z_0$ is an arbitrary complex number such that $z_0\in\rho(H)$, $z_0\in\C\setminus\R$. 
	Note that the factors $|z-\lambda_j|^{\nu_j}$ `regularize' the singularities of $z\mapsto C\Res_H(z)CW$ as $z$ approaches $\lambda_j$. Dividing them by $|z-z_0|^{\nu_j}$ produces bounded terms. The factor $|z-z_0|^{-\nu_\infty}$ regularizes a possible singularity at $\infty$.
	
	Observe that since $\lambda_1,\dots,\lambda_n$ are the only spectral singularities of $H$, for all $\lambda\in \sigma_{\mathrm{ess}}(H)\setminus\{\lambda_1,\dots,\lambda_n\}$, the limits $C\Res_H(\lambda\pm i0^+)CW$ exist in the norm topology of $\mathcal{B}(\Hi)$. The condition \eqref{eq:limit_uniform} then implies that the maps
	\begin{equation}
		\sigma_{\mathrm{ess}}(H)\setminus\{\lambda_1,\dots,\lambda_n\} \ni \lambda \mapsto \frac{1}{|\lambda-z_0|^{\nu_\infty}} \Big( \prod_{j=1}^n\frac{ |\lambda-\lambda_j|^{\nu_j} }{ |\lambda-z_0|^{\nu_j} } \Big ) C\Res_H(\lambda\pm i0^+)CW \in \mathcal{B}(\Hi)
	\end{equation}
	are bounded. Since embedded eigenvalues are outgoing and incoming spectral singularities (see \cite{FaFr22_06}), Hypothesis \ref{hyp:spectral singularities} also concern possible embedded eigenvalues. In the following we will denote by $r_j$ the function that regularizes the spectral singularity $\lambda_j$, i.e.
	\begin{equation}\label{eq:rj}
		r_j(z):=\frac{(z-\lambda_j)^{\nu_j}}{(z-z_0)^{\nu_j}}.
	\end{equation} 
In the same way, $r_\infty$ stands for the function that regularizes the spectral singularity at $\infty$, i.e.
\begin{equation}\label{eq:rinfty}
	r_\infty(z):=(z-z_0)^{-\nu_\infty}.
\end{equation}
 Moreover we let $r_+$ be the product of all the functions $r_j$ where $\lambda_j$ is an outgoing spectral singularity, and $r_-$ be the product of all the functions $r_j$ where $\lambda_j$ is an incoming spectral singularity. 

We also mention the following `local version' of \eqref{eq:limit_uniform} which will be useful in the sequel. Obviously, since $H$ has a finite number $n$ of spectral singularities, it also has a finite number of outgoing and incoming spectral singularities. Let $J_1,\dots,J_n$ be compact intervals and $J_\infty$ be an interval such that i) for each $j\in\llbracket 1,n\rrbracket$, the interior $J_j^\mathrm{int}$ of $J_j$ contains the outgoing/incoming spectral singularity $\lambda_j$ and no other outgoing/incoming spectral singularity, ii) $J_1^\mathrm{int},\dots,J_n^\mathrm{int},J_\infty^\mathrm{int}$ are disjoint, and iii) $\sigma_\mathrm{ess}(H)=\bigcup_{j=1}^n J_j\cup J_\infty$. As, for $j\in\llbracket 1,n\rrbracket\cup\lbrace \infty\rbrace$, $r_k$ is invertible in $J_j$ for all $k\in \llbracket 1,n\rrbracket\cup\lbrace\infty\rbrace \backslash \lbrace j\rbrace$, we deduce from \eqref{eq:limit_uniform} the following local limiting absorption principle:
\begin{equation*}
\sup_{\substack{\re(z)\in J_j\\ \pm \im(z)\in (0,\varepsilon_0)}}\abso{r_j(z)}\norme{C\Res_H(z)CW}_{\mathcal{B}(\Hi)}<\infty.
\end{equation*}

Finally, in order to define the spectral projections onto the generalized eigenspaces associated to embedded eigenvalues, we require the existence of a conjugation operator $J$ satisfying, in particular, $JH=H^\star J$. 
	\begin{assumption}[Conjugation operator and embedded eigenvalues]\label{hyp:Conjugate operator}
		There exists an antiunitary operator $J:\Hi\to\Hi$ such that 
		\begin{enumerate}[label=(\roman*)]
			\item $J\mathcal{D}(H_0)\subset\mathcal{D}(H_0)$ and $\forall u\in\mathcal{D}(H_0)$, $JH_0u=H_0Ju$.
			\item $JC=CJ$ and $JW=W^\star J$. 
		\end{enumerate}
		Moreover, for all embedded eigenvalues $\lambda\in\sigma_{\mathrm{ess}}(H)$, the symmetric bilinear form
		\begin{equation}\label{eq:invertibility_mat2}
			\Ker\big((H-\lambda)^{\mathrm{m}_\lambda}\big)\times\Ker\big((H-\lambda)^{\mathrm{m}_\lambda}\big) \ni (u,v)\mapsto \langle Ju,v\rangle \quad\text{is non-degenerate}.
		\end{equation}
	\end{assumption}
	This condition ensures that the spectral projections \eqref{eq:proj_emb} onto the generalized eigenspaces associated to embedded eigenvalues are well-defined. See the discussion in \cite[section 2.2.2]{FaFr22_06} for more details.

\subsection{Main results}

In this section we state our main results. Recall from \eqref{eq:def_WO-intro} that the wave operators associated to $H$ and $H_0$ are defined by 
\begin{equation*}
W_\pm(H,H_0):=\slim_{t\rightarrow\pm\infty} e^{itH}\Pi_\mathrm{p}(H^\star)^\perp r_\mp(H)e^{-itH_0}
\end{equation*} 
We recall that $\Pi_\mathrm{p}(H^\star)$ is the sum of all the spectral projections onto the generalized eigenspaces associated to the eigenvalues of $H^\star$ (either isolated or embedded into the essential spectrum), and that $\Pi_\mathrm{p}(H^\star)$ is well-defined thanks to Hypotheses \ref{hyp:VPH} and \ref{hyp:Conjugate operator}. The projection  $\Pi_\mathrm{p}(H^\star)^\perp = \mathrm{Id}-\Pi_\mathrm{p}(H)$ is needed to project out the generalized eigenspace associated to $H$, and the `regularizing operator' $r_\mp(H)$ is needed to regularize the incoming/outgoing spectral singularities of $H$. 
See section \ref{subsubsec:Justification of the definition} for more details about the justification of this definition. 
In the same way as for \eqref{eq:def_WO-intro}, this `non-stationary' definition of wave operators taking into account possible spectral singularities seem to be new. See however e.g. \cite{Ka65_01} for general definitions of non necessarily unitary wave operators, under assumptions ensuring the absence of spectral singularities.    

Our first main result is the existence of the wave operators $W_\pm(H,H_0)$ as well as properties about their ranges and kernels. 
\begin{theorem}\label{thm:existence of the wave operator}
Suppose that Hypotheses \ref{hyp:LAP H0}-\ref{hyp:Conjugate operator} hold. Then the wave operators $W_\pm(H,H_0)$ exist. Moreover they are injective and their ranges are dense in $\Hi_\mathrm{p}(H^\star)^\perp$. 
\end{theorem}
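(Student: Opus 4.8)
The plan is to establish existence by a Cook‑type argument built on two smoothness estimates, and then to deduce injectivity and density of the range from an adjoint (``reverse wave operator'') computation. I treat $W_+:=W_+(H,H_0)$; the operator $W_-$ is handled symmetrically (let $t\to-\infty$, replace $r_-$ by $r_+$ and $\C^+$ by $\C^-$). Write $\Omega(t):=e^{itH}r_-(H)\Pi_{\mathrm p}(H^\star)^\perp e^{-itH_0}$, so $W_+=\slim_{t\to+\infty}\Omega(t)$; since $r_-(H)$ and $\Pi_{\mathrm p}(H^\star)^\perp=\mathrm{Id}-\Pi_{\mathrm p}(H)$ commute with $H$, the range of $\Omega(t)$ lies in the closed subspace $\Ran\Pi_{\mathrm p}(H^\star)^\perp=\Hi_{\mathrm p}(H^\star)^\perp$, so the inclusion $\Ran W_+\subseteq\Hi_{\mathrm p}(H^\star)^\perp$ is automatic.

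The two inputs are: \emph{(i)} by Hypothesis~\ref{hyp:LAP H0} and \eqref{eq:Kato_smooth}, $C$ is $H_0$‑smooth, i.e.\ $\int_\R\|Ce^{-itH_0}u\|_\Hi^2\,dt\le c_0^2\|u\|_\Hi^2$; and \emph{(ii)} a \emph{regularized $H^\star$‑smoothness estimate}
\[
\int_\R\big\|C\big(\mathrm{Id}-\Pi_{\mathrm p}(H^\star)\big)\overline{r_-}(H^\star)e^{-itH^\star}v\big\|_\Hi^2\,dt\le c_\star^2\|v\|_\Hi^2,\qquad v\in\Hi,
\]
together with the uniform bound $\sup_{t\ge0}\|e^{itH}r_-(H)\Pi_{\mathrm p}(H^\star)^\perp\|_{\B(\Hi)}<\infty$; here $\overline{r_-}$ is $r_-$ with $z_0$ replaced by $\bar z_0$, so that $\overline{r_-}(H^\star)=(r_-(H))^*$. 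To prove \emph{(ii)} I would use the resolvent identity $\Res_{H^\star}(z)=\Res_0(z)-\Res_0(z)CW^*C\Res_{H^\star}(z)$ and its iterate to express $C\Res_{H^\star}(z)$, and then $\Res_{H^\star}(z)$ itself, through $\Res_0(z)$ and the operators $C\Res_0(z)C$ and $\big(\mathrm{Id}+C\Res_0(z)CW^*\big)^{-1}$ (uniformly bounded on the relevant regions by Hypotheses~\ref{hyp:LAP H0} and~\ref{hyp:spectral singularities}); combine this with \eqref{eq:limit_uniform} read off for $H^\star$, the partition $\sigma_{\mathrm{ess}}(H)=\bigcup_{j=1}^nJ_j\cup J_\infty$ and the local limiting absorption principle stated there to get, on each piece, a regularized boundary‑value bound of the form $\sup|r_j(z)|\,\|C\Res_{H^\star}(z)\cdots\|<\infty$; and finally pass from these to the displayed $L^2$‑in‑$t$ bound via the localized form of the equivalence \eqref{eq:Smooth_Ka65_01} between Kato smoothness and square‑integrability of resolvent boundary values. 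The uniform bound comes from the same manipulations applied to the inverse‑Laplace representation $e^{itH}\Pi_{\mathrm p}(H^\star)^\perp=\frac1{2\pi i}\int_\Gamma e^{itz}\Res_H(z)\Pi_{\mathrm p}(H^\star)^\perp\,dz$ on a contour $\Gamma$ running just below $\R$, the factor $r_-(H)$ being precisely what is needed to absorb the non‑uniformity of $\Res_H$ near the incoming spectral singularities.

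\emph{Existence, injectivity, range.} For $u\in\mathcal D(H_0)$ one computes $\frac{d}{dt}\Omega(t)u=ie^{itH}r_-(H)\Pi_{\mathrm p}(H^\star)^\perp CWC\,e^{-itH_0}u$; hence, for $0<T_1<T_2$ and $\|v\|_\Hi\le1$, after moving the adjoints across,
\[
\langle\Omega(T_2)u-\Omega(T_1)u,v\rangle_\Hi=i\int_{T_1}^{T_2}\big\langle WCe^{-itH_0}u,\;C\big(\mathrm{Id}-\Pi_{\mathrm p}(H^\star)\big)\overline{r_-}(H^\star)e^{-itH^\star}v\big\rangle_\Hi\,dt,
\]
and Cauchy--Schwarz in $t$ together with \emph{(i)}--\emph{(ii)} bounds the right‑hand side by $\|W\|c_\star\big(\int_{T_1}^{T_2}\|Ce^{-itH_0}u\|_\Hi^2\,dt\big)^{1/2}\to0$. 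So $(\Omega(t)u)_{t>0}$ is Cauchy; with the uniform bound and density of $\mathcal D(H_0)$, $W_+$ exists on $\Hi$. Running the same computation on the pair $(H_0,H^\star)$ — whose Cook integrand involves $CW^*C\,\overline{r_-}(H^\star)\Pi_{\mathrm p}(H^\star)^\perp e^{-itH^\star}$ and is controlled by the \emph{same} pair of estimates — shows $Z:=\slim_{t\to+\infty}e^{itH_0}\big(\mathrm{Id}-\Pi_{\mathrm p}(H^\star)\big)\overline{r_-}(H^\star)e^{-itH^\star}$ exists, and taking adjoints in $\Omega(t)$ identifies $Z=W_+^*$. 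Now $\overline{\Ran A}=(\Ker A^*)^\perp$ reduces the two remaining claims to statements about $Z$: $W_+$ is injective iff $\overline{\Ran Z}=\Hi$, and (given $\Ran W_+\subseteq\Hi_{\mathrm p}(H^\star)^\perp$ and the obvious inclusion $\Hi_{\mathrm p}(H^\star)\subseteq\Ker Z$, since $\mathrm{Id}-\Pi_{\mathrm p}(H^\star)$ kills the finite‑dimensional $e^{-itH^\star}$‑ and $\overline{r_-}(H^\star)$‑invariant subspace $\Ran\Pi_{\mathrm p}(H^\star)$) $\overline{\Ran W_+}=\Hi_{\mathrm p}(H^\star)^\perp$ iff $Z$ is injective on $\Ran(\mathrm{Id}-\Pi_{\mathrm p}(H^\star))$. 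I would obtain these by integrating the Cook formula to get a stationary representation of $W_+$ and $Z$ as the identity plus boundary‑value terms built from $C\Res_H(\cdot\pm i0^+)CW$, $C\Res_{H^\star}(\cdot\pm i0^+)C$ and the absolutely continuous spectral measure of $H_0$; from it, $Zv=0$ with $v\in\Ran(\mathrm{Id}-\Pi_{\mathrm p}(H^\star))$ forces the regularized boundary values $\overline{r_-}(\lambda)C\Res_{H^\star}(\lambda\pm i0^+)v$ to vanish a.e.\ on $\sigma_{\mathrm{ess}}(H)$, whence $v=0$ by injectivity of $C$ and absence of singular continuous spectrum, and density of $\Ran Z$ is the dual statement. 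Throughout, the conjugation operator $J$ of Hypothesis~\ref{hyp:Conjugate operator} — which relates $e^{\mp itH}$, $r_-(H)$, $\Pi_{\mathrm p}(H)$ to $e^{\pm itH^\star}$, $\overline{r_-}(H^\star)$, $\Pi_{\mathrm p}(H^\star)$ respectively — is what lets the $H^\star$‑side facts be imported from their $H$‑side analogues.

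\emph{Main obstacle.} The genuinely hard part is everything in the second paragraph: converting Hypothesis~\ref{hyp:spectral singularities}, a statement about the \emph{doubly} weighted resolvent $C\Res_H(z)CW$, into an $L^2$‑in‑time smoothness bound for the \emph{singly} weighted, point‑spectrum‑projected, regularized propagator and into a uniform‑in‑time bound for the \emph{unweighted} one. This needs the resolvent‑identity bootstrap to remove weights, careful control of the zeros of $r_-$ (which lie inside $\sigma_{\mathrm{ess}}(H)$, not off the spectrum), and a non‑self‑adjoint, polynomially regularized version of Kato's smoothness theorem; once these are in hand, Cook's estimate and the duality bookkeeping for kernels and ranges are routine.
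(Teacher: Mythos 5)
Your existence argument follows the paper's route: Cook's method driven by the $H_0$-smoothness of $C$ (\eqref{eq:Kato_smooth}) and a regularized smoothness estimate for the adjoint evolution, which is exactly the content of Lemma \ref{Cr Kato smooth H} (proved there by Parseval, a partition of $\sigma_{\mathrm{ess}}(H)$ around the singularities, and the algebraic identity \eqref{eq:resolvant formula generalized} that cancels the resolvent blow-up against the zeros of $r_j$). That part of the proposal is sound in outline. The problems are in your treatment of injectivity and of the range.

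First, the duality bookkeeping is circular. Since $Z=W_+^*$, one has $\overline{\Ran Z}=(\Ker Z^*)^\perp=(\Ker W_+)^\perp$; so ``density of $\Ran Z$'' is not the \emph{dual} of the injectivity of $Z$ on $\Ran(\mathrm{Id}-\Pi_{\mathrm p}(H^\star))$ — it \emph{is} the statement $\Ker W_+=\{0\}$, restated, and nothing you prove about $\Ker Z$ gives it to you. The two facts $\Ker W_+=\{0\}$ and $\Ker W_+^*=\Hi_{\mathrm p}(H^\star)$ are independent and the paper proves them by two different arguments. For injectivity of $W_+$ (Proposition \ref{prop:range_and_ker_of_wave_operator}) the paper cannot strip off the non-unitary factor $e^{itH}$; instead it uses the lower bound $\|e^{itH}\Pi_{\mathrm{ac}}(H)r(H)v\|_\Hi\geq\|\Pi_{\mathrm{ac}}(H)r(H)^2v\|_\Hi$ coming from the functional calculus \eqref{eq:resolution identity}, the compactness of $r(H)^2-r(H_0)^2$ and of $\Pi_{\mathrm p}(H)$ (so that these terms vanish along $e^{-itH_0}u$ because $H_0$ is purely absolutely continuous), and finally the unitarity of $e^{-itH_0}$ together with the injectivity of $r(H_0)^2$. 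None of this machinery appears in your proposal, and without it the claim $\Ker W_+=\{0\}$ is unproved.

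Second, your stationary argument for $\Ker Z=\Hi_{\mathrm p}(H^\star)$ assumes what has to be proved. You assert that $Zv=0$ forces the boundary values $\overline{r_-}(\lambda)C\Res_{H^\star}(\lambda\pm i0^+)v$ to vanish a.e.\ and that $v=0$ then follows ``by injectivity of $C$ and absence of singular continuous spectrum''; but the absence of a singular part for $H^\star$ — equivalently the direct sum $\Hi=\Hi_{\mathrm{ac}}(H^\star)\oplus\Hi_{\mathrm p}(H^\star)$ — is precisely the nontrivial input imported from \cite{FaFr22_06}, and the step from ``vanishing weighted boundary values'' to $v=0$ is a completeness statement, not a consequence of $C$ being injective. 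The paper's route (Proposition \ref{prop:range/kernel wave operator bis}) is time-dependent and much shorter: by unitarity of $e^{itH_0}$, $Zv=0$ means $e^{-itH^\star}$ applied to $\bar r(H^\star)\Pi_{\mathrm{ac}}(H^\star)v$ tends to zero, so this vector is an asymptotically disappearing state; by \eqref{eq:defHads2} it lies in $\Hi_{\mathrm p}^+(H^\star)$, hence in $\Hi_{\mathrm p}(H^\star)\cap\Hi_{\mathrm{ac}}(H^\star)=\{0\}$, and injectivity of $\bar r(H^\star)$ on $\Hi_{\mathrm{ac}}(H^\star)$ gives $v\in\Hi_{\mathrm p}(H^\star)$. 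You need either this identification of asymptotically disappearing states with $\Hi_{\mathrm p}^\pm$ or a genuine substitute for it; your sketch supplies neither.
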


Once the existence of the wave operators $W_\pm(H,H_0)$ is proven, it is not difficult to show that they satisfy the usual intertwining properties (see Proposition \ref{prop:interwining property}). 

One of the ingredients to the properties of these operators is the studying of
\begin{equation*}
	W_\pm(H_0,H):=\slim_{t\rightarrow\pm\infty}e^{itH_0}r_\pm(H)\Pi_\mathrm{p}(H)^\perp e^{-itH}
\end{equation*}
in section \ref{subsec:autre operateur onde}. Because $\lbrace e^{-itH_0}\rbrace_{t\in\R}$ is unitary, the existence of $W_\pm(H_0,H)$ implies that for all $u\in\Ran{W_\pm(H_0,H)}$ there exists $v\in\Hi$ such that 
\begin{equation}\label{eq:scatt H-H_0}
	\lim_{t\rightarrow\pm\infty}\norme{e^{-itH}\Pi_\mathrm{p}(H^\star)^\perp r_\pm(H)v-e^{-itH_0}u}=0.
\end{equation}
And \eqref{eq:scatt H-H_0} leads to $u=W_\pm(H_0,H)v$ and
\begin{equation*}
	e^{-itH_0}u=\Pi_\mathrm{p}(H^\star)^\perp r_\pm(H)e^{-itH}v+o(1),\quad t\rightarrow \pm\infty.
\end{equation*}
Next we want to study the inversibility of the wave operator $W_\pm(H,H_0)$. We say that $W_\pm(H,H_0)$ is asymptotically complete if its range is closed. In particular, if the wave operator $W_\pm(H,H_0)$ is asymptotically complete, then it is invertible in $\mathcal{B}(\Hi,\Hi_\mathrm{p}(H^\star)^\perp)$. The next theorem give sufficient conditions for asymptotic completeness. 

\begin{theorem}\label{thm3.4}
	Suppose that Hypotheses \ref{hyp:LAP H0}-\ref{hyp:Conjugate operator} hold and that $H$ does not have any spectral singularity. Then the wave operators $W_\pm(H,H_0)$ are asymptotically complete. 
\end{theorem}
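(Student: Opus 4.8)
The plan is to construct a bounded ``reverse'' wave operator $\widetilde W_\pm$ that inverts $W_\pm(H,H_0)$ up to the projection $\Pi_\mathrm{p}(H^\star)^\perp$, and to conclude by the chain rule. Since $H$ has no spectral singularity, there are no finite singularities $\lambda_j$ and $\nu_\infty=0$, so the regularizing functions $r_\pm$ of \eqref{eq:rj}--\eqref{eq:rinfty} reduce to the constant $1$ and $W_\pm(H,H_0)=\slim_{t\to\pm\infty}e^{itH}\Pi_\mathrm{p}(H^\star)^\perp e^{-itH_0}$. By Theorem \ref{thm:existence of the wave operator} these operators exist, are injective and have dense range in $\Hi_\mathrm{p}(H^\star)^\perp$, so it suffices to prove $\Ran(W_\pm(H,H_0))\supseteq\Hi_\mathrm{p}(H^\star)^\perp$; then the range is the closed subspace $\Hi_\mathrm{p}(H^\star)^\perp$ and $W_\pm(H,H_0)$ is a bijection onto it. Observe that the mere existence of $W_\pm(H,H_0)$ already yields, via the uniform boundedness principle, $\sup_{t}\norme{e^{itH}\Pi_\mathrm{p}(H^\star)^\perp e^{-itH_0}}<\infty$ and hence $\sup_{t}\norme{e^{itH}\Pi_\mathrm{p}(H^\star)^\perp}<\infty$; this uniform control of the non-unitary evolution on $\Ran(\Pi_\mathrm{p}(H^\star)^\perp)$ is what makes the chain-rule argument go through.

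The first ingredient is a limiting absorption principle for $H$. Writing $A(z):=C\Res_0(z)C$, the second resolvent identity gives $C\Res_H(z)C=(\Id+A(z)W)^{-1}A(z)$ together with $(\Id+A(z)W)^{-1}=\Id-C\Res_H(z)CW$. Since $H$ has no spectral singularity, the consequence \eqref{eq:limit_uniform} of Hypothesis \ref{hyp:spectral singularities} (with all $\nu_j=\nu_\infty=0$) bounds $\norme{C\Res_H(z)CW}$ for $z$ in a strip around $\sigma_\mathrm{ess}(H)$; elsewhere $\Res_H(z)$ is controlled except near the finitely many (Hypothesis \ref{hyp:VPH}) isolated eigenvalues of $H$, whose singular contribution is annihilated upon multiplying by $\Pi_\mathrm{p}(H^\star)^\perp$. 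Combining this with Hypothesis \ref{hyp:LAP H0}, with the identity $\Hi_\mathrm{p}(H^\star)^\perp=\Hi_\mathrm{ac}(H)$ recalled from \cite{FaFr22_06}, and with Kato's smoothness theory adapted to the non-self-adjoint generator $e^{-itH}$ on $\Hi_\mathrm{ac}(H)$, I would deduce
\[
\int_\R\norme{Ce^{-itH}u}_\Hi^2\,\mathrm dt\le c\,\norme{u}_\Hi^2,\qquad u\in\Hi_\mathrm{p}(H^\star)^\perp,
\]
which is of the same nature as the smoothness estimates underlying the proof of Theorem \ref{thm:existence of the wave operator}.

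Next I would define $\widetilde W_\pm:=\slim_{t\to\pm\infty}e^{itH_0}e^{-itH}\Pi_\mathrm{p}(H^\star)^\perp$ by Cook's method. Using that $\Pi_\mathrm{p}(H^\star)^\perp$ commutes with $e^{-itH}$ and that $H_0-H=-CWC$, one gets $\frac{\mathrm d}{\mathrm ds}\big(e^{isH_0}e^{-isH}\Pi_\mathrm{p}(H^\star)^\perp\big)=-\,i\,e^{isH_0}CWCe^{-isH}\Pi_\mathrm{p}(H^\star)^\perp$, and for $u,v\in\Hi$, Cauchy--Schwarz in $s$ yields
\[
\Big|\int_{t_1}^{t_2}\!\!\scal{e^{-isH_0}v}{CWCe^{-isH}\Pi_\mathrm{p}(H^\star)^\perp u}_\Hi\,\mathrm ds\Big|
\le\norme{W}\Big(\int_{t_1}^{t_2}\!\!\norme{Ce^{-isH_0}v}_\Hi^2\,\mathrm ds\Big)^{\!1/2}\Big(\int_{t_1}^{t_2}\!\!\norme{Ce^{-isH}\Pi_\mathrm{p}(H^\star)^\perp u}_\Hi^2\,\mathrm ds\Big)^{\!1/2}.
\]
By the $H_0$-smoothness of $C$ (Hypothesis \ref{hyp:LAP H0}, cf.\ \eqref{eq:Kato_smooth}) and the $H$-smoothness just established, the right-hand side tends to $0$ as $t_1,t_2\to\pm\infty$, so that $\norme{\int_{t_1}^{t_2}e^{isH_0}CWCe^{-isH}\Pi_\mathrm{p}(H^\star)^\perp u\,\mathrm ds}\to0$; by the Cauchy criterion the strong limit exists and $\widetilde W_\pm\in\B(\Hi)$.

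It then remains to identify the two products, which follows from the usual diagonal argument for the chain rule of wave operators, legitimate here thanks to the uniform bounds noted in the first paragraph (and unitarity of $e^{-itH_0}$): since $\Pi_\mathrm{p}(H^\star)^\perp$ commutes with $e^{\pm itH}$,
\[
W_\pm(H,H_0)\,\widetilde W_\pm=\slim_{t\to\pm\infty}e^{itH}\Pi_\mathrm{p}(H^\star)^\perp e^{-itH}=\Pi_\mathrm{p}(H^\star)^\perp,
\qquad
\widetilde W_\pm\,W_\pm(H,H_0)=\slim_{t\to\pm\infty}e^{itH_0}\Pi_\mathrm{p}(H^\star)^\perp e^{-itH_0},
\]
and the last limit equals $\Id-\slim_{t\to\pm\infty}e^{itH_0}\Pi_\mathrm{p}(H)e^{-itH_0}=\Id$ because $\Pi_\mathrm{p}(H)$ has finite rank (Hypothesis \ref{hyp:VPH}) while $e^{-itH_0}\to0$ weakly as $t\to\pm\infty$ (the spectrum of $H_0$ being purely absolutely continuous). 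The first identity gives $\Hi_\mathrm{p}(H^\star)^\perp=\Ran(\Pi_\mathrm{p}(H^\star)^\perp)\subseteq\Ran(W_\pm(H,H_0))$, which completes the proof as explained above. The step I expect to be the main obstacle is the smoothness estimate of the second paragraph: $H$ being non-self-adjoint, there is no spectral theorem to turn the resolvent bound into a time-integrated bound, and one must instead rely on the description of $\Hi_\mathrm{ac}(H)$ and the estimates developed in \cite{FaFr22_06} for Theorem \ref{thm:existence of the wave operator}; checking that the chain-rule diagonal argument genuinely survives the loss of unitarity — which is exactly where $\sup_t\norme{e^{itH}\Pi_\mathrm{p}(H^\star)^\perp}<\infty$ enters — is a secondary point to handle with care.
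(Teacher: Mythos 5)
Your proposal is correct and follows essentially the same route as the paper: establish global $H$-smoothness of $C$ on $\Hi_\mathrm{ac}(H)=\Hi_\mathrm{p}(H^\star)^\perp$ (the paper's Lemma \ref{Cr Kato smooth H} with $r_\pm=\mathrm{Id}$), use Cook's method to build the reverse wave operator $W_\pm(H_0,H)$, and then run the diagonal/chain-rule argument $u=\Pi_\mathrm{ac}(H)e^{itH}e^{-itH_0}\cdot e^{itH_0}e^{-itH}\Pi_\mathrm{ac}(H)u$ with the uniform bound on $e^{itH}\Pi_\mathrm{ac}(H)$ to get $u=W_\pm(H,H_0)W_\pm(H_0,H)u$ for $u\in\Hi_\mathrm{ac}(H)$. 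The only cosmetic differences are that you extract the uniform bound from the existence of $W_\pm(H,H_0)$ rather than of $W_\pm(H_0,H)$, and you additionally verify $\widetilde W_\pm W_\pm(H,H_0)=\Id$, neither of which changes the substance.
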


A similar result is proved in \cite{FaFr18_01} for dissipative operators. In particular, the previous theorem implies that, if $H$ does not have any spectral singularity, then $H$ and $H_0$ are similar in the sense that 
\begin{equation*}
	\forall u\in\Hi_\mathrm{p}(H^\star)^\perp\cap\mathcal{D}(H_0),\quad H_0u=W_\pm(H,H_0)^{-1}HW_\pm(H,H_0)u.
\end{equation*}

Another consequence of Theorem \ref{thm3.4} is the following result.

\begin{theorem}\label{mainthm:boundedness of solution of Schrodinger equation}
	Suppose that Hypotheses \ref{hyp:LAP H0}-\ref{hyp:Conjugate operator} hold. Suppose that $H$ does not have any spectral singularities. Then there exist $m_1>0$, and $m_2>0$ such that 
	\begin{equation*}
		\forall t \in \R,\quad \forall u\in\Hi_\mathrm{p}(H^\star)^\perp, \quad m_1\norme{u}_\Hi\leq \norme{e^{-itH}u}_\Hi\leq m_2\norme{u}_\Hi. 
	\end{equation*}
\end{theorem}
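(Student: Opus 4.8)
The plan is to obtain the statement as a soft consequence of Theorem~\ref{thm3.4} together with the intertwining property of the wave operators. Since $H$ has no spectral singularity, Theorem~\ref{thm3.4} applies and tells us that the wave operator is asymptotically complete, hence invertible in $\mathcal{B}(\Hi,\Hi_\mathrm{p}(H^\star)^\perp)$. Denote by $W$ either of the operators $W_\pm(H,H_0)$; then $W\in\mathcal{B}(\Hi,\Hi_\mathrm{p}(H^\star)^\perp)$ and $W^{-1}\in\mathcal{B}(\Hi_\mathrm{p}(H^\star)^\perp,\Hi)$ (the bounded inverse on the closed subspace $\Hi_\mathrm{p}(H^\star)^\perp$ coming from the open mapping theorem, as recorded after Theorem~\ref{thm3.4}). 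We note in passing that in the absence of spectral singularities the regularizing functions $r_\pm$ reduce to the constant $1$, so that $r_\mp(H)=\Id$; this observation is, however, not needed below.

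Next I would invoke the intertwining relation of Proposition~\ref{prop:interwining property}, namely $e^{-itH}W=We^{-itH_0}$ for every $t\in\R$. Fix $u\in\Hi_\mathrm{p}(H^\star)^\perp$ and write $u=Wv$ with $v:=W^{-1}u\in\Hi$, which is legitimate since $W$ maps $\Hi$ onto $\Hi_\mathrm{p}(H^\star)^\perp$. Then
\begin{equation*}
	e^{-itH}u=e^{-itH}Wv=We^{-itH_0}v.
\end{equation*}
As $H_0$ is self-adjoint, $e^{-itH_0}$ is unitary, so $\norme{e^{-itH_0}v}_\Hi=\norme{v}_\Hi$ for all $t$. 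Using the two-sided estimate $\norme{W^{-1}}^{-1}\norme{w}_\Hi\leq\norme{Ww}_\Hi\leq\norme{W}\,\norme{w}_\Hi$, valid for all $w\in\Hi$ (the left inequality coming from $w=W^{-1}(Ww)$), first with $w=e^{-itH_0}v$ and then the analogous bounds $\norme{W}^{-1}\norme{u}_\Hi\leq\norme{v}_\Hi\leq\norme{W^{-1}}\,\norme{u}_\Hi$ relating $\norme{v}_\Hi$ to $\norme{u}_\Hi$, one gets, uniformly in $t\in\R$,
\begin{equation*}
	\big(\norme{W}\,\norme{W^{-1}}\big)^{-1}\norme{u}_\Hi\leq\norme{e^{-itH}u}_\Hi\leq\norme{W}\,\norme{W^{-1}}\,\norme{u}_\Hi.
\end{equation*}
Setting $m_1:=\big(\norme{W}\,\norme{W^{-1}}\big)^{-1}$ and $m_2:=\norme{W}\,\norme{W^{-1}}$ then finishes the proof.

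There is essentially no analytic obstacle in this argument; the only points requiring care are (i) using the intertwining relation in the correct order, i.e. $e^{-itH}W=We^{-itH_0}$ rather than the reverse, and (ii) ensuring that $W$ is invertible \emph{with bounded inverse} as a map into the closed subspace $\Hi_\mathrm{p}(H^\star)^\perp$ — but this is exactly the content of Theorem~\ref{thm3.4} (closed range and injectivity, via the open mapping theorem). All the substantial work has already been done in establishing Theorems~\ref{thm:existence of the wave operator} and~\ref{thm3.4}, and the present statement merely reads off their consequence for the growth of $t\mapsto e^{-itH}$ on $\Hi_\mathrm{p}(H^\star)^\perp$.
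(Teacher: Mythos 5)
Your proof is correct and follows essentially the same route as the paper: the paper's own argument for the lower bound is precisely your conjugation identity $e^{-itH}u=W e^{-itH_0}W^{-1}u$ on $\Hi_\mathrm{p}(H^\star)^\perp=\Hi_\mathrm{ac}(H)$, exploiting the invertibility of the wave operator supplied by Theorem~\ref{thm3.4} together with the unitarity of $e^{-itH_0}$. The only (cosmetic) difference is that for the upper bound the paper instead invokes the uniform boundedness of $e^{itH}$ on $\Hi_\mathrm{ac}(H)$ coming from the functional calculus \eqref{eq:functional_calculus_normal}, whereas you read both bounds off the same sandwich, which works equally well.
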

The second inequality shows that solutions of the Schr\"odinger equation
\begin{equation*}
	\begin{cases}
		&i\partial_tu_t=Hu_t\\
		&u_0\in\Hi_\mathrm{p}(H^\star)^\perp
	\end{cases}
\end{equation*}
cannot blow up as $t\rightarrow \pm\infty$ and that the norm of $u_t$ is controlled by the norm of the initial state $u_0$. Related results have been established in \cite{Go10_01} for non-self-adjoint Schr\"odinger operators and in \cite{FaFr18_01} for abstract dissipative operators.

Finally, using the definition of spectral projection associated to $H$ in an interval $I\subset\sigma_\mathrm{ess}(H)$ without spectral singularities (see \eqref{eq:def spectral projection} below), we can define the local wave operators associated to $H$ and $H_0$ on $I$ by
\begin{equation*}
	W_\pm(H,H_0,I):=\slim_{t\rightarrow\pm \infty}e^{itH}\mathds{1}_I(H)e^{-itH_0}.
\end{equation*}
The next theorem then shows that $W_\pm(H,H_0,I)$ are asymptotically complete in the sense that the range of $W_\pm(H,H_0,I)$ is equal to the range of $\mathds{1}_I(H)$. 

\begin{theorem}\label{thm:local asymptotic completeness }
	Suppose that Hypotheses \ref{hyp:LAP H0}-\ref{hyp:spectral singularities} hold. Let $I\subset\sigma_\mathrm{ess}(H)$ be a closed interval without any spectral singularity. Then $W_\pm(H,H_0,I)$ is invertible in $\mathcal{B}(\Ran(\mathds{1}_I(H_0)),\Ran(\mathds{1}_I(H)))$. 
\end{theorem}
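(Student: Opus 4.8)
\emph{Strategy.} The point is that on an interval $I$ free of spectral singularities, $H$ is ``as good as self-adjoint'': it satisfies a limiting absorption principle on $I$, its evolution is uniformly bounded on $\Ran(\mathds{1}_I(H))$, and the classical two-sided scattering scheme (Cook's method for existence, chain rule for completeness) goes through. I would organize the proof in three steps plus the identification of the hard lemma.

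\emph{Step 1: local limiting absorption principle and its consequences.} Since $I$ contains no spectral singularity --- in particular no embedded eigenvalue, by Hypothesis \ref{hyp:spectral singularities} --- the last part of Hypothesis \ref{hyp:spectral singularities} gives some $\varepsilon_0>0$ with $\sup_{\re(z)\in I,\,0<\pm\im(z)<\varepsilon_0}\norme{C\Res_H(z)CW}_{\B(\Hi)}<\infty$; combining this with Hypothesis \ref{hyp:LAP H0} through the resolvent identity $\Res_H=\Res_0-\Res_H CWC\Res_0$, i.e. $C\Res_H C = C\Res_0 C-(C\Res_H CW)(C\Res_0 C)$, upgrades it to $\sup_{\re(z)\in I,\,\im(z)\neq 0}\norme{C\Res_H(z)C}_{\B(\Hi)}<\infty$, and the conjugation $J$ of Hypothesis \ref{hyp:Conjugate operator} yields the analogous bound for $H^\star$. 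Consequently $\mathds{1}_I(H)$ (defined by the Stone-type formula \eqref{eq:def spectral projection}, the endpoints of $I$ being regular spectral points) is a bounded projection commuting with $H$ and with $\{e^{-itH}\}_{t\in\R}$, with $\mathds{1}_I(H)^*=\mathds{1}_I(H^\star)$ and $\Ran(\mathds{1}_I(H))\subset\Hi_\mathrm{ac}(H)$. The local LAP furthermore gives, exactly as in the proof of Theorem \ref{mainthm:boundedness of solution of Schrodinger equation} but using only the bound on $I$, the local smoothness estimates $\int_\R\norme{Ce^{-itH}\mathds{1}_I(H)u}^2_\Hi\,\mathrm{d}t\le c_I^2\norme{u}_\Hi^2$ and $\int_\R\norme{Ce^{-itH^\star}\mathds{1}_I(H^\star)u}^2_\Hi\,\mathrm{d}t\le c_I^2\norme{u}_\Hi^2$, together with the two-sided bound $m_I\norme{u}_\Hi\le\norme{e^{-itH}\mathds{1}_I(H)u}_\Hi\le M_I\norme{u}_\Hi$ for $u\in\Ran(\mathds{1}_I(H))$, $t\in\R$; I also keep in mind the global smoothness $\int_\R\norme{Ce^{-itH_0}u}_\Hi^2\,\mathrm{d}t\le c_0^2\norme{u}_\Hi^2$ from \eqref{eq:Kato_smooth}.

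\emph{Step 2: existence of $W_\pm(H,H_0,I)$ and of the candidate inverses.} For $W_+:=W_+(H,H_0,I)$ I would use Cook's method: $\frac{\mathrm{d}}{\mathrm{d}t}\big(e^{itH}\mathds{1}_I(H)e^{-itH_0}u\big)=ie^{itH}\mathds{1}_I(H)CWCe^{-itH_0}u$, and, using $(e^{itH})^*=e^{-itH^\star}$,
\begin{equation*}
\abso{\scal{e^{itH}\mathds{1}_I(H)CWCe^{-itH_0}u}{v}}=\abso{\scal{WCe^{-itH_0}u}{Ce^{-itH^\star}\mathds{1}_I(H^\star)v}}\le\norme{W}_{\B(\Hi)}\,\norme{Ce^{-itH_0}u}_\Hi\,\norme{Ce^{-itH^\star}\mathds{1}_I(H^\star)v}_\Hi .
\end{equation*}
Integrating over $t\in[T,T']$, applying Cauchy--Schwarz and the smoothness estimates of Step 1 gives $\norme{\big(e^{iT'H}\mathds{1}_I(H)e^{-iT'H_0}-e^{iTH}\mathds{1}_I(H)e^{-iTH_0}\big)u}_\Hi\le\norme{W}_{\B(\Hi)}c_I\big(\int_T^{T'}\norme{Ce^{-itH_0}u}_\Hi^2\,\mathrm{d}t\big)^{1/2}\to0$, so the strong limit exists, $\norme{W_+}_{\B(\Hi)}\le M_I$ and $\mathds{1}_I(H)W_+=W_+$. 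Exchanging the roles of $H$ and $H_0$ --- now $\norme{e^{itH_0}}_{\B(\Hi)}=1$ and the local $H$-smoothness of $C$ replaces the local $H^\star$-smoothness --- gives the existence of $\widetilde W_+:=\slim_{t\to+\infty}e^{itH_0}\mathds{1}_I(H_0)e^{-itH}\mathds{1}_I(H)\in\B(\Hi)$, with $\mathds{1}_I(H_0)\widetilde W_+=\widetilde W_+=\widetilde W_+\mathds{1}_I(H)$. The same arguments work as $t\to-\infty$ for $W_-$ and $\widetilde W_-$.

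\emph{Step 3: inversion via the chain rule.} From the definitions one gets, for all $s\in\R$, the intertwining identities $e^{-isH}\mathds{1}_I(H)W_\pm=W_\pm e^{-isH_0}$ (insert $e^{isH}e^{-isH}$ and push the bounded $e^{-isH_0}$ through the limit, using $\sup_t\norme{e^{itH}\mathds{1}_I(H)}_{\B(\Hi)}\le M_I$), and similarly for $\widetilde W_\pm$. Then, for $u\in\Ran(\mathds{1}_I(H_0))$, writing $W_+u=\lim_t e^{itH}\mathds{1}_I(H)e^{-itH_0}u$ and substituting this approximant inside $\widetilde W_+$ up to an error annihilated by the uniformly bounded operator $\mathds{1}_I(H_0)e^{-itH}\mathds{1}_I(H)$, one reaches
\begin{equation*}
\widetilde W_+W_+u=\lim_{t\to+\infty}e^{itH_0}\mathds{1}_I(H_0)\mathds{1}_I(H)e^{-itH_0}u=\mathds{1}_I(H_0)u=u ,
\end{equation*}
the middle equality being the content of the key lemma $\big(\mathds{1}_I(H)-\mathds{1}_I(H_0)\big)e^{-itH_0}u\to0$ as $\abso{t}\to\infty$ (and the symmetric statement $\big(\mathds{1}_I(H_0)-\mathds{1}_I(H)\big)e^{-itH}w\to0$ for $w\in\Ran(\mathds{1}_I(H))$). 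Proceeding symmetrically, $W_+\widetilde W_+w=w$ for $w\in\Ran(\mathds{1}_I(H))$. Hence $W_+(H,H_0,I)$ restricts to a bounded bijection from $\Ran(\mathds{1}_I(H_0))$ onto $\Ran(\mathds{1}_I(H))$ with bounded inverse $\widetilde W_+(H_0,H,I)$, i.e. it is invertible in $\B(\Ran(\mathds{1}_I(H_0)),\Ran(\mathds{1}_I(H)))$; the same holds for $W_-$.

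\emph{The key lemma and the main obstacle.} Everything above is routine once two $I$-localized ingredients are in hand, and this is where the real work lies. The first is the uniform bound $\sup_t\norme{e^{-itH}\mathds{1}_I(H)}_{\B(\Hi)}<\infty$ together with the matching lower bound, i.e. that $H\restriction\Ran(\mathds{1}_I(H))$ is similar to a self-adjoint operator: since $\{e^{-itH}\}$ is a priori only exponentially bounded, this must be extracted from the local LAP alone, by the non-self-adjoint version of Kato's argument (smoothness $\Rightarrow$ bounded evolution on the smooth subspace) restricted to $I$ --- the local counterpart of Theorem \ref{mainthm:boundedness of solution of Schrodinger equation}. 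The second, and genuinely delicate, ingredient is the lemma $\big(\mathds{1}_I(H)-\mathds{1}_I(H_0)\big)e^{-itH_0}u\to0$: expanding $\mathds{1}_I(H)-\mathds{1}_I(H_0)$ by Stone's formula and the resolvent identity produces a factor of the form $C\Res_0(\lambda\pm i0^+)(\cdot)$ (resp. $C\Res_H(\lambda\pm i0^+)(\cdot)$), so the statement reduces, via Cauchy--Schwarz on the compact interval $I$ and the local smoothness of $C$ with respect to $H$ and $H^\star$, to a $C_0$-group propagation/RAGE estimate; here one uses that $C$ is $H_0$-relatively compact and that $e^{-itH_0}$ (resp. $e^{-itH}\mathds{1}_I(H)$) tends to $0$ weakly as $\abso{t}\to\infty$, which is legitimate because the relevant spectral subspaces are absolutely continuous. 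Carrying this lemma out carefully in the non-self-adjoint framework is the main obstacle; with it available, the chain-rule computation of Step 3 closes the argument.
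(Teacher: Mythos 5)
Your architecture is viable and your surjectivity half is essentially the paper's argument, but the two proofs diverge on injectivity, and that is where your one real gap sits. For surjectivity the paper orders the factors so that no comparison of $\mathds{1}_I(H)$ with $\mathds{1}_I(H_0)$ is ever needed: for $w\in\Ran(\mathds{1}_I(H))$ one has the \emph{exact} identity $w=\mathds{1}_I(H)w=\left(e^{itH}\mathds{1}_I(H)e^{-itH_0}\right)\left(e^{itH_0}e^{-itH}\mathds{1}_I(H)w\right)$ for every $t$ (the inner $e^{-itH_0}e^{itH_0}$ cancels and $\mathds{1}_I(H)$ commutes with $e^{-itH}$), so letting $t\to\infty$ and using only the uniform bound on the first factor and the existence of $W_+(H_0,H,I)$ gives $w=W_+(H,H_0,I)W_+(H_0,H,I)w$. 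Your ``substitute the approximant'' version of the same computation instead leaves you with $\lim_t e^{itH}\mathds{1}_I(H)\left(\mathds{1}_I(H_0)-\mathds{1}_I(H)\right)e^{-itH}\mathds{1}_I(H)w$ to dispose of, i.e.\ already the symmetric form of your key lemma; this is avoidable. For injectivity the paper proves no left-inverse identity at all: it quotes Proposition \ref{prop:kernel and range local wave operator}, $\Ker(W_\pm(H,H_0,I))=\Ker(\mathds{1}_I(H_0))$, whose proof goes through the regularized functional calculus, the compactness of $r(H)-r(H_0)$ and the decomposition $\Hi=\Hi_\mathrm{ac}(H)\oplus\Hi_\mathrm{p}(H)$ (and hence uses Hypothesis \ref{hyp:Conjugate operator}, not only Hypotheses \ref{hyp:LAP H0}--\ref{hyp:spectral singularities}). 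Your route replaces all of that by the single limit $\widetilde W_+W_+=\Id$ on $\Ran(\mathds{1}_I(H_0))$, which would indeed be cleaner --- if the key lemma were in hand.

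The gap is precisely that lemma, $(\mathds{1}_I(H)-\mathds{1}_I(H_0))e^{-itH_0}u\to0$ in \emph{norm}; norm convergence is indispensable here because the bracket is composed with the $t$-dependent isometry $e^{itH_0}$, so weak convergence buys nothing. Your suggested proof --- compactness of the difference plus a RAGE-type argument --- does not obviously close: $\mathds{1}_I(H)-\mathds{1}_I(H_0)$ is defined through the Stone formula as a \emph{weak} limit of the operators $-\frac{1}{2\pi i}\int_I\left(\Res_H(\lambda+i\varepsilon)CWC\Res_0(\lambda+i\varepsilon)-\Res_H(\lambda-i\varepsilon)CWC\Res_0(\lambda-i\varepsilon)\right)\mathrm{d}\lambda$, and while each of these is compact, a weak operator limit of compact operators need not be compact; the hypotheses only control the weighted resolvents $C\Res_H(z)C$ and $C\Res_H(z)CW$, not $\Res_H(z)C$, so the Stone formula cannot be upgraded to a norm limit. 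A direct attack fares only partly better: pairing with $v$ and using local $H^\star$-smoothness of $C$ on $I$, the $(\lambda+i0^+)$ half is bounded by a constant times $\left(\int_t^\infty\norme{Ce^{-isH_0}u}_\Hi^2\,\mathrm{d}s\right)^{1/2}\to0$ uniformly in $\norme{v}_\Hi\le1$ (Plancherel), but the $(\lambda-i0^+)$ half reduces to an oscillatory integral $\int_Ie^{-it\lambda}\scal{C\Res_{H^\star}(\lambda+i0^+)v}{W\widehat{g}(\lambda)}_\Hi\,\mathrm{d}\lambda$, where $\widehat{g}$ is the Fourier transform of $s\mapsto Ce^{-isH_0}u$, and Riemann--Lebesgue applies only pointwise in $v$, giving weak rather than norm convergence. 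So the left-inverse step is not justified as written; either the lemma needs a genuinely different argument, or you should fall back on the paper's route and extract injectivity from the kernel characterization of Proposition \ref{prop:kernel and range local wave operator}.
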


Similar results have been obtained in \cite{FaFr18_01} in abstract dissipative scattering theory, and in \cite{KaYa76_09} for a class of non-self-adjoint Schr\"odinger operators. 

\subsection{Applications to Schr\"odinger operators}\label{subsec:Appli}

In this section we will show that our results apply to non-self-adjoint Schr\"odinger operators. We consider 
\begin{equation*}
	\Hi=L^2(\R^\mathrm{d},\C),\quad d\geq 3 \text{ odd}, \quad H_0=-\Delta, \quad \mathcal{D}(H_0)=H^2(\R^d,\C),
\end{equation*} 
where $H^2(\R^d,\C)$ is the usual Sobolev's space. It is well-known that $\sigma_\mathrm{ess}(H)=[0,\infty)$ and that
\begin{equation*}
	\sup_{z\in\C\backslash\R}\norme{\langle x\rangle^{-\delta}(-\Delta-z)^{-1}\langle x\rangle^{-\delta}}_{\mathcal{B}(\Hi)}<\infty,\quad \delta>1,
\end{equation*}
where $\langle x\rangle=(1+x^2)^{1/2}$. Hence choosing $C(x)=\langle x\rangle^{-\delta}$ with $\delta>1$, we see that Hypothesis \ref{hyp:LAP H0} is satisfied.

We suppose that 
\begin{equation}\label{eq: V compactly supported}
	V\in L^\infty_{\mathrm{c}}(\R^d,\C)=\lbrace u\in L^\infty(\R^d)\text{ compactly supported} \rbrace.
\end{equation} 
It then follows from \cite{FrLaSa16_01} that $H$ has only a finite number of discrete eigenvalues with finite algebraic multiplicities. In particular, Hypothesis \ref{hyp:VPH} is satisfied. 

To verify that Hypothesis \ref{hyp:spectral singularities} holds, we rely on the theory of resonances. Resonances are defined as poles of the meromorphic extension of the weighted resolvent, see e.g \cite{DyZw19_01}. More precisely, assuming \eqref{eq: V compactly supported}, the map 
\begin{equation*}
	\lbrace z\in \C,\im(z)>0\rbrace \ni z\mapsto(H-z^2)^{-1}:L^2(\R^d)\rightarrow L^2(\R^d)
\end{equation*}
is meromorphic and extends to a meromorphic map 
\begin{equation*}
	\C\ni z\mapsto(H-z^2)^{-1}:L_\mathrm{c}^2(\R^d)\rightarrow L_\mathrm{loc}^2(\R^d),
\end{equation*}
where where $L^2_{\mathrm{c}}(\mathbb{R}^d):=\{u\in L^2(\mathbb{R}^d),u\text{ is compactly supported} \}$ and $L^2_{\mathrm{loc}}( \mathbb{R}^d ) :=\{u:\mathbb{R}^d\to\mathbb{C},u\in L^2(K) \text{ for all compact set } K \subset \mathbb{R}^d \}$. Poles of the extension of the resolvent of $H$ are called resonances. To a real resonance $\pm \lambda_0$ with $\lambda_0\geq 0$ corresponds an outgoing or/and an incoming spectral singularity $\lambda_0^2$ in the sense of Definition \ref{def:point_spectral_regulier_classique_pour_H}. The order of the pole of the resonance coincides with the order of the spectral singularity.  Hence, if for example $\lambda_0$ is an outgoing spectral singularity, there exists $\mathcal{V}_{\lambda_0}$ a closed real neighborhood of $\lambda_0$ whose interior contains $\lambda_0$ and no other resonance, such that 
\begin{equation*}
	\sup_{\substack{\re(z)\in \mathcal{V}_\lambda\\ \im(z)<\varepsilon_0}}\abso{\lambda_0-z}^{\nu_0}\norme{C\Res_H(z)CW}_{\mathcal{B}(\Hi)}<\infty
\end{equation*}
where $\nu_0$ is the order of the pole of $\lambda_0^2$ and $\varepsilon_0>0$ is small enough. Furthermore, under the assumption \eqref{eq: V compactly supported}, $H$ only has finitely many spectral singularities $\lbrace \lambda_1,\ldots,\lambda_n\rbrace$ of finite orders, and no spectral singularity at infinity. This shows that Hypothesis \ref{hyp:spectral singularities} is satisfied. We refer to \cite{DyZw19_01} for more details about the theory of resonances and \cite{FaFr18_01} for more details about the relation between resonances and spectral singularities.

Applying Theorems \ref{thm:existence of the wave operator}, \ref{thm3.4} and \ref{mainthm:boundedness of solution of Schrodinger equation}, we obtain the following proposition. 

\begin{proposition}
	Suppose that $V\in L^\infty_\mathrm{c}(\R^d,\C)$ and that for each eigenvalue embedded in the essential spectrum of $H$, the symmetric bilinear form 
	\begin{equation}\label{eq:non deg Schr}
		\Ker\big((H-\lambda)^{\mathrm{m}_\lambda}\big) \ni (u,v)\mapsto \int_{\mathbb{R}^d}u(x)v(x)\mathrm{d}x
	\end{equation}
	is non-degenerate. Then the wave operators $W_\pm(H,H_0)$ associated to $H=-\Delta+V$ and $H_0=-\Delta$ exist, are injective and their ranges are dense in $\Hi_\mathrm{p}(H)^\perp$.
	
	Moreover if $H$ does not have any real resonance, then $W_\pm(H,H_0)$ are asymptotically complete and there exist $m_1>0$ and $m_2>0$ such that, for all $u\in\Hi_\mathrm{p}(H)^\perp$, 
	\begin{equation*}
		\forall t\in\R,\quad m_1\norme{u}_\Hi\leq \norme{e^{itH}u}_{\Hi}\leq m_2\norme{u}_\Hi.
	\end{equation*}
\end{proposition}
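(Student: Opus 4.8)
The plan is to verify that the abstract Hypotheses \ref{hyp:LAP H0}--\ref{hyp:Conjugate operator} all hold for $H=-\Delta+V$ with $V\in L^\infty_\mathrm{c}(\R^d,\C)$, and then simply to quote Theorems \ref{thm:existence of the wave operator}, \ref{thm3.4} and \ref{mainthm:boundedness of solution of Schrodinger equation}. First I would fix $\delta>1$ and take $C$ to be the operator of multiplication by $\langle x\rangle^{-\delta}$: it is bounded, positive and injective (multiplication by a nowhere-vanishing bounded function), hence a metric operator, and it is relatively compact with respect to $-\Delta$ by the standard fact that $\langle x\rangle^{-\delta}(-\Delta+1)^{-1}$ is compact for $\delta>0$. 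Since $V$ is bounded and compactly supported, the function $w(x):=\langle x\rangle^{2\delta}V(x)$ is again bounded and compactly supported, so that $W$, the operator of multiplication by $w$, lies in $\B(\Hi)$ and $V=CWC$. Hypothesis \ref{hyp:LAP H0} is then precisely the classical limiting absorption principle for the free Laplacian in odd dimension recalled above, with weights $\langle x\rangle^{-\delta}$, $\delta>1$.

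Hypotheses \ref{hyp:VPH} and \ref{hyp:spectral singularities} have essentially been checked in the discussion preceding the statement, and I would only recall the inputs: the finiteness of $\sigma_\mathrm{p}(H)$ with finite algebraic multiplicities (embedded or not) comes from \cite{FrLaSa16_01}, while the statements about spectral singularities come from the theory of resonances for compactly supported potentials (\cite{DyZw19_01}): in odd dimension the weighted resolvent of $H$ extends meromorphically to all of $\C$, its poles (the resonances) are discrete, only finitely many of them are real and none sits ``at infinity'', and a real resonance $\pm\lambda_0$ produces an outgoing and/or incoming spectral singularity at $\lambda_0^2$ whose order equals the order of the pole. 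Holomorphy of the continuation away from its poles gives the uniform bound on singularity-free compact intervals required in Hypothesis \ref{hyp:spectral singularities}. One must keep in mind that embedded eigenvalues are simultaneously incoming and outgoing spectral singularities (\cite{FaFr22_06}), so they are covered by the same discussion.

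It remains to produce the conjugation operator of Hypothesis \ref{hyp:Conjugate operator}, and here the plan is to take $J$ to be pointwise complex conjugation, $Ju=\bar u$. Then $J$ is anti-linear and continuous, $J^2=\Id$, and $\scal{Ju}{Jv}_\Hi=\scal{v}{u}_\Hi$; it preserves $H^2(\R^d,\C)=\mathcal{D}(H_0)$ and commutes with $-\Delta$ because the Laplacian has real coefficients; it commutes with $C$ because $C$ is multiplication by the real function $\langle x\rangle^{-\delta}$; and $JW=W^*J$ because $W$ is multiplication by $w$ and $W^*$ is multiplication by $\bar w$. Finally, for this choice of $J$ the symmetric bilinear form in \eqref{eq:invertibility_mat2} on $\Ker((H-\lambda)^{\mathrm{m}_\lambda})$ is exactly $(u,v)\mapsto\int_{\R^d}u(x)v(x)\,\mathrm{d}x$, which is assumed non-degenerate in \eqref{eq:non deg Schr}; hence Hypothesis \ref{hyp:Conjugate operator} holds and the spectral projections onto generalized eigenspaces of embedded eigenvalues are well defined. (With this $J$ one has $\Pi_\mathrm{p}(H^\star)=\Pi_\mathrm{p}(H)^*$, so the subspace $\Hi_\mathrm{p}(H^\star)^\perp$ of the abstract theorems is the range of $\Id-\Pi_\mathrm{p}(H)$, which is what is denoted $\Hi_\mathrm{p}(H)^\perp$ in the statement.)

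With Hypotheses \ref{hyp:LAP H0}--\ref{hyp:Conjugate operator} in hand, Theorem \ref{thm:existence of the wave operator} gives the existence and injectivity of $W_\pm(H,H_0)$ and the density of their ranges in $\Hi_\mathrm{p}(H^\star)^\perp$. If moreover $H$ has no real resonance, then by the resonance/spectral-singularity dictionary $H$ has no spectral singularity, so Theorem \ref{thm3.4} yields asymptotic completeness and Theorem \ref{mainthm:boundedness of solution of Schrodinger equation} yields constants $m_1,m_2>0$ with $m_1\norme{u}_\Hi\leq\norme{e^{-itH}u}_\Hi\leq m_2\norme{u}_\Hi$ for all $t\in\R$ and all $u\in\Hi_\mathrm{p}(H^\star)^\perp$; replacing $t$ by $-t$ gives the stated inequality for $e^{itH}$. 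The only non-mechanical parts of the argument are the two imported facts --- finiteness of $\sigma_\mathrm{p}(H)$ from \cite{FrLaSa16_01} and the resonance theory of \cite{DyZw19_01} --- together with the bookkeeping needed to match those resonance statements to Definition \ref{def:point_spectral_regulier_classique_pour_H} and to the notion of order of a spectral singularity; everything internal to the paper is a direct citation of the abstract theorems, and I expect this matching to be the main (and mild) obstacle.
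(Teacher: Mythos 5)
Your proposal is correct and follows exactly the route the paper takes: the paper gives no separate proof of this proposition, but derives it by verifying Hypotheses \ref{hyp:LAP H0}--\ref{hyp:Conjugate operator} in the discussion preceding the statement (with $C=\langle x\rangle^{-\delta}$, the finiteness of $\sigma_\mathrm{p}(H)$ from \cite{FrLaSa16_01}, the resonance theory of \cite{DyZw19_01}, and the non-degeneracy condition \eqref{eq:non deg Schr}) and then citing Theorems \ref{thm:existence of the wave operator}, \ref{thm3.4} and \ref{mainthm:boundedness of solution of Schrodinger equation}. You in fact make explicit two points the paper leaves implicit --- the factorization $V=CWC$ with $W$ multiplication by $\langle x\rangle^{2\delta}V$, and the choice of $J$ as pointwise complex conjugation matching \eqref{eq:invertibility_mat2} to \eqref{eq:non deg Schr} --- which is a welcome completion of the argument.
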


Note that if $H$ has no embedded eigenvalue, then the condition \eqref{eq:non deg Schr} is not necessary. We mention that the condition \eqref{eq:non deg Schr} has been considered in \cite{Wa20_01} and \cite{Aa21_01}, to study the long-time behavior of solutions to the Schr\"odinger equation with a complex potential.


Finally we apply Theorem \ref{thm:local asymptotic completeness }. 

\begin{proposition}
	Suppose that $V\in L^\infty_\mathrm{c}(\R^d,\C)$. Let $I\subset [0,\infty)$ be a closed interval without any resonance. Then the local wave operators $W_\pm(H,H_0,I)$ exist and are asymptotically complete.
\end{proposition}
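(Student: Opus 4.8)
The plan is to obtain the proposition as a direct application of Theorem \ref{thm:local asymptotic completeness }: it suffices to check that $H=-\Delta+V$ with $V\in L^\infty_{\mathrm c}(\R^d,\C)$ satisfies Hypotheses \ref{hyp:LAP H0}--\ref{hyp:spectral singularities}, and that a resonance-free closed interval $I\subset[0,\infty)$ contains no spectral singularity of $H$.

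First I would record that most of the verification has already been carried out in Section \ref{subsec:Appli}. Choosing $C(x)=\langle x\rangle^{-\delta}$ for a fixed $\delta>1$ and letting $W$ be multiplication by the bounded compactly supported function $\langle x\rangle^{2\delta}V(x)$ yields the factorization $V=CWC$ of \eqref{eq:model factorization V}, with $C$ positive and injective (hence a metric operator) and $H_0$-compact. Hypothesis \ref{hyp:LAP H0} is then the classical Agmon--Kato--Kuroda limiting absorption principle for $-\Delta$; Hypothesis \ref{hyp:VPH} holds by \cite{FrLaSa16_01} under \eqref{eq: V compactly supported}; and Hypothesis \ref{hyp:spectral singularities} follows from the theory of resonances for compactly supported potentials in odd dimension, since $z\mapsto C\Res_H(z)CW$ extends meromorphically across $(0,\infty)$ with only finitely many poles, of finite order, located at the squares of the real resonances, and no spectral singularity at infinity; the uniform resolvent bound away from those poles required by Hypothesis \ref{hyp:spectral singularities} is part of this meromorphic picture. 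Note that Hypothesis \ref{hyp:Conjugate operator} is not needed here, as Theorem \ref{thm:local asymptotic completeness } only invokes the first three hypotheses.

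It then remains to feed in the interval. If $I\subset[0,\infty)$ is a closed interval containing no resonance of $H$, then by the resonance/spectral-singularity dictionary recalled in Section \ref{subsec:Appli} (a real resonance $\pm\lambda_0$ corresponding to a spectral singularity $\lambda_0^2$ of the same order) $I$ contains no spectral singularity of $H$; in particular it contains no embedded eigenvalue of $H$, since embedded eigenvalues are simultaneously outgoing and incoming spectral singularities. Hence $\mathds{1}_I(H)$ is well-defined via \eqref{eq:def spectral projection}, the local wave operators $W_\pm(H,H_0,I)=\slim_{t\to\pm\infty}e^{itH}\mathds{1}_I(H)e^{-itH_0}$ exist, and Theorem \ref{thm:local asymptotic completeness } gives that each is invertible from $\Ran(\mathds{1}_I(H_0))$ onto $\Ran(\mathds{1}_I(H))$, which is exactly the claimed existence and asymptotic completeness. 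The only genuinely substantive ingredient---though it is classical and borrowed, not proved here---is that precise dictionary between real resonances of $-\Delta+V$ and spectral singularities in the sense of Definition \ref{def:point_spectral_regulier_classique_pour_H}, including the matching of orders and the exclusion of a singularity at infinity, which is where oddness of $d$ and compact support of $V$ enter, via \cite{DyZw19_01, FaFr18_01}.
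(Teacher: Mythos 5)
Your proposal is correct and follows essentially the same route as the paper, which likewise obtains this proposition by verifying Hypotheses \ref{hyp:LAP H0}--\ref{hyp:spectral singularities} through the choice $C(x)=\langle x\rangle^{-\delta}$, the finiteness result of \cite{FrLaSa16_01}, and the resonance/spectral-singularity dictionary, and then invoking Theorem \ref{thm:local asymptotic completeness } directly. You correctly observe the points the paper leaves implicit, namely the explicit factorization $V=CWC$ and the fact that Hypothesis \ref{hyp:Conjugate operator} is not needed here.
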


\subsection{Organization of the paper and main ingredients}\label{subsec:orga}

In this section we describe some of the main tools we will use in this paper.

To prove the existence of the wave operators $W_\pm(H,H_0)$, it is useful to characterize the spectral subspace $\Hi_\mathrm{p}(H^\star)^\perp$. We define the absolutely continuous spectral subspace of $H$, denoted $\Hi_\mathrm{ac}(H)$, as the closure of $\mathcal{M}(H)$ in $\Hi$, where
\begin{equation}\label{eq:def_M(H)}
	\mathcal{M}(H):=\left\lbrace u\in\Hi,\forall v\in\Hi,\int_\R\abso{\scal{e^{itH}u}{v}_\Hi}^2\mathrm{d}t\leq c_u\norme{v}_\Hi^2\right\rbrace. 
\end{equation}
It follows from \cite{FaFr22_06} that, under our assumptions, $\Hi_\mathrm{ac}(H)$ coincide with $\Hi_\mathrm{p}(H^\star)^\perp$. We will see that there exists $c>0$ such that, for all $u\in\Hi_\mathrm{ac}(H)$,
\begin{equation*}
	\int_0^\infty\norme{Cr_\pm(H)e^{\mp itH}u}_\Hi^2\mathrm{d}t\leq c\norme{u}_\Hi^2.
\end{equation*}
The existence of the wave operators will then follows from this property, the equality $\Hi_\mathrm{ac}(H)=\Hi_\mathrm{p}(H^\star)^\perp$ and Cook's method (see \cite{ReSi80_01} or \cite{YA98}).

The properties concerning the kernel and the range of the regularized wave operators mainly follow from the direct sum decomposition, 
\begin{equation*}
	\Hi_\mathrm{ac}(H)\oplus_J\Hi_\mathrm{p}(H)=\Hi,
\end{equation*}
proven in \cite{FaFr22_06}. This is a generalization of the well-known spectral decomposition for self-adjoint operators without singular continuous spectrum. Here $\oplus_J$ stand for the orthogonal direct sum for the bilinear form $\scal{J.}{.}_\Hi$. 

If $I\subset\sigma_\mathrm{ess}(H)$ is a closed interval without any spectral singularity, we can define the spectral projection associated to $H$ in $I$ by setting
\begin{equation}\label{eq:def spectral projection}
	\mathds{1}_I(H):=\frac{1}{2\pi i}\wlim_{\varepsilon\rightarrow 0^+}\int_I\left(\Res_H(\lambda+i\varepsilon)-\Res_H(\lambda-i\varepsilon)\right)\mathrm{d}\lambda.
\end{equation}
It follows from \cite{FaFr22_06} that, under our assumptions, \eqref{eq:def spectral projection} exists and is a projection. Moreover \eqref{eq:def spectral projection} induces a bounded functional calculus in the sense that
\begin{equation}\label{eq:functional_calculus_normal}
	\forall t\in\R,\quad  e^{itH}\mathds{1}_I(H):=\frac{1}{2\pi i}\wlim_{\varepsilon\rightarrow 0^+}\int_Ie^{it\lambda}\left(\Res_H(\lambda+i\varepsilon)-\Res_H(\lambda-i\varepsilon)\right)\mathrm{d}\lambda.
\end{equation}
In particular, $\lbrace e^{itH}\rbrace_{t\in\R}$ is uniformly bounded on the range of $\mathds{1}_I(H)$. Such spectral projections for non self-adjoint operators have been considered by several authors, see \cite{FaFr18_01} for dissipative operators, \cite{Go70} and \cite{Go71} for general non-self-adjoint operators,  \cite{KaYa76_09} for wave operators using a stationary approach and \cite{Mo66_11} and \cite{Mo67_09} for differential operators. 

If $I\subset\sigma_\mathrm{ess}(H)$ is a closed interval whose interior contains spectral singularities, then we define the `regularized spectral projection' associated to $H$ in $I$ by setting
\begin{equation}\label{eq:def regularized spectral projection}
	(h\mathds{1}_I)(H):=\frac{1}{2\pi i}\wlim_{\varepsilon\rightarrow 0^+}\int_I\left(h(\lambda+i\varepsilon)\Res_H(\lambda+i\varepsilon)-h(\lambda-i\varepsilon)\Res_H(\lambda-i\varepsilon)\right)\mathrm{d}\lambda.
\end{equation}
where $h$ is a function that regularizes the singularity of the resolvent at the spectral singularity. More precisely, assuming in particular Hypothesis \ref{hyp:spectral singularities}, $h$ is chosen as
\begin{equation}\label{eq:defh}
h(z):=\prod_{j\in\mathcal{S}_I}r_j(z) , \quad \mathcal{S}_I:=\{j\in\llbracket 1,n\rrbracket\cup\{\infty\} , \, \lambda_j \in I \text{ is a spectral singularity}\},
\end{equation}
where $r_j$ are defined in \eqref{eq:rj}--\eqref{eq:rinfty}. Similarly as above, under our assumptions, \eqref{eq:def regularized spectral projection} exists and defines a Borel functional calculus in the sense that
\begin{equation}\label{eq:reg_funct_calc}
	\forall t\in\R,\quad  e^{itH}(h\mathds{1}_I)(H):=\frac{1}{2\pi i}\wlim_{\varepsilon\rightarrow 0^+}\int_Ie^{it\lambda}\left(h(\lambda+i\varepsilon)\Res_H(\lambda+i\varepsilon)-h(\lambda-i\varepsilon)\Res_H(\lambda-i\varepsilon)\right)\mathrm{d}\lambda.
\end{equation}
See \cite{FaFr22_06}. In particular, $\lbrace e^{itH}\rbrace_{t\in\R}$ is uniformly bounded on the range of $(h\mathds{1}_I)(H)$.

Finally, we will use the following identity (proven in \cite{FaFr22_06}) which generalizes the well-known resolution of identity for self-adjoint operators to our context:
\begin{equation}\label{eq:resolution identity}
	r(H)=r(H)\Pi_\mathrm{disc}(H)+\frac{1}{2\pi i}\wlim_{\varepsilon\rightarrow \infty}\int_{\sigma_\mathrm{ess}(H)}\left(r(\lambda+i\varepsilon)\Res_H(\lambda+i\varepsilon)-r(\lambda-i\varepsilon)\Res_H(\lambda-i\varepsilon)\right)\mathrm{d}\lambda,
\end{equation}
where $r$ is the regularizing function of Hypothesis \ref{hyp:spectral singularities}.

\section{The Wave operators}\label{sec:Wave Operators}

\subsection{Preliminary results} 

We begin with preliminary results that will allow us to prove the existence and study the wave operators in the sequel. 

\begin{proposition}\label{prop:existence strong limit wave operator general 1}
	Suppose that Hypothesis \ref{hyp:LAP H0} holds. Let $A\in\mathcal{B}(\Hi)$ be such that there exists $c>0$ satisfying 
	\begin{equation*}
		\forall u\in\Hi, \quad \int_0^\infty\norme{Ce^{\pm itH^\star }A^\star u}_\Hi^2\mathrm{d}t\leq c\norme{u}^2_\Hi.
	\end{equation*}
Then the strong limits 
\begin{equation*}
	\slim_{t\rightarrow \pm \infty}Ae^{itH}e^{-itH_0} \quad \text{and }\quad \slim_{t\rightarrow \pm \infty}A^\star e^{ itH_0}e^{-itH^\star } 
\end{equation*}
exist. 
\end{proposition}

We can state a second version of the last proposition.

\begin{proposition}\label{prop:existence strong limit wave operator general 2}
	Suppose that Hypothesis \ref{hyp:LAP H0} holds. Let $A\in\mathcal{B}(\Hi)$ be such that there exists $c>0$ satisfying 
\begin{equation*}
	\forall u\in\Hi, \quad \int_0^\infty\norme{Ce^{\pm itH }A u}_\Hi^2\mathrm{d}t\leq c\norme{u}^2_\Hi.
\end{equation*}
Then the strong limits 
\begin{equation*}
	\slim_{t\rightarrow \pm \infty}e^{itH_0}e^{-itH}A \quad \text{and }\quad \slim_{t\rightarrow \pm \infty}Ae^{ itH^\star}e^{-itH_0 }
\end{equation*}
exist.
\end{proposition}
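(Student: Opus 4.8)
The plan is to prove both strong limits by Cook's method, the decisive point being that the perturbation carries a copy of $C$ on \emph{each} side, $V=CWC$. The only ingredient needed beyond the hypothesis itself is already recorded in the excerpt: by \eqref{eq:Kato_smooth}, Hypothesis \ref{hyp:LAP H0} implies that $C$ is $H_0$-smooth in Kato's sense, i.e. there is a constant $c_0>0$ with $\int_\R\norme{Ce^{-i\tau H_0}u}_\Hi^2\,\mathrm{d}\tau\le c_0^2\norme{u}_\Hi^2$ for all $u\in\Hi$.

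For the first strong limit, fix $w\in\Hi$. If $w\in\mathcal{D}(H_0)=\mathcal{D}(H)$ then $\tau\mapsto e^{i\tau H_0}e^{-i\tau H}w$ is $C^1$ with derivative $-ie^{i\tau H_0}CWCe^{-i\tau H}w$, so for $0<t_1<t_2$ one has
\[ e^{it_2H_0}e^{-it_2H}w-e^{it_1H_0}e^{-it_1H}w=-i\int_{t_1}^{t_2}e^{i\tau H_0}CWCe^{-i\tau H}w\,\mathrm{d}\tau. \]
Both sides define bounded operators in $w$, so this identity extends to all $w\in\Hi$, and we use it with $w=Au$. Pairing with an arbitrary $v\in\Hi$, moving $e^{i\tau H_0}C$ onto the second slot ($e^{i\tau H_0}$ unitary, $C=C^\star$) and applying the Cauchy--Schwarz inequality in $\tau$ gives
\[ \abso{\scal{(e^{it_2H_0}e^{-it_2H}-e^{it_1H_0}e^{-it_1H})Au}{v}_\Hi}\le\norme{W}_{\B(\Hi)}\Big(\int_{t_1}^{t_2}\norme{Ce^{-i\tau H}Au}_\Hi^2\,\mathrm{d}\tau\Big)^{1/2}\Big(\int_\R\norme{Ce^{-i\tau H_0}v}_\Hi^2\,\mathrm{d}\tau\Big)^{1/2}. \]
Taking the supremum over $\norme{v}_\Hi\le1$ and inserting the bound $c_0$ shows that $\norme{(e^{it_2H_0}e^{-it_2H}-e^{it_1H_0}e^{-it_1H})Au}_\Hi\le c_0\norme{W}_{\B(\Hi)}\big(\int_{t_1}^{t_2}\norme{Ce^{-i\tau H}Au}_\Hi^2\,\mathrm{d}\tau\big)^{1/2}$, which tends to $0$ as $t_1,t_2\to+\infty$ because the hypothesis (with the sign $-$) makes $\int_0^\infty\norme{Ce^{-i\tau H}Au}_\Hi^2\,\mathrm{d}\tau\le c\norme{u}_\Hi^2$ finite. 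Hence $\{e^{itH_0}e^{-itH}Au\}_{t>0}$ is Cauchy and $\slim_{t\to+\infty}e^{itH_0}e^{-itH}A$ exists; for $t\to-\infty$ one substitutes $\tau\mapsto-\tau$, turning $e^{-i\tau H}$ into $e^{i\tau H}$ and invoking the sign $+$ of the hypothesis. The second strong limit is obtained identically: $\frac{\mathrm{d}}{\mathrm{d}\tau}\big(Ae^{i\tau H^\star}e^{-i\tau H_0}\big)=iAe^{i\tau H^\star}CW^*Ce^{-i\tau H_0}$ again has a factor $C$ on each side, so after testing against $v$ and transferring $Ae^{i\tau H^\star}C$ to the other slot through its adjoint $Ce^{-i\tau H}A^\star$, the same Cauchy--Schwarz-in-$\tau$ scheme applies, the hypothesis controlling the $e^{-i\tau H}$ factor and the $H_0$-smoothness of $C$ the $e^{-i\tau H_0}$ factor.

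The crux of the argument --- and the reason the naive form of Cook's method does not apply --- is that estimating $\int_{t_1}^{t_2}\norme{\frac{\mathrm{d}}{\mathrm{d}\tau}(e^{i\tau H_0}e^{-i\tau H}Au)}_\Hi\,\mathrm{d}\tau$ by brute force would require an $L^1$-in-$\tau$ control of $\norme{Ce^{-i\tau H}Au}_\Hi$, and bounding that $L^1$ integral by Cauchy--Schwarz against the $L^2$ bound of the hypothesis costs a divergent factor $\sqrt{t_2-t_1}$. The two-sided factorization $V=CWC$ is exactly what repairs this: once one has tested against $v$, one copy of $C$ rides along the $H$-propagator while the other --- after the unitary $e^{i\tau H_0}$ has been moved onto the $v$-slot --- rides along the $H_0$-propagator, and only then is Cauchy--Schwarz in $\tau$ applied, so that \emph{both} time-integrals are finite, one by hypothesis and one by Hypothesis \ref{hyp:LAP H0}. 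I expect this to be the only genuine point; the a priori differentiability of $\tau\mapsto e^{i\tau H_0}e^{-i\tau H}w$ for non-smooth $w$ is a routine matter, handled by establishing the integrated identity on the dense subspace $\mathcal{D}(H_0)$ and extending it by the boundedness noted above.
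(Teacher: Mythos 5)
Your argument for the first limit is exactly the paper's: Duhamel's formula on $\mathcal{D}(H_0)$, extension by density, pairing with $v$, distributing one factor of $C$ to each propagator (using that $C=C^\star$ since $C$ is a metric operator), and only then applying Cauchy--Schwarz in the time variable, so that one time-integral is finite by the hypothesis and the other by the Kato smoothness \eqref{eq:Kato_smooth} of $C$ with respect to $H_0$; the Cauchy criterion follows because the tail of the hypothesis integral vanishes. This part is correct, including your diagnosis of why the naive $L^1$ form of Cook's method fails.

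For the second limit, however, your own computation exposes a mismatch that you then pass over: after transferring $Ae^{i\tau H^\star}C$ to the $v$-slot you obtain the factor $Ce^{-i\tau H}A^\star v$, whereas the stated hypothesis controls $\int_0^\infty\norme{Ce^{\pm i\tau H}Av}_\Hi^2\,\mathrm{d}\tau$ --- with $A$, not $A^\star$ --- and nothing forces the corresponding bound for $A^\star$ (for a rank-one $A=\langle\phi,\cdot\rangle\psi$ with $\psi$ well-behaved and $\phi$ arbitrary, the bound for $A$ says nothing about $Ce^{-i\tau H}A^\star v=\langle\psi,v\rangle Ce^{-i\tau H}\phi$). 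As written, the final Cauchy--Schwarz step therefore does not close. The source of the problem is the statement itself: for both limits to follow from the single hypothesis they should be mutual adjoints, i.e.\ the second limit should read $\slim A^\star e^{itH^\star}e^{-itH_0}$ (compare Proposition \ref{prop:existence strong limit wave operator general 1}, whose two limits $Ae^{itH}e^{-itH_0}$ and $e^{itH_0}e^{-itH^\star}A^\star$ are adjoints of one another, and the application in Proposition \ref{prop:wave operator exist}, where $A=r_\mp(H)\Pi_\mathrm{ac}(H)$ and the operator appearing in $W_\pm(H^\star,H_0)$ is $A^\star$). With that reading --- or with the additional assumption that the smoothing bound also holds for $A^\star$ --- your argument goes through verbatim: the Cook integrand becomes $A^\star e^{isH^\star}CW^\star Ce^{-isH_0}u$ and the transferred factor is $Ce^{-isH}Av$, which the hypothesis does control. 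You should either flag the typo or state explicitly which bound you are invoking at that step.
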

The proofs of Propositions \ref{prop:existence strong limit wave operator general 1} and \ref{prop:existence strong limit wave operator general 2} are standard and based on Cook's method (see e.g. \cite{ReSi80_01} or \cite{YA98}). For the convenience of the reader, we give a proof of Proposition \ref{prop:existence strong limit wave operator general 1} in Appendix \ref{App:existence and property wave operators}.

\subsection{The regularized wave operators $W_\pm(H,H_0)$ and $W_\pm(H^\star,H_0)$}\label{subsec:reg_wave}

We recall the following definition. Note that $\Pi_\mathrm{ac}(H)$ is the projection onto the absolutely continuous spectral subspace of $H$ defined in section \ref{subsec:orga}.
\begin{definition}\label{def:ref wave operators}
 The regularized wave operators associated to $H$ and $H_0$ (respectively to $H^\star$ and $H_0$) are defined by
    \begin{align*}
        W_\pm(H,H_0)&:=\slim_{t\rightarrow\pm\infty}e^{itH}\Pi_\mathrm{ac}(H)r_\mp(H)e^{-itH_0},\\
        W_\pm(H^\star,H_0)&:=\slim_{t\rightarrow\pm \infty}e^{itH^\star}\Pi_\mathrm{ac}(H)\bar{r}_\mp(H^\star)e^{-itH_0}.
    \end{align*}
\end{definition}

In this section we prove the existence and study the properties of these wave operators. However, since Definition \ref{def:ref wave operators} is not standard, we begin with a few remarks justifying it.

\subsubsection{Remarks on Definition \ref{def:ref wave operators} of the regularized wave operators}\label{subsubsec:Justification of the definition}

To fix the ideas, we consider here the wave operator $W_-(H,H_0)$. In the usual setting, when $H$ and $H_0$ are self-adjoint operators acting on the same Hilbert space and when $H_0$ has purely absolutely continuous spectrum, the common definition of the wave operators $W_-(H,H_0)$ is
\begin{equation}\label{eq:classical definition wave operator}
W_-(H,H_0):=\slim_{t\rightarrow-\infty}e^{itH}e^{-itH_0}.
\end{equation}
Assuming for simplicity that $H$ has finitely many eigenvalues with finite multiplicities and no singular continuous spectrum, we can write, for all $t\ge0$ and $u\in\Hi$,
\begin{equation*}
e^{itH}e^{-itH_0}u=e^{itH}\Pi_\mathrm{ac}(H)e^{-itH_0}u+e^{itH}\Pi_\mathrm{p}(H)e^{-itH_0}u,
\end{equation*}
and since $\Pi_\mathrm{p}(H)$ is compact and $H_0$ has purely absolutely continuous spectrum, it is well-known that  
\begin{equation}\label{eq:Pip}
\Pi_\mathrm{p}(H)e^{-itH_0}u\xrightarrow[t\rightarrow -\infty]{} 0
\end{equation}
(see e.g \cite[Lemma 1.4.1]{YA98}). Hence proving the existence of \eqref{eq:classical definition wave operator} is equivalent to proving the existence of 
\begin{equation*}
\slim_{t\rightarrow-\infty}e^{itH}\Pi_\mathrm{ac}(H)e^{-itH_0}.
\end{equation*}
In our context, however, the group $\{e^{itH}\}$ is not uniformly bounded ($\|e^{itH}\Pi_\mathrm{p}(H)\|_{\mathcal{B}(\Hi)}$ may even blow up exponentially as $t\rightarrow -\infty$). Therefore \eqref{eq:classical definition wave operator} fails in general. 

Still, a natural definition in our context might be 
\begin{equation}\label{eq:possible def}
W_-(H,H_0):=\slim_{t\rightarrow -\infty}e^{itH}\Pi_\mathrm{ac}(H)e^{-itH_0}.
\end{equation}
The next proposition however shows that if $H$ has an outgoing spectral singularity, then $C$ is not locally relatively smooth with respect to $H$, and hence the usual Cook criterion do not apply to prove the existence of \eqref{eq:possible def}. The purpose of adding a regularizing operator in the definition of $W_-(H,H_0)$ (see Definition \ref{def:ref wave operators}) is precisely to overcome this issue.

\begin{proposition}
Suppose that the assumptions \ref{hyp:LAP H0}-\ref{hyp:Conjugate operator} hold. Suppose there exists a compact interval of $J\subset\sigma_\mathrm{ess}(H)$ without eigenvalue of $H$ and $u\in\Hi$, such that
\begin{equation*}
\lim_{\varepsilon\rightarrow 0^+}\int_J\norme{C\Res_{H}(\lambda-i\varepsilon)CWu}^2_\Hi\mathrm{d}\lambda=\infty.
\end{equation*}
Then, $C$ is not $H$-smooth in the interval $J$. 
\end{proposition}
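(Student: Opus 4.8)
The plan is to argue by contrapositive: I will assume that $C$ is $H$-smooth on $J$ and derive that $\lambda\mapsto C\Res_H(\lambda-i\varepsilon)CWu$ stays bounded in $L^2(J;\Hi)$ as $\varepsilon\to0^+$, contradicting the hypothesis. The notion of $H$-smoothness on the interval $J$ should be taken in the usual Kato sense relative to the spectral projection $\mathds{1}_J(H)$ (which is well-defined here since $J$ contains no spectral singularity — note $J$ has no eigenvalue, so in particular no embedded eigenvalue, and by Hypothesis~\ref{hyp:spectral singularities} the weighted resolvent is locally bounded on intervals free of spectral singularities, so $\mathds{1}_J(H)$ exists via \eqref{eq:def spectral projection}). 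Concretely, local $H$-smoothness of $C$ on $J$ means there is $c>0$ with
\begin{equation*}
\int_\R\norme{Ce^{-itH}\mathds{1}_J(H)v}_\Hi^2\,\mathrm{d}t\leq c\,\norme{v}_\Hi^2,\qquad v\in\Hi,
\end{equation*}
which, exactly as in the self-adjoint case (Hypothesis~\ref{hyp:LAP H0} and the equivalence \eqref{eq:Kato_smooth}$\Leftrightarrow$\eqref{eq:Smooth_Ka65_01}), is equivalent to a bound
\begin{equation*}
\sup_{\varepsilon>0}\int_J\Big(\norme{C\Res_H(\lambda+i\varepsilon)w}_\Hi^2+\norme{C\Res_H(\lambda-i\varepsilon)w}_\Hi^2\Big)\mathrm{d}\lambda\leq c'\norme{w}_\Hi^2
\end{equation*}
for all $w\in\Ran(\mathds{1}_J(H))$; the passage between the time-integral formulation and the resolvent formulation is the standard computation using $\Res_H(\lambda\mp i\varepsilon)=\pm i\int_0^\infty e^{-it(H-\lambda\pm i\varepsilon)}\,\mathrm{d}t$ together with Plancherel, adapted to the functional calculus \eqref{eq:functional_calculus_normal} on $\Ran(\mathds{1}_J(H))$.

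First I would fix $u\in\Hi$ and set $w=CWu$; the goal is to control $\int_J\norme{C\Res_H(\lambda-i\varepsilon)w}_\Hi^2\mathrm{d}\lambda$. The point is that $w$ itself need not lie in $\Ran(\mathds{1}_J(H))$, so I would split $w=\mathds{1}_J(H)w+(\Id-\mathds{1}_J(H))w$. On the first piece, the resolvent bound above applies directly and gives a uniform-in-$\varepsilon$ bound. On the second piece, $(\Id-\mathds{1}_J(H))w$ is ``spectrally away from $J$'': using the intertwining/commutation of $\mathds{1}_J(H)$ with $\Res_H(z)$ and the fact that for $\lambda$ in a slightly smaller interval $J'\Subset J^{\mathrm{int}}$ the map $z\mapsto C\Res_H(z)(\Id-\mathds{1}_J(H))$ extends continuously (hence stays bounded) up to $\lambda-i0^+$ — because the only obstruction to the limiting absorption principle for $C\Res_H(\cdot)C$ near $J$ comes from the ``$J$-part'' of the spectrum — one gets that $\lambda\mapsto C\Res_H(\lambda-i\varepsilon)(\Id-\mathds{1}_J(H))w$ is bounded uniformly on $J'$, hence its $L^2(J';\Hi)$-norm is bounded. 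Combining the two pieces contradicts the assumed divergence. To handle the full interval $J$ rather than $J'$, I would note that the divergence hypothesis localizes: if the integral over $J$ diverges, then it already diverges over some slightly smaller closed subinterval, so working on $J'$ is harmless.

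The main obstacle I anticipate is the clean treatment of the ``off-$J$'' piece $(\Id-\mathds{1}_J(H))w$ near the endpoints of $J$ and near any spectral singularities or eigenvalues lying just outside $J$: one must be sure that $C\Res_H(z)(\Id-\mathds{1}_J(H))$ really does have a boundary value on the interior $J^{\mathrm{int}}$, which requires invoking the structure of the spectral decomposition $\Hi=\Hi_\mathrm{ac}(H)\oplus\Hi_\mathrm{p}(H)$ from \cite{FaFr22_06}, the local limiting absorption principle stated after Hypothesis~\ref{hyp:spectral singularities}, and the regularized functional calculus \eqref{eq:reg_funct_calc}, to argue that the resolvent acting on the complementary spectral subspace is regular on $J^{\mathrm{int}}$. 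A secondary technical point is justifying the equivalence between time-integral $H$-smoothness and the resolvent $L^2$-bound in this non-self-adjoint setting; this is where \eqref{eq:functional_calculus_normal} and the uniform boundedness of $\{e^{itH}\}$ on $\Ran(\mathds{1}_J(H))$ are essential, and the argument mirrors \cite{Ka65_01} and \cite[Theorem~XIII.25]{ReSi80_01} but with the spectral measure replaced by the bounded functional calculus on $\Ran(\mathds{1}_J(H))$.
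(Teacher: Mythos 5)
There is a genuine gap, and it sits at the very foundation of your contrapositive. You justify the existence of $\mathds{1}_J(H)$ by asserting that ``$J$ has no eigenvalue, so in particular no embedded eigenvalue,'' and hence no spectral singularity. But spectral singularities are not, in general, eigenvalues: Definition \ref{def:point_spectral_regulier_classique_pour_H} declares $\lambda$ a spectral singularity whenever the boundary value $C\Res_H(\lambda\pm i0^+)CW$ fails to exist (for Schr\"odinger operators these are real resonances, which typically are not eigenvalues). In fact the hypothesis of the proposition \emph{forces} $J$ to contain such a non-eigenvalue spectral singularity: if $J$ contained none, Hypothesis \ref{hyp:spectral singularities} would give $\sup\norme{C\Res_H(\lambda-i\varepsilon)CW}_{\mathcal{B}(\Hi)}<\infty$ uniformly for $\lambda\in J$ and small $\varepsilon$, so $\int_J\norme{C\Res_H(\lambda-i\varepsilon)CWu}_\Hi^2\mathrm{d}\lambda$ would be bounded, contradicting the assumed divergence. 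Under your reading the proposition would be vacuous; under the correct reading, $\mathds{1}_J(H)$ is not available via \eqref{eq:def spectral projection} (only the regularized $(h\mathds{1}_J)(H)$ is), the local limiting absorption principle on $J$ fails by construction, and the entire scaffolding --- the resolvent reformulation of local smoothness on $\Ran(\mathds{1}_J(H))$ and the boundary values of $C\Res_H(z)(\Id-\mathds{1}_J(H))$ on $J^{\mathrm{int}}$ --- collapses.

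The paper's proof avoids all of this and is considerably more elementary. Set $v=CWu$ and use Parseval to write $\int_0^\infty\norme{Ce^{itH}\Pi_\mathrm{ac}(H)v}_\Hi^2\mathrm{d}t$ as $\tfrac{1}{2\pi}\lim_{\varepsilon\to0^+}\int_{\sigma_\mathrm{ess}(H)}\norme{C\Res_H(\lambda-i\varepsilon)\Pi_\mathrm{ac}(H)v}_\Hi^2\mathrm{d}\lambda$; lower-bound by the integral over $J$; then split $v=\Pi_\mathrm{ac}(H)v+\Pi_\mathrm{p}(H)v$ using the \emph{global} decomposition $\Hi=\Hi_\mathrm{ac}(H)\oplus\Hi_\mathrm{p}(H)$. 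The point-spectral piece $\int_J\norme{C\Res_H(\lambda-i\varepsilon)\Pi_\mathrm{p}(H)v}_\Hi^2\mathrm{d}\lambda$ stays bounded precisely because the compact interval $J$ contains no eigenvalue (the finitely many eigenvalues keep a positive distance from $J$), so the divergence of the full integral transfers to the $\Pi_\mathrm{ac}$ piece and the time integral is infinite. If you want to keep a contrapositive flavour, you should replace your local projection $\mathds{1}_J(H)$ by the global $\Pi_\mathrm{ac}(H)$ and handle the complementary piece with $\Pi_\mathrm{p}(H)$ as above; as written, your argument does not go through.
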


\begin{proof}
Let $v=CWu$. Parseval's theorem gives
\begin{equation*}
\int_0^\infty\norme{Ce^{itH}\Pi_\mathrm{ac}(H)v}_\Hi^2\mathrm{d}t=\frac{1}{2\pi}\lim_{\varepsilon\rightarrow 0^+}\int_{\sigma_\mathrm{ess}(H)}\norme{C\Res_{H}(\lambda-i\varepsilon)\Pi_\mathrm{ac}(H)v}_\Hi^2\mathrm{d}\lambda.
\end{equation*}
Fixing $\varepsilon>0$ small enough, we have 
\begin{align*}
&\int_{\sigma_\mathrm{ess}(H)}\norme{C\Res_{H}(\lambda-i\varepsilon)\Pi_\mathrm{ac}(H)v}^2_\Hi\mathrm{d}\lambda\geq \int_{J}\norme{C\Res_{H}(\lambda-i\varepsilon)\Pi_\mathrm{ac}(H)v}^2_\Hi\mathrm{d}\lambda\\
&\geq \frac{1}{2}\int_{J}\norme{C\Res_{H}(\lambda-i\varepsilon)v}^2_\Hi\mathrm{d}\lambda - \frac{1}{2}\int_{J}\norme{C\Res_{H}(\lambda-i\varepsilon)\Pi_\mathrm{p}(H)v}^2_\Hi\mathrm{d}\lambda.
\end{align*}
As $J$ does not contain any eigenvalue of $H$,
\begin{equation*}
\lim_{\varepsilon\rightarrow 0^+}\int_J\norme{C\Res_{H}(\lambda-i\varepsilon)\Pi_\mathrm{p}(H)v}^2\mathrm{d}\lambda
\end{equation*} exists in $\Hi$.
Moreover, by assumption
\begin{equation*}
\lim_{\varepsilon\rightarrow 0^+}\int_{J}\norme{C\Res_{H}(\lambda-i\varepsilon)v}^2_\Hi\mathrm{d}\lambda=\int_{J}\norme{C\Res_{H}(\lambda-i\varepsilon)CWu}^2_\Hi\mathrm{d}\lambda=\infty.
\end{equation*}
This prove that $C$ is not $H$-smooth. 
\end{proof}

Based on the previous proposition, we believe that, in general, the limit \eqref{eq:possible def} does not exist. Another argument supporting our Definition \ref{def:ref wave operators} come from the properties proven later in this section, namely that $W_\pm(H,H_0)$ exist, are injective and with ranges dense in $\Hi_{\mathrm{ac}}(H)$.

To conclude this subsection, we underline that we could also have defined the wave operators by using the regularizing operator $r(H)$ instead of $r_\pm(H)$ in Definition \ref{def:ref wave operators}. The advantage would have been to be able to rely on the functional calculus of \cite{FaFr22_06} to prove the existence of the wave operators. See the discussion about local wave operators in Section \ref{subsec:regularized wave operators} for related arguments proving the existence of wave operators using the functional calculus. The `inconvenience' of using $r(H)$ instead of $r_\pm(H)$ in Definition \ref{def:ref wave operators} is that `a part of' $r(H)$ is superfluous for the strong limit to exist, which may have the effect of somehow projecting out scattering states from the range of the wave operators.

We mention that such definitions of wave operators with an `identification' operator appearing between the two evolution groups has been considered by Kato in \cite{Kato67_05}, in a general setting where the evolution groups act on different Hilbert spaces. In the unitary case, various properties of the wave operators depending on the assumptions made on  the identification operator are given in \cite{YA98}. 

\subsubsection{Existence of $W_\pm(H,H_0)$ and $W_\pm(H^\star,H_0)$}
Now we turn to the proof of the existence of the regularized wave operators $W_\pm(H,H_0)$ and $W_\pm(H^\star,H_0)$. It suffices to apply Propositions \ref{prop:existence strong limit wave operator general 1} and  \ref{prop:existence strong limit wave operator general 2} with $A=r_\mp(H)\Pi_\mathrm{ac}(H)$. The following lemma shows that the conditions of Propositions \ref{prop:existence strong limit wave operator general 1} and  \ref{prop:existence strong limit wave operator general 2} are indeed satisfied.

\begin{lemma}\label{Cr Kato smooth H}
Suppose that Hypotheses \ref{hyp:LAP H0}-\ref{hyp:Conjugate operator} hold. Then there exists $c_0>0$ such that, for all $u\in\Hi$, 
\begin{equation}\label{eq:Cr Kato smooth H}
    \int_0^\infty\norme{Cr_\mp(H)e^{\pm itH}\Pi_\mathrm{ac}(H)u}_\Hi^2\mathrm{d}t\leq c_0\norme{u}_\Hi^2,
\end{equation}
    and 
\begin{equation}
\int_0^\infty\norme{C\bar{r}_\mp(H^\star)e^{\pm itH^\star}\Pi_\mathrm{ac}(H^\star) u}_\Hi^2\mathrm{d}t\leq c_0\norme{u}_\Hi^2.
\end{equation}\label{eq:Cr Kato smooth H star}
\end{lemma}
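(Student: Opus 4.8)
The plan is to prove the estimate \eqref{eq:Cr Kato smooth H} by combining the resolution-of-identity formula \eqref{eq:resolution identity}, Parseval's theorem applied to the regularized functional calculus \eqref{eq:reg_funct_calc}, and the local limiting absorption principle for $H$ derived from Hypothesis \ref{hyp:spectral singularities}. The estimate for $H^\star$ then follows by the exact same argument, using that $JHJ=H^\star$ and $JC=CJ$ (Hypothesis \ref{hyp:Conjugate operator}(iii)--(iv)), so that the weighted resolvent of $H^\star$ at $\bar\lambda\mp i\varepsilon$ is conjugate via $J$ to that of $H$ at $\lambda\pm i\varepsilon$; I will only write the proof for $H$.

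First I would reduce to the dense subspace $\Hi_\mathrm{ac}(H)=\Ran(\Pi_\mathrm{ac}(H))$ and, using the cover $\sigma_\mathrm{ess}(H)=\bigcup_{j=1}^n J_j\cup J_\infty$ introduced before \eqref{eq:defh}, localize: choosing a (smooth or just bounded Borel) partition adapted to this cover, it suffices to bound, for each $j\in\llbracket1,n\rrbracket\cup\{\infty\}$, the quantity $\int_0^\infty\|Cr_\mp(H)e^{\pm itH}(\mathds{1}_{J_j}\Pi_\mathrm{ac})(H)u\|_\Hi^2\,\mathrm{d}t$, where on $J_j$ the regularized projection $(h\mathds{1}_{J_j})(H)$ of \eqref{eq:def regularized spectral projection} is available with $h=r_j$ and $e^{\pm itH}(h\mathds{1}_{J_j})(H)$ is given by \eqref{eq:reg_funct_calc}. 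Writing $r_\mp(H)=r_j(H)\,q_j(H)$ where $q_j=r_\mp/r_j$ is bounded and holomorphic on a neighborhood of $J_j$ (it carries only the remaining singularities $\lambda_k$, $k\ne j$, which are off $J_j$), I would move $r_j(H)$ inside to form the regularized functional calculus on $J_j$, so that the integrand becomes essentially $\|C q_j(H)\,e^{\pm itH}(r_j\mathds{1}_{J_j})(H)u\|_\Hi^2$. Then apply Parseval/Plancherel in $t$ together with \eqref{eq:reg_funct_calc}: up to a constant, $\int_0^\infty\|\cdots\|^2\mathrm{d}t$ is controlled by $\lim_{\varepsilon\to0^+}\int_{J_j}\|C q_j(H)\Res_H(\lambda\pm i\varepsilon)r_j(\lambda\pm i\varepsilon)u\|_\Hi^2\,\mathrm{d}\lambda$; note $q_j$ is bounded on $J_j$ and commutes with everything, and crucially $|r_j(\lambda\pm i\varepsilon)|\,\|C\Res_H(\lambda\pm i\varepsilon)CW\|$ — hence, after absorbing $q_j$ and using that $C$ is bounded — $|r_j(\lambda\pm i\varepsilon)|\,\|C\Res_H(\lambda\pm i\varepsilon)\|$ restricted near $J_j$ is uniformly bounded by the local limiting absorption principle displayed just after \eqref{eq:rinfty}. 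For the piece on $J_\infty$ one uses $r_\infty(z)=(z-z_0)^{-\nu_\infty}$ and the bound \eqref{eq:def order spectral singularity infinity}; for a relatively compact interval away from all singularities one uses instead the second half of Hypothesis \ref{hyp:spectral singularities} directly.

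The delicate point is the passage from the weighted resolvent bound $|r_j(\lambda\pm i\varepsilon)|\,\|C\Res_H(\lambda\pm i\varepsilon)CW\|<\infty$ (which carries a factor $CW$ on the right) to a genuine $L^2$-in-$\lambda$ bound for $C r_\mp(H)\Res_H(\lambda\pm i\varepsilon)$ acting on an arbitrary $u\in\Hi$ — i.e. removing that right factor of $C$. Here one exploits the second resolvent identity $\Res_H(z)=\Res_0(z)-\Res_0(z)CWC\Res_H(z)$: applying $C$ on the left gives $C\Res_H(z)=C\Res_0(z)-C\Res_0(z)C\cdot WC\Res_H(z)$, and iterating once more turns the rightmost $C\Res_H(z)$ into $C\Res_0(z)+C\Res_0(z)CWC\Res_H(z)$, so that $Cr_j(z)\Res_H(z)u$ is expressed through $Cr_j(z)\Res_0(z)u$ (controlled in $L^2(J_j;\Hi)$ by Hypothesis \ref{hyp:LAP H0} and \eqref{eq:Smooth_Ka65_01}, since $r_j$ is bounded on $J_j$), together with the operator-norm-bounded pieces $C\Res_0(z)C$, $WC\Res_H(z)CW$ weighted by $r_j$, followed again by $C\Res_0(z)u$ in $L^2$. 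Thus the $L^2(\R_+;\Hi)$-norm of $Cr_\mp(H)e^{\pm itH}\Pi_\mathrm{ac}(H)u$ is bounded by a finite sum of products of uniformly bounded operator-norm factors with the $H_0$-smoothness norm $\|C\Res_0(\cdot\pm i0^+)u\|_{L^2(\R;\Hi)}\le (2\pi)^{1/2}c_0\|u\|_\Hi$ coming from \eqref{eq:Smooth_Ka65_01}. Taking $\varepsilon\to0^+$ (the limits exist in $L^2(J_j;\Hi)$ off the singularities, and the regularized limits exist across them by Hypothesis \ref{hyp:spectral singularities} and \cite{FaFr22_06}) and summing over the finitely many $j$ yields \eqref{eq:Cr Kato smooth H} with a constant $c_0$ depending only on the limiting absorption bounds, $\|C\|$, $\|W\|$, and the cover. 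I expect the bookkeeping of these resolvent expansions near the singularities — making sure every factor of $C$ that is not paired into a bounded $C\Res(\cdot)C$ block is instead paired with the free resolvent to feed into the $H_0$-smoothness bound, uniformly as $\varepsilon\to0^+$ — to be the main technical obstacle.
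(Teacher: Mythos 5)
Your overall strategy --- Parseval, a cover of the essential spectrum by intervals around each spectral singularity, the second resolvent identity to trade $C\Res_H(z)u$ for the weighted block $C\Res_H(z)CW$ acting on $C\Res_0(z)u$, and the $H_0$-smoothness of $C$ --- is the same as the paper's, and your treatment of the ``delicate point'' in your last paragraph is essentially the paper's estimate of the term \eqref{eq:proof lemma Cr_+ smooth Kato eq2}. But your reduction to that point has a genuine gap. You localize the \emph{state} by decomposing $\Pi_\mathrm{ac}(H)u$ with local spectral projections $(\mathds{1}_{J_j}\Pi_\mathrm{ac})(H)$ on intervals $J_j$ that contain a spectral singularity. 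No such unregularized projection exists in this framework: the weak limit \eqref{eq:def spectral projection} is only available for intervals free of spectral singularities, and on an interval containing $\lambda_j$ the only object at hand is the regularized projection $(h\mathds{1}_{J_j})(H)$ of \eqref{eq:def regularized spectral projection}, which is not a projection and does not give a resolution of $\Pi_\mathrm{ac}(H)$. The paper avoids this entirely: it applies Parseval first, obtaining $\lim_{\varepsilon\to0^+}\int_\R\norme{Cr_+(H)\Res_H(\lambda+i\varepsilon)\Pi_\mathrm{ac}(H)u}_\Hi^2\,\mathrm{d}\lambda$, and then splits the $\lambda$-integral over the $J_j$'s while keeping the single global vector; no localization of the state is ever needed.

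A second, related gap is the step where you ``move $r_j(H)$ inside'' the functional calculus, i.e.\ replace the operator $r_j(H)$ by the scalar $r_j(\lambda\pm i\varepsilon)$ in front of the resolvent. This commutation is not free and is where roughly half of the paper's proof lives: one must show that $\bigl(r_j(H)-r_j(\lambda\pm i\varepsilon)\bigr)\Res_H(\lambda\pm i\varepsilon)$ is uniformly bounded on $J_j$, which the paper does via the algebraic identity \eqref{eq:resolvant formula generalized}, in which the factor $(H-z)$ produced by the difference $r_j(H)-r_j(z)$ cancels the resolvent (this is the estimate of the term \eqref{eq:proof lemma Cr_+ smooth Kato eq1}). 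Without it, your Parseval expression is not yet in a form where the weighted limiting absorption principle applies. Relatedly, your parenthetical claim that $\abso{r_j(\lambda\pm i\varepsilon)}\,\norme{C\Res_H(\lambda\pm i\varepsilon)}_{\mathcal{B}(\Hi)}$ is uniformly bounded ``after using that $C$ is bounded'' is false: $C$ is injective but not boundedly invertible (it is relatively compact with respect to $H_0$), so the right factor $CW$ in Hypothesis \ref{hyp:spectral singularities} cannot be discarded in operator norm; your subsequent resolvent-identity argument is the correct repair, but the intermediate assertion should be deleted. Finally, the Parseval integral runs over all of $\R$, so the region below $\sigma(H_0)$ must also be controlled (the paper does this on $J_0=(-\infty,a]$ with a Neumann series for $\Res_H$); this piece is missing from your cover of $\sigma_\mathrm{ess}(H)$.
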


Note that if $H$ does not have any spectral singularity, then $r_\pm(H)=\mathrm{Id}$ and this lemma shows that $C$ is smooth with respect to $H$.

The proof of Lemma \ref{Cr Kato smooth H} is based on the following observation. If $\lambda_j$ is a spectral singularity of order $1$, then, for all $z\in\rho(H)\backslash\lbrace z_0\rbrace$, if $j\neq\infty$, 
\begin{align*}
r_j(H)-r_j(z)&=(H-\lambda_j)\Res_H(z_0)-(z-\lambda_j)(z-z_0)^{-1}\\
&=\left(H-z\right)\Res_H(z_0)(\lambda_j-z_0)(z-z_0)^{-1}.
\end{align*}
Likewise, if $j=\infty$,
\begin{align*}
r_\infty(H)-r_\infty(z)&=\Res_H(z_0)-(z-z_0)^{-1}\\
&=(H-z)\Res_H(z_0)(z-z_0)^{-1}.
\end{align*}
Next, if $\lambda_j$ is a spectral singularity of order $\mu_j$, then, if $j\neq\infty$, 
\begin{align}
r_j(H)-r_j(z)&=\left((H-\lambda_j)\Res_H(z_0)\right)^{\nu_j}-\left((\lambda_j-z)(\lambda_j-z_0)^{-1}\right)^{\nu_j}\nonumber\\
&=\left(\left(H-z\right)\Res_H(z_0)\left(\frac{\lambda_j-z_0}{z-z_0}\right)\right)^{\nu_j}\left(\sum_{k=0}^{\nu_j-1}\left((H-\lambda_j)\Res_H(z_0)\right)^{k}\left(\frac{z-\lambda_j}{z-z_0}\right)^{\nu_j-1-k}\right),\label{eq:resolvant formula generalized}
\end{align}
while if $j=\infty$,
\begin{align*}
r_\infty(H)-r_\infty(z)&=\left(\left(H-z\right)\Res_H(z_0)\left(z-z_0\right)^{-1}\right)^{\nu_\infty}\left(\sum_{k=0}^{\nu_\infty-1}\Res_H(z_0)^{k}\left(z-z_0\right)^{k+1-\nu_\infty}\right).
\end{align*}

Now we are ready to prove Lemma \ref{Cr Kato smooth H}.

\begin{proof}[Proof of Lemma \ref{Cr Kato smooth H}]
Let $u\in\Hi$. We will prove there exists $c>0$ such that
\begin{equation}\label{eq: proof lemma Cr_+ smooth kato}
\int_0^\infty\norme{Cr_+e^{-itH}\Pi_\mathrm{ac}(H)u}_\Hi^2\mathrm{d}t\leq c\norme{u}_\Hi^2.
\end{equation} 
The other inequalities can be obtained in the same way. Parseval's identity gives
\begin{equation*}
\int_0^\infty \norme{Cr_+(H)e^{-itH}\Pi_\mathrm{ac}(H)u}_\Hi^2\mathrm{d}t=\frac{1}{2\pi}\lim_{\varepsilon\rightarrow 0^+}\int_\R \norme{Cr_+(H)\Res_H(\lambda+i\varepsilon)\Pi_\mathrm{ac}(H)u}_\Hi^2\mathrm{d}\lambda
\end{equation*}
Let $n_+$ denote the number of incoming spectral singularities. We partition $\sigma_\mathrm{ess}(H)$ into a finite union of $n_++2$ interval $J_j$'s such that i) $J_0$ is an interval of the form $(-\infty,a]$ with $a<0$, $\abso{a}\geq \norme{V}_{\mathcal{B}(\Hi)}+1$, ii) for all $j\in\llbracket 1,n_+\rrbracket$, $J_j$ is a compact interval whose interior contains the incoming spectral singularity $\lambda_j$ and no other spectral singularities and iii) $J_{n_++1}$ is an interval of the form $[b,\infty)$, with $b$ large enough, containing a possible singularity at infinity. We then obtain that, for all $\varepsilon>0$ small enough,
\begin{align*}
\int_\R \norme{Cr_+(H)\Res_H(\lambda+i\varepsilon)\Pi_\mathrm{ac}(H)u}_\Hi^2\mathrm{d}\lambda=\sum_{j=0}^{n_++1}\int_{J_j}\norme{Cr_+(H)\Res_H(\lambda+i\varepsilon)\Pi_\mathrm{ac}(H)u}^2_\Hi\mathrm{d}\lambda.
\end{align*}
Therefore, to prove \eqref{eq: proof lemma Cr_+ smooth kato}, it suffices to prove that for all $j\in\llbracket 0,n_++1\rrbracket$, there exists $c_j>0$ such that 
\begin{equation}\label{eq:proof C smooth local estimate to prove}
\lim_{\varepsilon\rightarrow 0^+}\int_{J_j}\norme{Cr_+(H)\Res_H(\lambda+i\varepsilon)\Pi_\mathrm{ac}(H)u}_\Hi^2\mathrm{d}\lambda\leq c_j\norme{u}_\Hi^2. 
\end{equation}
First, we estimate the integral over $J_0$. As $H_0$ is self-adjoint and $J_0\subset (-\infty,0)$, we remark that for all $\lambda\in J_0$, for all $\varepsilon>0$ small enough,
\begin{equation*}
\norme{\Res_0(\lambda+i\varepsilon)}\leq \frac{1}{\mathrm{dist}(\lambda+i\varepsilon , \sigma(H_0))} \le  \frac{1}{\abso{\lambda+i\varepsilon}} .
\end{equation*}
Next, using the resolvent identity, we have 
\begin{align*}
\norme{\Res_H(\lambda+i\varepsilon)}_{\mathcal{B}(\Hi)}&\leq \norme{\Res_0(\lambda+i\varepsilon)}_{\mathcal{B}(\Hi)}+\norme{\Res_0(\lambda+i\varepsilon)V\Res_H(\lambda+i\varepsilon)}_{\mathcal{B}(\Hi)}\\
&\leq \frac{1}{\abso{\lambda+i\varepsilon}}+\norme{V}_{\mathcal{B}(\Hi)}\frac{1}{\abso{\lambda+i\varepsilon}}\norme{\Res_H(\lambda+i\varepsilon)}_{\mathcal{B}(\Hi)}\\
&\leq \frac{1}{\abso{\lambda+i\varepsilon}}+\norme{V}_{\mathcal{B}(\Hi)}\frac{1}{\abso{a}-\abso{\varepsilon}}\norme{\Res_H(\lambda+i\varepsilon)}_{\mathcal{B}(\Hi)}
\end{align*} 
As $\norme{V}_{\mathcal{B}(\Hi)}(\abso{a}-\abso{\varepsilon})^{-1}<1$ we have that for all $\lambda\in J_0$, for all $\varepsilon>0$ small enough,
\begin{equation*}
\norme{\Res_H(\lambda+i\varepsilon)}_{\mathcal{B}(\Hi)}\leq \frac{b}{\abso{\lambda+i\varepsilon}},
\end{equation*}
for some positive constant $b$. Therefore we deduce that there exists $c_0>0$ such that
\begin{equation*}
\int_{J_0}\norme{C\Res_H(\lambda+i\varepsilon)r_+(H)\Pi_\mathrm{ac}(H)u}^2\mathrm{d}\lambda\leq c_0\int_{J_0}\frac{1}{\abso{\lambda+i\varepsilon}^2}\mathrm{d}\lambda\norme{u}_\Hi^2. 
\end{equation*}
As $J_0$ is a closed interval not containing $0$, Lebesgue's dominated theorem shows that
\begin{equation*}
\lim_{\varepsilon\rightarrow 0^+}\int_{J_0}\frac{1}{\abso{\lambda+i\varepsilon}^2}\mathrm{d}\lambda
\end{equation*}
exists. This proves \eqref{eq:proof C smooth local estimate to prove} for $j=0$. 

Now let $j\in\llbracket 1,n_++1\rrbracket$. Writing 
\begin{equation*}
\tilde{r}_j(H):=\prod_{\substack{{k=1}\\k\neq j}}^{n_++1}r_k(H)\in\mathcal{B}(\Hi),
\end{equation*}
and setting $v=\tilde{r}_j(H)u$, we have 
\begin{align}
\int_{J_j}&\norme{Cr_+(H)\Res_H(\lambda+i\varepsilon)\Pi_\mathrm{ac}(H)u}_\Hi^2\mathrm{d}\lambda=\int_{J_j}\norme{Cr_j(H)\Res_H(\lambda+i\varepsilon)\Pi_\mathrm{ac}(H)v}_\Hi^2\mathrm{d}\lambda\nonumber\\
&\leq 2\int_{J_j}\norme{C\left(r_j(H)-r_j(\lambda+i\varepsilon)\right)\Res_H(\lambda+i\varepsilon)\Pi_\mathrm{ac}(H)v}_\Hi^2\mathrm{d}\lambda\label{eq:proof lemma Cr_+ smooth Kato eq1}\\
&+2\int_{J_j}\abso{r_j(\lambda+i\varepsilon)}^2\norme{C\Res_H(\lambda+i\varepsilon)\Pi_\mathrm{ac}(H)v}_\Hi^2\mathrm{d}\lambda.\label{eq:proof lemma Cr_+ smooth Kato eq2}
\end{align}
To estimate \eqref{eq:proof lemma Cr_+ smooth Kato eq1}, we use \eqref{eq:resolvant formula generalized}: for all $j\in\llbracket 1,n_+\rrbracket$, we have 
\begin{align*}
&\int_{J_j}\norme{C\left(r_j(H)-r_j(\lambda+i\varepsilon)\right)\Res_H(\lambda+i\varepsilon)\Pi_\mathrm{ac}(H)v}_\Hi^2\mathrm{d}\lambda\\
&=\int_{J_j}\norme{C\left(\Res_H(z_0)\left(\frac{\lambda_j-z_0}{\lambda+i\varepsilon-z_0}\right)\right)^{\nu_j}A(\lambda+i\varepsilon)(H-(\lambda+i\varepsilon))^{\nu_j-1}\Pi_\mathrm{ac}(H)v}_\Hi^2\mathrm{d}\lambda,
\end{align*}
where 
\begin{equation*}
A(\lambda+i\varepsilon):=\left(\sum_{k=0}^{\nu_j-1}\left((H-\lambda_j)\Res_H(z_0)\right)^{k}\left(\frac{\lambda+i\varepsilon-\lambda_j}{\lambda+i\varepsilon-z_0}\right)^{\nu_j-1-k}\right).
\end{equation*}
Clearly, for all $\lambda\in J_j$,
\begin{equation*}
\lim_{\varepsilon\rightarrow 0^+}C\left(\Res_H(z_0)\left(\frac{\lambda_j-z_0}{\lambda+i\varepsilon-z_0}\right)\right)^{\nu_j}A(\lambda+i\varepsilon)(H-(\lambda+i\varepsilon))^{\nu_j-1}\Pi_\mathrm{ac}(H)v,
\end{equation*}
exists in $\Hi$. We claim that there exists $\varepsilon_0>0$ such that 
\begin{equation}\label{eq:Cr smooth proof domi}
\sup_{\lambda\in J_j,\varepsilon<\varepsilon_0}\norme{C\left(\Res_H(z_0)\left(\frac{\lambda_j-z_0}{\lambda+i\varepsilon-z_0}\right)\right)^{\nu_j}A(\lambda+i\varepsilon)(H-(\lambda+i\varepsilon))^{\nu_j-1}\Pi_\mathrm{ac}(H)v}_\Hi<\infty.
\end{equation}
Indeed, for all $\varepsilon>0$ small enough it's clear that there exists $\varepsilon_1>0$ such that 
\begin{equation*}
	\sup_{\lambda\in J_j,0<\varepsilon<\varepsilon_1}\norme{A(\lambda+i\varepsilon)}_{\mathcal{B}(\Hi)}<\infty.
\end{equation*}
A direct computation shows that $\lambda\mapsto\Res_H(z_0)(H-\lambda+i\varepsilon)$ is uniformly bounded in $\mathcal{B}(\Hi)$ on $J_j$ with a bound independant of $\varepsilon$, thus there exists $b_j>0$ and $\varepsilon_2>0$ such that 
\begin{equation*}
	\sup_{\lambda\in J_j,0<\varepsilon<\varepsilon_2}\abso{\frac{\lambda_j-z_0}{\lambda+i\varepsilon-z_0}}^{\nu_j}\norme{C\Res_H(z_0)^{\nu_j}(H-(\lambda+i\varepsilon))^{\nu_j-1}\Pi_\mathrm{ac}(H)v}_\Hi<b_j\norme{v}_\Hi.
\end{equation*}
So we get \eqref{eq:Cr smooth proof domi}. Finally, using dominated convergence theorem, the map 
\begin{equation*}
	J_j\ni\lambda\mapsto C\left(\Res_H(z_0)\left(\frac{\lambda_j-z_0}{\lambda-z_0}\right)\right)^{\nu_j}A(\lambda)(H-\lambda)^{\nu_j-1}\Pi_\mathrm{ac}(H)v
\end{equation*}
is in $L^2(J_j,\Hi)$ and there exists $a_j>0$ such that
\begin{align}
&\int_{J_j}\norme{C\left(\Res_H(z_0)\left(\frac{\lambda_j-z_0}{\lambda-z_0}\right)\right)^{\nu_j}A(\lambda)(H-\lambda)^{\nu_j-1}\Pi_\mathrm{ac}(H)v}_\Hi^2\mathrm{d}\lambda\nonumber\\
&=\lim_{\varepsilon\rightarrow 0^+}\int_{J_j}\norme{C\left(\Res_H(z_0)\left(\frac{\lambda_j-z_0}{\lambda+i\varepsilon-z_0}\right)\right)^{\nu_j}A(\lambda+i\varepsilon)(H-(\lambda+i\varepsilon))^{\nu_j-1}\Pi_\mathrm{ac}(H)v}_\Hi^2\mathrm{d}\lambda\nonumber\\
&\leq  a_j\norme{u}_\Hi^2\label{eq:proof lemma Cr_+ smooth Kato estim1}.
\end{align}
where in the last inequality we used that $v=\tilde{r}_j(H)u$ and that $r_j(H)$ is bounded.\\
For the integral over $J_{n_++1}$ we have
\begin{align*}
&\int_{J_{n_++1}}\norme{C\left(r_\infty(H)-r_\infty(\lambda+i\varepsilon)\right)\Res_H(\lambda+i\varepsilon)\Pi_\mathrm{ac}(H)v}_\Hi^2\mathrm{d}\lambda\\
&=\int_{J_{n_++1}}\norme{CA(\lambda+i\varepsilon)\left(\Res_H(z_0)\left(\lambda+i\varepsilon-z_0\right)^{-1}\right)^{\nu_\infty}(H-(\lambda+i\varepsilon))^{\nu_\infty-1}\Pi_\mathrm{ac}(H)v}_\Hi^2\mathrm{d}\lambda,
\end{align*}
where 
\begin{equation*}
A(\lambda+i\varepsilon):=\left(\sum_{k=0}^{\nu_\infty-1}\Res_H(z_0)^{k}\left(\lambda+i\varepsilon-z_0\right)^{k+1-\nu_\infty}\right).
\end{equation*}
It is clear that for all $j\in J_{n_++1}$, 
\begin{equation*}
\lim_{\varepsilon\rightarrow 0^+}CA(\lambda+i\varepsilon)\left(\Res_H(z_0)\left(\lambda+i\varepsilon-z_0\right)^{-1}\right)^{\nu_\infty}(H-(\lambda+i\varepsilon))^{\nu_\infty-1}\Pi_\mathrm{ac}(H)v
\end{equation*}
exists in $\Hi$. Moreover, there exists $\varepsilon_1>0$ such that for all $\lambda\in J_{n_++1}$, for all $\varepsilon>0$ smaller than $\varepsilon_1$, 
\begin{equation*}
\norme{CA(\lambda+i\varepsilon)}_{\mathcal{B}(\Hi)}\leq \left(\nu_\infty\sup_{k\in\llbracket 0,\nu_\infty-1\rrbracket}\left(\abso{\lambda+i\varepsilon_1-z_0}^{k+1-\nu_\infty}\norme{C\Res_H(z_0)}^k_{\mathcal{B}(\Hi)}\right)\right)<\infty,
\end{equation*}
and a direct computation shows that 
\begin{equation*}
	\Res_H(z_0)(H-(\lambda+i\varepsilon))=\Id+\Res_H(z_0)(z_0-(\lambda+i\varepsilon)).
\end{equation*}
So for all $\varepsilon$ smaller than $\varepsilon_1$,
\begin{align*}
&\norme{\left(\Res_H(z_0)\left(\lambda+i\varepsilon-z_0\right)^{-1}\right)^{\nu_\infty}(H-(\lambda+i\varepsilon))^{\nu_\infty-1}\Pi_\mathrm{ac}(H)v}_\Hi\\
&\leq \abso{(\lambda+i\varepsilon_1-z_0)}^{-\nu_\infty}\norme{\Res_H(z_0)}_{\mathcal{B}(\Hi)}\left(1+\norme{\Res_H(z_0)}_{\mathcal{B}(\Hi)}\abso{z_0-(\lambda+i\varepsilon_1)}\right)^{\nu_\infty-1}\norme{\Pi_\mathrm{ac}(H)v}_\Hi\\
&=\mathcal{O}(\lambda^{-1}),\quad (\lambda\rightarrow\infty).
\end{align*}
Thus the map
\begin{equation*}
J_{n_++1}\ni\lambda\mapsto\abso{(\lambda+i\varepsilon_1-z_0)}^{-\nu_\infty}\norme{\Res_H(z_0)}_{\mathcal{B}(\Hi)}\left(1+\norme{\Res_H(z_0)}_{\mathcal{B}(\Hi)}\abso{z_0-(\lambda+i\varepsilon_1)}\right)^{\nu_\infty-1}\norme{\Pi_\mathrm{ac}(H)v}_\Hi,
\end{equation*}
is in $L^2(J_{n_++1},\Hi)$. Finally with Lebesgue's dominated convergence theorem, 
\begin{align*}
\lim_{\varepsilon\rightarrow 0^+}\int_{J_{n_++1}}\norme{C\left(r_\infty(H)-r_\infty(\lambda+i\varepsilon)\right)\Res_H(\lambda+i\varepsilon)\Pi_\mathrm{ac}(H)v}_\Hi^2\mathrm{d}\lambda
\end{align*}
exists and there exists $a_\infty>0$ such that 
\begin{align}
&\int_{J_{n_++1}}\norme{CA(\lambda)\left(\Res_H(z_0)\left(\lambda-z_0\right)^{-1}\right)^{\nu_\infty}(H-\lambda)^{\nu_\infty-1}\Pi_\mathrm{ac}(H)v}_\Hi^2\mathrm{d}\lambda\nonumber\\
&=\lim_{\varepsilon\rightarrow 0^+}\int_{J_{n_++1}}\norme{CA(\lambda+i\varepsilon)\left(\Res_H(z_0)\left(\lambda+i\varepsilon-z_0\right)^{-1}\right)^{\nu_\infty}(H-(\lambda+i\varepsilon))^{\nu_\infty-1}\Pi_\mathrm{ac}(H)v}_\Hi^2\mathrm{d}\lambda\nonumber\\
&\leq a_\infty\norme{u}^2_\Hi\label{eq:proof lemma Cr_+ smooth Kato estim2}.
\end{align}

It remains to estimate \eqref{eq:proof lemma Cr_+ smooth Kato eq2}. We use again the resolvent identity. For all $j\in\llbracket 1,n_++1\rrbracket$, and for all $\varepsilon>0$ small enough, we have
\begin{align*}
&\int_{J_j}\abso{r_j(\lambda+i\varepsilon)}^2\norme{C\Res_H(\lambda+i\varepsilon)\Pi_\mathrm{ac}(H)v}^2_\Hi\mathrm{d}\lambda\\
=&\int_{J_j}\abso{r_j(\lambda+i\varepsilon)}^2\norme{C\Res_0(\lambda+i\varepsilon)\Pi_\mathrm{ac}(H)v}^2_\Hi\mathrm{d}\lambda\\
&+\int_{J_j}\abso{r_j(\lambda+i\varepsilon)}^2\norme{C\Res_H(\lambda+i\varepsilon)CWC\Res_0(\lambda+i\varepsilon)\Pi_\mathrm{ac}(H)v}^2_\Hi\mathrm{d}\lambda.
\end{align*}
As for all $j\in\llbracket 1,n_++1\rrbracket$, 
\begin{equation*}
\sup_{\substack{\lambda\in J_j\\0<\varepsilon<\varepsilon_0}}\abso{r_j(\lambda+i\varepsilon)}\norme{C\Res_H(\lambda+i\varepsilon)CW}_{\mathcal{B}(\Hi)}<\infty
\end{equation*}
by Hypothesis \ref{hyp:spectral singularities}, and since $C$ is relatively smooth with respect to $H_0$, there exists $b_j>0$ such that 
\begin{align}
&\lim_{\varepsilon\rightarrow 0^+}\int_{J_j}\abso{r_j(\lambda+i\varepsilon)}^2\norme{C\Res_0(\lambda+i\varepsilon)\Pi_\mathrm{ac}(H)v}_\Hi^2\mathrm{d}\lambda\nonumber\\
&+\lim_{\varepsilon\rightarrow 0^+}\int_{J_j}\abso{r_j(\lambda+i\varepsilon)}^2\norme{C\Res_H(\lambda+i\varepsilon)CWC\Res_0(\lambda+i\varepsilon)\Pi_\mathrm{ac}(H)v}^2_\Hi\mathrm{d}\lambda\nonumber\\
&\leq b_j\norme{u}^2_\Hi.\label{eq:proof lemma Cr_+ smooth Kato estim3}
\end{align}
Combining \eqref{eq:proof lemma Cr_+ smooth Kato estim1},  \eqref{eq:proof lemma Cr_+ smooth Kato estim2} and \eqref{eq:proof lemma Cr_+ smooth Kato estim3}, we obtain \eqref{eq: proof lemma Cr_+ smooth kato}, which concludes the proof of the lemma.
\end{proof}

Applying Propositions \ref{prop:existence strong limit wave operator general 1} and \ref{prop:existence strong limit wave operator general 2}, the existence of the wave operators $W_\pm(H,H_0)$ then follows:

\begin{proposition}\label{prop:wave operator exist}
Suppose that Hypotheses \ref{hyp:LAP H0}-\ref{hyp:Conjugate operator} hold. Then the wave operators $W_\pm(H,H_0)$ and $W_\pm(H^\star,H_0)$ exist. 
\end{proposition}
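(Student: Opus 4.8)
The plan is to read off Proposition \ref{prop:wave operator exist} directly from the abstract Cook-type criteria of Propositions \ref{prop:existence strong limit wave operator general 1} and \ref{prop:existence strong limit wave operator general 2}, feeding them the smoothness bounds of Lemma \ref{Cr Kato smooth H}. The one structural point that makes the sandwiched form of Definition \ref{def:ref wave operators} fit these criteria is that $\Pi_\mathrm{ac}(H)r_\mp(H)$ is a function of $H$: since $\Pi_\mathrm{p}(H)=\sum_{\lambda\in\sigma_\mathrm{p}(H)}\Pi_\lambda(H)$ commutes with $H$ and with $\Res_H(z_0)$, the projection $\Pi_\mathrm{ac}(H)=\mathrm{Id}-\Pi_\mathrm{p}(H)$ commutes with $e^{itH}$ and with $r_\mp(H)$; moreover $r_\mp(H)\in\mathcal{B}(\Hi)$ because $z_0\in\rho(H)$ and $(H-\lambda_j)\Res_H(z_0)=\mathrm{Id}+(z_0-\lambda_j)\Res_H(z_0)$ is bounded, so each factor $r_j(H)=\big((H-\lambda_j)\Res_H(z_0)\big)^{\nu_j}$ and $r_\infty(H)=\Res_H(z_0)^{\nu_\infty}$ is bounded. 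The same remarks apply with $H$ replaced by $H^\star$.

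For $W_\pm(H,H_0)$ I would set $A:=r_\mp(H)\Pi_\mathrm{ac}(H)=\Pi_\mathrm{ac}(H)r_\mp(H)\in\mathcal{B}(\Hi)$. Since every $\lambda_j$ is real, $r_\mp(H)^\star=\bar r_\mp(H^\star)$, and since $\Pi_\mathrm{ac}(H)^\star=\mathrm{Id}-\Pi_\mathrm{p}(H)^\star=\mathrm{Id}-\Pi_\mathrm{p}(H^\star)=\Pi_\mathrm{ac}(H^\star)$, we get $A^\star=\bar r_\mp(H^\star)\Pi_\mathrm{ac}(H^\star)$. Commuting $\bar r_\mp(H^\star)$ past $e^{\pm itH^\star}$, the hypothesis of Proposition \ref{prop:existence strong limit wave operator general 1} for this $A$, namely $\int_0^\infty\norme{Ce^{\pm itH^\star}A^\star u}_\Hi^2\,\mathrm dt\le c\norme{u}_\Hi^2$, is precisely the second estimate of Lemma \ref{Cr Kato smooth H}. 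Proposition \ref{prop:existence strong limit wave operator general 1} then gives the existence of $\slim_{t\to\pm\infty}Ae^{itH}e^{-itH_0}$; because $A$ commutes with $e^{itH}$ this strong limit equals $\slim_{t\to\pm\infty}e^{itH}\Pi_\mathrm{ac}(H)r_\mp(H)e^{-itH_0}=W_\pm(H,H_0)$.

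For $W_\pm(H^\star,H_0)$ I would argue symmetrically, using that the pair $(H^\star,H_0)$ has the same structure: $H^\star=H_0+CW^\star C$ with $C$ still relatively $H_0$-smooth and $(H^\star)^\star=H$, so Propositions \ref{prop:existence strong limit wave operator general 1}--\ref{prop:existence strong limit wave operator general 2} apply verbatim with $H$ and $H^\star$ interchanged. Taking $A:=\bar r_\mp(H^\star)\Pi_\mathrm{ac}(H^\star)\in\mathcal{B}(\Hi)$, whose adjoint is $r_\mp(H)\Pi_\mathrm{ac}(H)$, the required hypothesis becomes $\int_0^\infty\norme{Ce^{\pm itH}r_\mp(H)\Pi_\mathrm{ac}(H)u}_\Hi^2\,\mathrm dt\le c\norme{u}_\Hi^2$, which is the first estimate of Lemma \ref{Cr Kato smooth H}; commuting $A$ past $e^{itH^\star}$ identifies the resulting strong limit with $W_\pm(H^\star,H_0)$. (If one insists on the literal formula of Definition \ref{def:ref wave operators}, note that the projection there should be $\Pi_\mathrm{ac}(H^\star)$, as is consistent with Lemma \ref{Cr Kato smooth H}.)

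At this stage there is essentially no hard step remaining: the whole analytic difficulty — the local limiting absorption principle near each spectral singularity, the resolvent expansion \eqref{eq:resolvant formula generalized}, and the control of $\Res_H$ for large real part — has been isolated in Lemma \ref{Cr Kato smooth H}. The only things to keep careful track of are bookkeeping: verifying $r_\mp(H)^\star=\bar r_\mp(H^\star)$ and $\Pi_\mathrm{ac}(H)^\star=\Pi_\mathrm{ac}(H^\star)$, checking the commutations used to pass between the operator ordering appearing in Lemma \ref{Cr Kato smooth H} and the ordering in the hypotheses of Propositions \ref{prop:existence strong limit wave operator general 1} and \ref{prop:existence strong limit wave operator general 2}, and matching the $\pm$ signs between hypothesis and conclusion in each of those propositions.
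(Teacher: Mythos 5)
Your proposal is correct and follows exactly the paper's own route: the paper's proof consists precisely of remarking that $\Pi_\mathrm{ac}(H)r_\mp(H)$ commutes with $H$ and then invoking Lemma \ref{Cr Kato smooth H} together with Propositions \ref{prop:existence strong limit wave operator general 1} and \ref{prop:existence strong limit wave operator general 2} with $A=r_\mp(H)\Pi_\mathrm{ac}(H)$; you merely spell out the adjoint and commutation bookkeeping ($r_\mp(H)^\star=\bar r_\mp(H^\star)$, $\Pi_\mathrm{ac}(H)^\star=\Pi_\mathrm{ac}(H^\star)$) that the paper leaves implicit. Your parenthetical observation that the second formula in Definition \ref{def:ref wave operators} should read $\Pi_\mathrm{ac}(H^\star)$ is also well taken and consistent with Lemma \ref{Cr Kato smooth H}.
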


\begin{proof}
It suffices to remark that $\Pi_\mathrm{ac}(H)r_\pm(H)$ commute with $H$ and then apply Lemma \ref{Cr Kato smooth H} together with Propositions \ref{prop:existence strong limit wave operator general 1} and \ref{prop:existence strong limit wave operator general 2}. 
\end{proof}

\subsubsection{Properties of the wave operators}

We derive in this section the main properties of the regularized wave operators. First we have that they satisfy the usual intertwining properties. 

\begin{proposition}\label{prop:interwining property}
Suppose that Hypotheses \ref{hyp:LAP H0}-\ref{hyp:Conjugate operator} hold.
Then for all $t\in\R$, for all $u\in\Hi$,
\begin{equation}\label{eq:interwining property group}
     e^{itH}W_\pm(H,H_0)u=W_\pm(H,H_0)e^{itH_0}u \quad \text{and}\quad e^{itH^\star}W_\pm(H^\star,H_0)u=W_\pm(H^\star,H_0)e^{itH_0}u.
\end{equation}
In particular, $W_\pm(H,H_0)\mathcal{D}(H_0)\subset\mathcal{D}(H_0)$,   $W_\pm(H^\star,H_0)\mathcal{D}(H_0)\subset\mathcal{D}(H_0)$ and for all $ u\in\mathcal{D}(H_0)$,
\begin{equation}\label{eq:interwining property operator}
     HW_\pm(H,H_0)u=W_\pm(H,H_0)H_0u\quad \text{and}\quad H^\star W_\pm(H^\star,H_0)u=W_\pm(H^\star,H_0)H_0u.
\end{equation}
\end{proposition}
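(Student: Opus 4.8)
The plan is to read off the group intertwining relation \eqref{eq:interwining property group} directly from the definition of $W_\pm(H,H_0)$ as a strong limit, using only the one-parameter group property of $\{e^{itH}\}_{t\in\R}$ and $\{e^{itH_0}\}_{t\in\R}$, and then to upgrade it to the generator-level relation \eqref{eq:interwining property operator} by a standard differentiation argument.

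First I would fix $s\in\R$ and $u\in\Hi$. Since $e^{isH}\in\mathcal{B}(\Hi)$, it may be moved inside the strong limit, and using the group property $e^{isH}e^{itH}=e^{i(s+t)H}$ together with $e^{-itH_0}=e^{-i(t+s)H_0}e^{isH_0}$ one gets
\begin{align*}
e^{isH}W_\pm(H,H_0)u &= \slim_{t\rightarrow\pm\infty}e^{isH}e^{itH}\Pi_\mathrm{ac}(H)r_\mp(H)e^{-itH_0}u\\
&=\slim_{t\rightarrow\pm\infty}e^{i(t+s)H}\Pi_\mathrm{ac}(H)r_\mp(H)e^{-i(t+s)H_0}\big(e^{isH_0}u\big).
\end{align*}
Setting $w:=e^{isH_0}u\in\Hi$ and performing the change of variable $t'=t+s$ (which does not affect the requirement $t'\rightarrow\pm\infty$), the right-hand side becomes $W_\pm(H,H_0)w=W_\pm(H,H_0)e^{isH_0}u$, which is the first identity in \eqref{eq:interwining property group}. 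Note that no commutation of $\Pi_\mathrm{ac}(H)r_\mp(H)$ with $H_0$ is used; only the group laws and the continuity of $e^{isH}$. The identity for $W_\pm(H^\star,H_0)$ is obtained verbatim, replacing $H$ by $H^\star$ and $r_\mp(H)$ by $\bar{r}_\mp(H^\star)$.

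To deduce \eqref{eq:interwining property operator}, I would fix $u\in\mathcal{D}(H_0)$ and differentiate the identity $e^{itH}W_\pm(H,H_0)u=W_\pm(H,H_0)e^{itH_0}u$ at $t=0$. Since $u\in\mathcal{D}(H_0)$, the orbit $t\mapsto e^{itH_0}u$ is differentiable at $0$ with derivative $iH_0u$, and since $W_\pm(H,H_0)$ is bounded, the right-hand side is differentiable at $0$ with derivative $iW_\pm(H,H_0)H_0u$. Hence $t\mapsto e^{itH}\big(W_\pm(H,H_0)u\big)$ is differentiable at $0$; because $iH$ is the infinitesimal generator of the $C_0$-group $\{e^{itH}\}_{t\in\R}$, this forces $W_\pm(H,H_0)u\in\mathcal{D}(H)=\mathcal{D}(H_0)$ with $iHW_\pm(H,H_0)u=iW_\pm(H,H_0)H_0u$, i.e.\ $HW_\pm(H,H_0)u=W_\pm(H,H_0)H_0u$; in particular $W_\pm(H,H_0)\mathcal{D}(H_0)\subset\mathcal{D}(H_0)$. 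The argument for $W_\pm(H^\star,H_0)$ is identical, with $iH^\star$ generating $\{e^{itH^\star}\}_{t\in\R}$.

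There is no serious obstacle here: everything rests on the group laws, the fact that strong limits are preserved under left multiplication by a fixed bounded operator, and the elementary $C_0$-group fact that differentiability of an orbit at $0$ entails membership in the domain of the generator (see e.g.\ \cite{EnNa20_01}). The only point deserving a line of care is this last passage from the bounded, group-level statement to the unbounded, generator-level statement, since $\{e^{itH}\}$ is not uniformly bounded; but the generator characterization used does not require uniform boundedness, only strong continuity, so it applies without change.
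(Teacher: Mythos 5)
Your proof is correct and follows essentially the same route as the paper's own argument in Appendix~\ref{App:existence and property wave operators}: the group relation \eqref{eq:interwining property group} is read off from the definition via the group laws and a shift of the time variable, and \eqref{eq:interwining property operator} then follows from the characterization of $\mathcal{D}(H_0)=\mathcal{D}(H)$ through differentiability of the orbit at $t=0$. Your explicit remark that this generator characterization needs only strong continuity, not uniform boundedness of $\{e^{itH}\}$, is a worthwhile clarification but not a departure from the paper's method.
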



The proof is standard. We will recall it in Appendix \ref{App:existence and property wave operators}. 

In the following proposition we show that the regularized wave operators are injective and that their ranges are dense in the absolutely continuous spectral subspace. 

\begin{proposition}\label{prop:range_and_ker_of_wave_operator}
Suppose Hypotheses \ref{hyp:LAP H0}-\ref{hyp:Conjugate operator} hold.
Then $W_\pm(H,H_0)$ and $W_\pm(H^\star,H_0)$ are injective and we have
\begin{equation*}
    \Ran(W_\pm(H,H_0))^\mathrm{cl}=\Hi_\mathrm{ac}(H)\quad \text{and}\quad \Ran(W_\pm(H^\star,H_0))^\mathrm{cl}=\Hi_\mathrm{ac}(H^\star).
\end{equation*}
\end{proposition}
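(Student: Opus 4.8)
The plan is to establish injectivity and the density of the range by a careful use of the intertwining property (Proposition \ref{prop:interwining property}), the $J$-orthogonal decomposition $\Hi = \Hi_\mathrm{ac}(H)\oplus\Hi_\mathrm{p}(H)$ from \cite{FaFr22_06}, and the existence of wave operators ``in both directions''. More precisely, the key additional object is an \emph{inverse wave operator} built from Propositions \ref{prop:existence strong limit wave operator general 1} and \ref{prop:existence strong limit wave operator general 2}: setting $A=r_\mp(H)\Pi_\mathrm{ac}(H)$, the second strong limit in Proposition \ref{prop:existence strong limit wave operator general 2} gives the existence of $\tilde W_\pm := \slim_{t\to\pm\infty} e^{itH_0}e^{-itH}r_\mp(H)\Pi_\mathrm{ac}(H)$. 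The first step is to identify $\tilde W_\pm$ as a one-sided inverse of $W_\pm(H,H_0)$: composing the two strong limits and using that $e^{-itH_0}$ is unitary, I would show that $\tilde W_\pm W_\pm(H,H_0) = r_\mp(H_0)\cdot r_\mp(H)^{-1}$-type expressions collapse — more carefully, since $r_\mp$ has no zeros on $\sigma(H_0)=\sigma_\mathrm{ess}(H)$ away from the spectral singularities and $H_0$ has purely a.c. spectrum, $r_\mp(H_0)$ is a bounded \emph{invertible} operator, and one gets $\tilde W_\pm W_\pm(H,H_0) = r_\mp(H_0)$ on $\Hi$ (or, if one instead incorporates $r_\mp(H_0)^{-1}$ into the definition, exactly $\mathrm{Id}$). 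Injectivity of $W_\pm(H,H_0)$ is then immediate since it has a bounded left inverse up to the invertible factor $r_\mp(H_0)$.

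The second step is the range statement. The inclusion $\Ran(W_\pm(H,H_0))\subset\Hi_\mathrm{ac}(H)$ should follow because $W_\pm(H,H_0)$ already carries the factor $\Pi_\mathrm{ac}(H)$ and, by the intertwining property, $e^{itH}$ restricted to $\Ran(W_\pm(H,H_0))$ behaves like a unitary group conjugate to $e^{itH_0}$, so the Kato-smoothness estimate of Lemma \ref{Cr Kato smooth H} together with the characterization \eqref{eq:def_M(H)} of $\mathcal{M}(H)$ places these vectors in $\Hi_\mathrm{ac}(H)$; alternatively one notes $\Pi_\mathrm{p}(H)W_\pm(H,H_0)=0$ directly, again via the decomposition of \cite{FaFr22_06}. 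For density, I would argue by duality/orthogonality: suppose $u\in\Hi_\mathrm{ac}(H)$ is such that $\scal{W_\pm(H,H_0)v}{u}_\Hi=0$ for all $v\in\Hi$. Using the adjoint relation (to be extracted from Proposition \ref{prop:existence strong limit wave operator general 1}, whose \emph{second} limit is precisely $\slim e^{itH_0}e^{-itH^\star}A^\star$ with $A^\star = \Pi_\mathrm{ac}(H)^*\bar r_\mp(H^\star) = \Pi_\mathrm{ac}(H^\star)\bar r_\mp(H^\star)$), this orthogonality forces $u$ to lie in the kernel of the strong limit $\slim_{t\to\pm\infty} e^{itH_0}e^{-itH^\star}\bar r_\mp(H^\star)\Pi_\mathrm{ac}(H^\star)$; but that operator is (by the same argument as in step one, applied to $H^\star$) injective on $\Hi_\mathrm{ac}(H^\star)$, and $u\in\Hi_\mathrm{ac}(H)$ is mapped into $\Hi_\mathrm{ac}(H^\star)$ after pairing — one must track the $J$-pairing here since $\Hi_\mathrm{ac}(H)^\perp = \Hi_\mathrm{p}(H^\star)$, not $\Hi_\mathrm{p}(H)$. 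Concluding $u=0$ gives $\Ran(W_\pm(H,H_0))^\mathrm{cl}=\Hi_\mathrm{ac}(H)$. The statement for $W_\pm(H^\star,H_0)$ is obtained by replacing $H$ with $H^\star$ throughout (note $(H^\star)^\star=H$, and Hypothesis \ref{hyp:Conjugate operator} is symmetric in $H,H^\star$).

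The main obstacle I anticipate is bookkeeping around the two distinct ``orthogonality'' structures: the Hilbert-space adjoint, under which $\Pi_\mathrm{ac}(H)^* = \Pi_\mathrm{ac}(H^\star)$ and $\Hi_\mathrm{ac}(H)^\perp=\Hi_\mathrm{p}(H^\star)$, versus the $J$-bilinear pairing under which the decomposition $\Hi=\Hi_\mathrm{ac}(H)\oplus\Hi_\mathrm{p}(H)$ is orthogonal. One must be scrupulous about which pairing is used at each step — the density argument wants a Hilbert-space adjoint (hence $H^\star$ and $\Pi_\mathrm{ac}(H^\star)$ appear), while the ``range avoids the point spectrum'' argument is cleanest with the $J$-decomposition. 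A secondary technical point is verifying that $\tilde W_\pm W_\pm(H,H_0)$ really equals $r_\mp(H_0)$ rather than something with an extra spectral projection: this needs the fact (from \cite{FaFr22_06}, via \eqref{eq:functional_calculus_normal}) that $e^{itH}\Pi_\mathrm{ac}(H)$ admits a genuine bounded functional calculus so that Abelian/ergodic limits can be computed, and the observation that on $\Hi_\mathrm{ac}(H_0)=\Hi$ the intertwining lets one pull $e^{-itH}r_\mp(H)\Pi_\mathrm{ac}(H)e^{itH_0}$ back to a quantity converging strongly by Lemma \ref{Cr Kato smooth H}. Once those two points are pinned down, the rest is routine.
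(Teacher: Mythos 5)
Your overall architecture (injectivity via a reverse wave operator whose composition with $W_\pm(H,H_0)$ is a function of $H_0$; density via the adjoint identity $W_\pm(H,H_0)^\star=W_\pm(H_0,H^\star)$ and the kernel computation $\Ker(W_\pm(H_0,H^\star))=\Hi_\mathrm{p}(H^\star)=\Hi_\mathrm{ac}(H)^\perp$) is essentially the paper's, and the density half of your sketch is sound provided you use, as the paper does, the identification $\Hi_\mathrm{ads}^\pm(H^\star)=\Hi_\mathrm{p}^\pm(H^\star)$ and the direct sum $\Hi_\mathrm{ac}(H^\star)\oplus\Hi_\mathrm{p}(H^\star)=\Hi$ from \cite{FaFr22_06} to show that $\bar r_\pm(H^\star)\Pi_\mathrm{ac}(H^\star)u=0$ forces $u\in\Hi_\mathrm{p}(H^\star)$.

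There is, however, a genuine error in the injectivity step: $r_\mp(H_0)$ is \emph{not} a bounded invertible operator whenever $H$ has a spectral singularity. The zeros of $r_\mp$ are precisely the spectral singularities $\lambda_j$, which by definition lie in $\sigma_{\mathrm{ess}}(H)=\sigma_{\mathrm{ess}}(H_0)\subset\sigma(H_0)$, so the spectral mapping theorem gives $0\in\sigma(r_\mp(H_0))$; likewise $r_\infty(H_0)=(H_0-z_0)^{-\nu_\infty}$ is never boundedly invertible. If your claim were true, the composition identity would show $W_\pm(H,H_0)$ is bounded below, hence has closed range, i.e.\ is asymptotically complete unconditionally --- contradicting the last two propositions of Section \ref{sec:Assymptotic completeness}, which show completeness \emph{fails} in the presence of spectral singularities. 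The argument is salvageable because all you need is \emph{injectivity} of $r_\mp(H_0)$ (which holds since $H_0$ has purely absolutely continuous spectrum and the $\lambda_j$ are not eigenvalues): from $W_\pm(H,H_0)u=0$ and the composition identity one gets $r(H_0)^k u=0$ for some $k$, hence $u=0$; but the phrase ``left inverse up to an invertible factor'' must go. Two further points need repair. First, the reverse wave operator for the limit $t\to+\infty$ must carry $r_+(H)$, not $r_-(H)$: Lemma \ref{Cr Kato smooth H} controls $\int_0^\infty\|Cr_+(H)e^{-itH}\Pi_\mathrm{ac}(H)u\|^2\,\mathrm{d}t$ but not the same integral with $r_-$, so your $\tilde W_+$ as written need not exist; with the correct choice the composition produces $r(H_0)=r_+(H_0)r_-(H_0)$, not $r_-(H_0)$. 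Second, identifying the limit of $e^{itH_0}r(H)\Pi_\mathrm{ac}(H)e^{-itH_0}u$ as $r(H_0)u$ requires the compactness of $r(H)-r(H_0)$ and of $\Pi_\mathrm{p}(H)$ together with $e^{-itH_0}u\rightharpoonup 0$; this is exactly the compactness step the paper performs (its own proof bypasses the reverse wave operator entirely, using instead the lower bound $\|e^{itH}\Pi_\mathrm{ac}(H)r(H)v\|_\Hi\geq\|\Pi_\mathrm{ac}(H)r(H)^2v\|_\Hi$ from the functional calculus, but both routes terminate in the same way: $\|e^{-itH_0}r(H_0)^2u\|_\Hi\to0$ and injectivity of $r(H_0)^2$).
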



\begin{proof}
	We show that $W_+(H,H_0)$ is injective. The proof for the other wave operators is similar. To show that $W_+(H,H_0)$ is injective, we compute its kernel. Let $u\in\Ker(W_+(H,H_0))$ then
	\begin{equation*}
		\lim_{t\rightarrow\infty}\norme{e^{itH}\Pi_\mathrm{ac}(H)r_-(H)e^{-itH_0}u}_\Hi=0.
	\end{equation*}
	Since $r(H)=r_-(H)\tilde r(H)$ for some bounded operator $\tilde r(H)$, we also have that
	\begin{equation*}
		\lim_{t\rightarrow\infty}\norme{e^{itH}\Pi_\mathrm{ac}(H)r(H)e^{-itH_0}u}_\Hi=0.
	\end{equation*}
	Next we claim that for all $t\in\R$, for all $v\in\Hi$, 
	\begin{equation}\label{eq:proof injectivity wave operator}
		\norme{e^{itH}\Pi_\mathrm{ac}(H)r(H)v}_\Hi\geq \norme{\Pi_\mathrm{ac}(H)r(H)^2v}_\Hi.
	\end{equation}
	Indeed, combining \eqref{eq:reg_funct_calc} and \eqref{eq:resolution identity} (see also \cite{FaFr22_06}) that, for all $t\in\R$,
	\begin{equation*}
		\norme{e^{itH}\Pi_\mathrm{ac}(H)r(H)v}_\Hi\leq \norme{\Pi_\mathrm{ac}(H)v}_\Hi.
	\end{equation*}
	Hence, for all $t\in\R$,
	\begin{equation}
		\norme{\Pi_\mathrm{ac}(H)r(H)v}_\Hi=\norme{e^{-itH}\Pi_\mathrm{ac}(H)r(H)e^{itH}v}_\Hi\leq\norme{e^{itH}\Pi_\mathrm{ac}(H)v}_\Hi.
	\end{equation}
	Applying the last inequality to $r(H)u$, we obtain \eqref{eq:proof injectivity wave operator}.
	
	Next, for all $t\in\R$, applying \eqref{eq:proof injectivity wave operator} to $v=e^{-itH_0}u$, we have
	\begin{equation*}
		\norme{e^{itH}\Pi_\mathrm{ac}(H)r(H)e^{-itH_0}u}_\Hi\geq \norme{\Pi_\mathrm{ac}(H)r(H)^2e^{-itH_0}u}_\Hi.
	\end{equation*}
	Then we write
	\begin{equation*}
		r(H)^2=\left(r(H)-r(H_0)\right)r(H)+r(H_0)\left(r(H)-r(H_0)\right)+r(H_0)^2
	\end{equation*}
	Proceeding by double induction over the number of spectral singularities and the order of each spectral singularity, it is not difficult to see that $r(H)-r(H_0)$ is a compact operator since  $C$ is relatively compact with respect to $H_0$. Hence $r(H)^2-r(H_0)^2$ is also compact. Using the triangular inequality, we obtain that 
	\begin{align*}
		&\norme{e^{itH}\Pi_\mathrm{ac}(H)r(H)e^{-itH_0}u}_\Hi\geq \norme{\Pi_\mathrm{ac}(H)r(H_0)^2e^{-itH_0}u}_\Hi\\
		&-\norme{\left(\Pi_\mathrm{ac}(H)\left( r(H)^2-r(H_0)^2 \right)\right)e^{-itH_0}u}_\Hi.
	\end{align*}
	As $r(H)^2-r(H_0)^2$ is compact, the second term in the right hand side of the last inequality vanishes when $t$ goes to $\infty$. Moreover, since $u$ is in the kernel of the wave operators, we have 
	\begin{equation*}
		0=\lim_{t\rightarrow\infty}\norme{\Pi_\mathrm{ac}(H)e^{-itH_0}r(H_0)^2u}_\Hi,
	\end{equation*}
	and since $\Pi_\mathrm{p}(H)$ is a compact operator, 
	\begin{equation*}
		0=\lim_{t\rightarrow\infty}\norme{\Pi_\mathrm{p}(H)e^{-itH_0}r(H_0)^2u}_\Hi.
	\end{equation*}
	Using $\Pi_\mathrm{p}(H)+\Pi_\mathrm{ac}(H)=\mathrm{Id}$, we then deduce from the previous estimates that 
	\begin{equation*}
		0=\lim_{t\rightarrow\infty}\norme{e^{-itH_0}r(H_0)^2u}_\Hi.
	\end{equation*}
	As the evolution group associated to $H_0$ is unitary and $r(H_0)^2$ is injective (since $H_0$ has no eigenvalue), we conclude that $u=0$, which proves that the wave operators are injective. 
	
	To show that $\Ran(W_+(H,H_0))^\mathrm{cl}=\Hi_\mathrm{ac}(H)$, it suffices to use that $\Ker((W_+(H,H_0))^\star)=\Hi_\mathrm{p}(H^\star)$ by Propositions \ref{prop:adjoint property wave operator} and \ref{prop:range/kernel wave operator bis} proven below.  
\end{proof}

Clearly, the last proposition shows that the regularized wave operators $W_\pm(H,H_0)$ are invertible from $\Hi$ to their range. In Section \ref{sec:Assymptotic completeness}, we will give conditions in order to have that $W_\pm(H,H_0)$ are invertible in $\mathcal{B}(\Hi,\Hi_{\mathrm{ac}}(H))$. 

\subsection{The regularized wave operators $W_\pm(H_0,H)$ and $W_\pm(H_0,H^\star)$}\label{subsec:autre operateur onde}

As shown by the proof of Proposition \ref{prop:range_and_ker_of_wave_operator}, an important ingredient to study the regularized wave operators $W_\pm(H,H_0)$ is to study their adjoint, in order to use the duality property between the kernel of an operator and the orthogonal of its range. Formally, the adjoint of $W_\pm(H,H_0)$ are given by $W_\pm(H_0,H^\star)$ where $W_\pm(H_0,H^\star)$ are defined as follows. 

\begin{definition}
    The wave operators associated to $H_0$ and $H$ (respectively $H_0$ and $H^\star$) are defined by
    \begin{align*}
        W_\pm(H_0,H)&:=\slim_{t\rightarrow \pm \infty}e^{itH_0}\Pi_\mathrm{ac}(H)r_\pm(H)e^{-itH},\\
        W_\pm(H_0,H^\star)&:=\slim_{t\rightarrow\pm\infty}e^{itH_0}\Pi_\mathrm{ac}(H^\star)\bar{r}_\pm(H^\star)e^{-itH^\star}.
    \end{align*}
\end{definition}

In the remainder of this section we state the existence and study the properties of these wave operators.

\subsubsection{Existence of $W_\pm(H_0,H)$ and $W_\pm(H_0,H^\star)$} 

The existence of $W_\pm(H_0,H)$ and $W_\pm(H_0,H^\star)$ follows in the same way as in the previous section.

\begin{proposition}\label{prop:existence wave operator adjoint}
Suppose that Hypotheses \ref{hyp:LAP H0}-\ref{hyp:Conjugate operator} hold. Then the wave operators $W_\pm(H_0,H)$ and $W_\pm(H_0,H^\star)$ exist.
\end{proposition}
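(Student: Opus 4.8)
The plan is to mirror the strategy used in Subsection \ref{subsec:reg_wave} for $W_\pm(H,H_0)$, now applied with the roles of the two evolution groups exchanged. Concretely, I would reduce the existence of $W_\pm(H_0,H)$ and $W_\pm(H_0,H^\star)$ to Propositions \ref{prop:existence strong limit wave operator general 1} and \ref{prop:existence strong limit wave operator general 2} combined with the local $H$-smoothness estimates of Lemma \ref{Cr Kato smooth H}. Indeed, looking at Proposition \ref{prop:existence strong limit wave operator general 2}, the strong limit $\slim_{t\to\pm\infty}e^{itH_0}e^{-itH}A$ exists as soon as $A\in\mathcal{B}(\Hi)$ satisfies $\int_0^\infty\norme{Ce^{\pm itH}Au}_\Hi^2\mathrm{d}t\leq c\norme{u}_\Hi^2$. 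Taking $A=\Pi_\mathrm{ac}(H)r_\pm(H)$ — which is bounded by Hypothesis \ref{hyp:spectral singularities} and the functional calculus of \cite{FaFr22_06} and commutes with $H$ — the required bound is exactly \eqref{eq:Cr Kato smooth H} of Lemma \ref{Cr Kato smooth H}, noting that $Cr_\pm(H)e^{\pm itH}\Pi_\mathrm{ac}(H)=Ce^{\pm itH}\Pi_\mathrm{ac}(H)r_\pm(H)$ since all these operators are functions of $H$ and hence commute.

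First I would observe that by Definition, $W_\pm(H_0,H)=\slim_{t\to\pm\infty}e^{itH_0}e^{-itH}\big(\Pi_\mathrm{ac}(H)r_\pm(H)\big)$, so setting $A=\Pi_\mathrm{ac}(H)r_\pm(H)$ puts us precisely in the framework of Proposition \ref{prop:existence strong limit wave operator general 2}. The hypothesis of that proposition for the $+$ case requires $\int_0^\infty\norme{Ce^{+itH}Au}_\Hi^2\mathrm{d}t<\infty$ uniformly, which is \eqref{eq:Cr Kato smooth H} with the upper sign and $r_-$ — wait, one must be careful to pair the signs correctly: $W_+(H_0,H)$ involves $r_+(H)$ and $e^{-itH}$ for $t\to+\infty$, and Lemma \ref{Cr Kato smooth H} gives $\int_0^\infty\norme{Cr_-(H)e^{+itH}\Pi_\mathrm{ac}(H)u}^2\mathrm{d}t\leq c_0\norme{u}^2$ and $\int_0^\infty\norme{Cr_+(H)e^{-itH}\Pi_\mathrm{ac}(H)u}^2\mathrm{d}t\leq c_0\norme{u}^2$. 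The second of these, with $A=\Pi_\mathrm{ac}(H)r_+(H)$ and the lower sign in Proposition \ref{prop:existence strong limit wave operator general 2}, yields the existence of $\slim_{t\to+\infty}e^{itH_0}e^{-itH}\Pi_\mathrm{ac}(H)r_+(H)=W_+(H_0,H)$. The limit $t\to-\infty$ is handled symmetrically with $r_-$, and the operators $W_\pm(H_0,H^\star)$ are treated identically using \eqref{eq:Cr Kato smooth H star} together with Hypothesis \ref{hyp:Conjugate operator}(iv) (which transfers the structural assumptions from $H$ to $H^\star$, in particular $V^*=CW^*C$) and the fact that $H^\star$ has the same spectral singularities as $H$ up to complex conjugation.

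There is no serious obstacle here; the work has essentially all been done in Lemma \ref{Cr Kato smooth H}, and the only thing to check is the bookkeeping of which regularizing function ($r_+$ or $r_-$) pairs with which time direction, and that $\Pi_\mathrm{ac}(H)r_\pm(H)$ is a bona fide bounded operator commuting with the group $\{e^{-itH}\}$. The mild point requiring a word of justification is that $\Pi_\mathrm{ac}(H)$ is bounded and commutes with $H$ — this is recalled from \cite{FaFr22_06} via the decomposition $\Hi=\Hi_\mathrm{ac}(H)\oplus\Hi_\mathrm{p}(H)$ — and that $r_\pm(H)$, being a rational function of $H$ with poles only at $z_0\in\rho(H)$, is bounded and commutes with $e^{-itH}$. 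Thus the proof reduces to a one-paragraph application of the already-established tools, exactly as in the proof of Proposition \ref{prop:wave operator exist}.

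\begin{proof}
By definition, $W_\pm(H_0,H)=\slim_{t\to\pm\infty}e^{itH_0}e^{-itH}A_\pm$ with $A_\pm:=\Pi_\mathrm{ac}(H)r_\pm(H)$. Since $r_\pm(H)$ is a rational function of $H$ whose only poles are at $z_0\in\rho(H)$, it is bounded; and $\Pi_\mathrm{ac}(H)$ is bounded and commutes with $H$ by the decomposition $\Hi=\Hi_\mathrm{ac}(H)\oplus\Hi_\mathrm{p}(H)$ of \cite{FaFr22_06}. Hence $A_\pm\in\mathcal{B}(\Hi)$ and $A_\pm$ commutes with $e^{\mp itH}$, so that $Ce^{\pm itH}A_\mp u=Cr_\mp(H)e^{\pm itH}\Pi_\mathrm{ac}(H)u$. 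By \eqref{eq:Cr Kato smooth H} of Lemma \ref{Cr Kato smooth H} there is $c_0>0$ with $\int_0^\infty\norme{Ce^{\pm itH}A_\mp u}_\Hi^2\mathrm{d}t\leq c_0\norme{u}_\Hi^2$ for all $u\in\Hi$. Applying Proposition \ref{prop:existence strong limit wave operator general 2} with $A=A_-$ (upper sign) gives the existence of $\slim_{t\to+\infty}e^{itH_0}e^{-itH}A_+=W_+(H_0,H)$, and with $A=A_+$ (lower sign) the existence of $\slim_{t\to-\infty}e^{itH_0}e^{-itH}A_-=W_-(H_0,H)$. The same argument applied to $H^\star$, using \eqref{eq:Cr Kato smooth H star} and the fact that $H^\star=H_0+V^*$ with $V^*=CW^*C$ (Hypothesis \ref{hyp:Conjugate operator}(iv)) satisfies Hypotheses \ref{hyp:LAP H0}--\ref{hyp:Conjugate operator} with the same weight $C$, yields the existence of $W_\pm(H_0,H^\star)$.
\end{proof}
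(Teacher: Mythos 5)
Your proof is correct and is exactly the paper's argument: the paper's own proof is the one-line "apply Propositions \ref{prop:existence strong limit wave operator general 1} and \ref{prop:existence strong limit wave operator general 2} together with Lemma \ref{Cr Kato smooth H}", and you have simply spelled out the sign bookkeeping. Note only the small slip in your final paragraph where you write "with $A=A_-$ (upper sign) gives the existence of $\slim_{t\to+\infty}e^{itH_0}e^{-itH}A_+$" — as your middle paragraph correctly states, the $t\to+\infty$ limit uses $A=A_+$ together with the bound $\int_0^\infty\norme{Cr_+(H)e^{-itH}\Pi_\mathrm{ac}(H)u}_\Hi^2\,\mathrm{d}t\leq c_0\norme{u}_\Hi^2$.
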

\begin{proof}
It suffices to apply Propositions \ref{prop:existence strong limit wave operator general 1} and \ref{prop:existence strong limit wave operator general 2} together with Lemma \ref{Cr Kato smooth H}. 
\end{proof}

We mention the following interesting consequence of Proposition \ref{prop:existence wave operator adjoint}. Together with the Banach-Steinhaus Theorem, the last proposition shows that 
\begin{equation}\label{eq:group of H uniformly bounded }
\sup_{t\geq 0}\norme{e^{-itH}r_+(H)\Pi_\mathrm{ac}(H)}_{\mathcal{B}(\Hi)}<\infty\quad \text{and}\quad \sup_{t\geq 0}\norme{e^{itH}r_-(H)\Pi_\mathrm{ac}(H)}_{\mathcal{B}(\Hi)}<\infty,
\end{equation}
and 
\begin{equation}\label{eq: group of Hstar uniformly bounded}
\sup_{t\geq 0}\norme{e^{-itH^\star}\bar{r}_+(H^\star)\Pi_\mathrm{ac}(H^\star)}_{\mathcal{B}(\Hi)}<\infty\quad \text{and}\quad \sup_{t\geq 0}\norme{e^{itH^\star}\bar{r}_-(H^\star)\Pi_\mathrm{ac}(H^\star)}_{\mathcal{B}(\Hi)}<\infty.
\end{equation}
In other words, when restricted to $\mathrm{Ran}(r_\mp(H)\Pi_\mathrm{ac}(H))$, the semigroups generated by $\pm iH$ are uniformly bounded (roughly speaking $\Pi_\mathrm{ac}(H)$ projects out the generalized eigenstates of $H$, while $r_\mp(H)$ `projects out' the outgoing/incoming states corresponding to spectral singularities). This should be compared to the result that can be obtained using the functional calculus of \cite{FaFr22_06}, which only gives that the groups generated by $\pm iH$ are uniformly bounded when restricted to the range of $r(H)\Pi_\mathrm{ac}(H)$. 


\subsubsection{Properties of $W_\pm(H_0,H)$ and $W_\pm(H_0,H^\star)$}

The following intertwining properties follow in the same way as in the proof of Proposition \ref{prop:interwining property} (see Appendix \ref{App:existence and property wave operators}).

\begin{proposition}\label{prop:boundedness evolution group}
Suppose that Hypotheses \ref{hyp:LAP H0}-\ref{hyp:Conjugate operator} hold. Then for all $t\in\R$, for all $u\in\Hi$, 
\begin{equation*}
    e^{itH_0}W_\pm(H_0,H)=W_\pm(H_0,H)e^{itH} \quad \text{and}\quad e^{itH_0}W_\pm(H_0,H^\star)=W_\pm(H_0,H^\star)e^{itH}u.
\end{equation*}
In particular, $W_\pm(H_0,H)\mathcal{D}(H_0)\subset\mathcal{D}(H_0)$, $W_\pm(H_0,H^\star)\mathcal{D}(H_0)\subset\mathcal{D}(H_0)$ and for all $u\in\mathcal{D}(H_0)$, 
\begin{equation*}
    H_0W_\pm(H_0,H)u=W_\pm(H_0,H)Hu\quad \text{and}\quad H_0W_\pm(H_0,H^\star)u=W_\pm(H_0,H^\star)H^\star u. 
\end{equation*}
\end{proposition}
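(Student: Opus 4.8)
The plan is to transpose, essentially verbatim, the proof of Proposition~\ref{prop:interwining property} (carried out in Appendix~\ref{App:existence and property wave operators}). Throughout, recall that $\mathcal{D}(H_0)=\mathcal{D}(H)=\mathcal{D}(H^\star)$, that $\{e^{itH}\}_{t\in\R}$ and $\{e^{itH^\star}\}_{t\in\R}$ are strongly continuous one-parameter groups, and that $\{e^{itH_0}\}_{t\in\R}$ is unitary. By Proposition~\ref{prop:existence wave operator adjoint} and the Banach--Steinhaus theorem (cf.\ the discussion immediately following that proposition), $W_\pm(H_0,H)$ and $W_\pm(H_0,H^\star)$ belong to $\mathcal{B}(\Hi)$.

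First I would establish the intertwining relations with the evolution groups. Fixing $u\in\Hi$ and $s\in\R$, the group law together with the change of variables $\tau=t+s$ inside the strong limit give
\begin{align*}
e^{isH_0}W_\pm(H_0,H)u
&=\slim_{t\to\pm\infty}e^{i(t+s)H_0}\Pi_\mathrm{ac}(H)r_\pm(H)e^{-itH}u\\
&=\slim_{t\to\pm\infty}e^{i(t+s)H_0}\Pi_\mathrm{ac}(H)r_\pm(H)e^{-i(t+s)H}e^{isH}u\\
&=W_\pm(H_0,H)e^{isH}u,
\end{align*}
the substitution being legitimate because the strong limit is already known to exist (here it is applied to the vector $e^{isH}u\in\Hi$). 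Running the same computation with $(H,r_\pm,\Pi_\mathrm{ac}(H))$ replaced by $(H^\star,\bar r_\pm,\Pi_\mathrm{ac}(H^\star))$ yields $e^{isH_0}W_\pm(H_0,H^\star)u=W_\pm(H_0,H^\star)e^{isH^\star}u$ (the factor $e^{itH}u$ written in the statement should read $e^{itH^\star}u$, consistently with the operator identity $H_0W_\pm(H_0,H^\star)=W_\pm(H_0,H^\star)H^\star$ below).

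Next I would differentiate these identities at $s=0$ to recover the relations between the generators. Fix $u\in\mathcal{D}(H_0)=\mathcal{D}(H)$ and set $v:=W_\pm(H_0,H)u$. Since $u\in\mathcal{D}(H)$, the map $s\mapsto e^{isH}u$ is differentiable at $s=0$ with derivative $iHu$; as $W_\pm(H_0,H)\in\mathcal{B}(\Hi)$, the map $s\mapsto W_\pm(H_0,H)e^{isH}u$ is then differentiable at $s=0$ with derivative $W_\pm(H_0,H)(iHu)$. By the intertwining identity just proven, $s\mapsto e^{isH_0}v$ is also differentiable at $s=0$, and Stone's theorem (in the form: $v\in\mathcal{D}(H_0)$ if and only if $s\mapsto e^{isH_0}v$ is differentiable at $0$) gives $v\in\mathcal{D}(H_0)$ together with
\[
iH_0v=\frac{\mathrm{d}}{\mathrm{d}s}\Big|_{s=0}e^{isH_0}v=W_\pm(H_0,H)(iHu),
\]
that is, $H_0W_\pm(H_0,H)u=W_\pm(H_0,H)Hu$ and in particular $W_\pm(H_0,H)\mathcal{D}(H_0)\subset\mathcal{D}(H_0)$. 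Replacing $H$ by $H^\star$ everywhere yields the corresponding statements for $W_\pm(H_0,H^\star)$.

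I do not expect a genuine obstacle here: the argument is the mirror image of Proposition~\ref{prop:interwining property}. The only two points that deserve a word of care are (i) the change of variables inside the strong limit, which is valid only because existence of that limit is guaranteed in advance by Proposition~\ref{prop:existence wave operator adjoint}, and (ii) the use of Stone's characterization of $\mathcal{D}(H_0)$ to transport differentiability at $s=0$ from the right-hand side to the left-hand side of the intertwining identity. Note in particular that no commutation of $\Pi_\mathrm{ac}(H)r_\pm(H)$ with $e^{-itH}$ is needed, since in the manipulation above this product never leaves the position it occupies in the definition of the wave operator.
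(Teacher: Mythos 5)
Your proof is correct and follows exactly the route the paper intends: the paper itself gives no separate argument for this proposition but states that it "follows in the same way as in the proof of Proposition \ref{prop:interwining property}", whose appendix proof is precisely the change of variables inside the strong limit followed by differentiation at $s=0$ via the generator characterization of $\mathcal{D}(H_0)$. Your observation that the $e^{itH}u$ in the second identity of the statement should read $e^{itH^\star}u$ is also right; it is a typo in the paper, consistent with the operator identity $H_0W_\pm(H_0,H^\star)u=W_\pm(H_0,H^\star)H^\star u$.
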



The next proposition relates the wave operators $W_\pm(H,H_0)$ to the adjoints of $W_\mp(H_0,H^\star)$.

\begin{proposition}\label{prop:adjoint property wave operator}
Suppose that Hypotheses \ref{hyp:LAP H0}-\ref{hyp:Conjugate operator} hold. Then
\begin{equation*}
    (W_\pm(H,H_0))^\star=W_\pm(H_0,H^\star) \quad \text{and}\quad (W_\pm(H^\star,H_0))^\star=W_\pm(H_0,H) 
\end{equation*}
\end{proposition}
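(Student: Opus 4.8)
The plan is to prove the adjoint relations by unwinding the definitions and using the unitarity of $e^{-itH_0}$, together with the fact (established just above in Proposition~\ref{prop:existence wave operator adjoint}) that the relevant strong limits exist. I will prove $(W_\pm(H,H_0))^\star=W_\pm(H_0,H^\star)$; the identity $(W_\pm(H^\star,H_0))^\star=W_\pm(H_0,H)$ is obtained in exactly the same way by exchanging the roles of $H$ and $H^\star$ and using $\bar r_\mp(H^\star)^\star$-type bookkeeping, or alternatively by taking adjoints of the first identity.

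First I would fix $u,v\in\Hi$ and $t\in\R$, and compute the inner product
\[
\scal{e^{itH}\Pi_\mathrm{ac}(H)r_\mp(H)e^{-itH_0}u}{v}_\Hi=\scal{u}{e^{itH_0}\bigl(\Pi_\mathrm{ac}(H)r_\mp(H)\bigr)^\star e^{-itH^\star}v}_\Hi,
\]
using that $e^{-itH_0}$ is unitary so $(e^{-itH_0})^\star=e^{itH_0}$, that $(e^{itH})^\star=e^{itH^\star}$, and that $r_\mp$ is a rational function with real... — here one must be careful: $r_\mp$ has complex coefficients in general (the $\lambda_j$ and $z_0$ are complex), so $(r_\mp(H))^\star=\bar r_\mp(H^\star)$, where $\bar r_\mp$ denotes the rational function with complex-conjugated coefficients. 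Likewise $\Pi_\mathrm{ac}(H)^\star=\Pi_\mathrm{ac}(H^\star)$ (this follows from the decomposition $\Hi=\Hi_\mathrm{ac}(H)\oplus\Hi_\mathrm{p}(H)$ being $J$-orthogonal, equivalently $\Pi_\mathrm{p}(H)^\star=\Pi_\mathrm{p}(H^\star)$ as recalled in Subsection~\ref{subsubsec:embedded}, so $\Pi_\mathrm{ac}(H)^\star=(\Id-\Pi_\mathrm{p}(H))^\star=\Id-\Pi_\mathrm{p}(H^\star)=\Pi_\mathrm{ac}(H^\star)$). Since $r_\mp(H)$ and $\Pi_\mathrm{ac}(H)$ commute (both are functions of $H$ in the appropriate sense, or one checks directly via the resolvent), we get $\bigl(\Pi_\mathrm{ac}(H)r_\mp(H)\bigr)^\star=\bar r_\mp(H^\star)\Pi_\mathrm{ac}(H^\star)$.

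Then I would pass to the limit $t\to\pm\infty$ on both sides. On the left, $e^{itH}\Pi_\mathrm{ac}(H)r_\mp(H)e^{-itH_0}u\to W_\pm(H,H_0)u$ strongly by Proposition~\ref{prop:wave operator exist}, so the left inner product converges to $\scal{W_\pm(H,H_0)u}{v}_\Hi$. On the right, $e^{itH_0}\bar r_\mp(H^\star)\Pi_\mathrm{ac}(H^\star)e^{-itH^\star}v\to W_\pm(H_0,H^\star)v$ strongly by Proposition~\ref{prop:existence wave operator adjoint} — this is precisely the definition of $W_\pm(H_0,H^\star)$, noting that $\bar r_\mp(H^\star)$ and $\Pi_\mathrm{ac}(H^\star)$ commute so the order of the factors is immaterial. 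Hence $\scal{W_\pm(H,H_0)u}{v}_\Hi=\scal{u}{W_\pm(H_0,H^\star)v}_\Hi$ for all $u,v$, which says exactly $(W_\pm(H,H_0))^\star=W_\pm(H_0,H^\star)$. Both wave operators are bounded (strong limits of uniformly bounded families, by the Banach–Steinhaus argument in \eqref{eq:group of H uniformly bounded }), so there is no domain subtlety.

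The step I expect to require the most care is the algebraic identity $\bigl(\Pi_\mathrm{ac}(H)r_\mp(H)\bigr)^\star=\bar r_\mp(H^\star)\Pi_\mathrm{ac}(H^\star)$ — specifically tracking the complex conjugation in the coefficients of the rational regularizing functions and confirming that the definition of $W_\pm(H_0,H^\star)$ in the excerpt really uses $\bar r_\mp(H^\star)$ with the matching sign (it does: compare Definition~\ref{def:ref wave operators} and the definition of $W_\pm(H_0,H)$, where the $+/-$ subscripts on $r$ are chosen so that $r_-$ regularizes outgoing and $r_+$ incoming singularities in the two time directions, and this is consistent with the pairing above). Everything else is a routine manipulation of unitary groups and adjoints.
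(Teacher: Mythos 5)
Your argument is correct and is essentially the paper's own proof: one pairs $e^{itH}\Pi_\mathrm{ac}(H)r_\mp(H)e^{-itH_0}u$ against $v$, moves everything to the right-hand slot via $(e^{itH})^\star=e^{itH^\star}$, $\Pi_\mathrm{ac}(H)^\star=\Pi_\mathrm{ac}(H^\star)$ and the conjugated rational function applied to $H^\star$, and then lets $t\to\pm\infty$ using the existence of both strong limits. The only point worth noting is notational: what you write as $\bar r_\mp(H^\star)$ (the coefficients of $r_\mp$ conjugated) is exactly what the paper denotes $\bar r_\pm(H^\star)$ in its definition of $W_\pm(H_0,H^\star)$, since taking adjoints swaps the incoming/outgoing labels (outgoing singularities of $H^\star$ are the incoming ones of $H$), so your identification of the limit with $W_\pm(H_0,H^\star)$ is the correct one.
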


\begin{proof}
	We prove that $(W_\pm(H,H_0))^\star=W_\pm(H_0,H^\star)$ the second equality is obtained in the same way. For $u,v\in\Hi$, we have
	\begin{align*}
		\scal{W_\pm(H,H_0)u}{v}_\Hi&=\lim_{t\rightarrow\pm \infty}\scal{e^{itH}\Pi_\mathrm{ac}(H)r_\mp(H)e^{-itH_0}u}{v}_\Hi\\
		&=\lim_{t\rightarrow \pm\infty}\scal{u}{e^{itH_0}\bar{r}_\pm(H^\star)\Pi_\mathrm{ac}(H^\star)e^{-itH^\star}v}_\Hi\\
		&=\scal{u}{W_\pm(H_0,H^\star)v}_\Hi.
	\end{align*}
	This proves the result. 
\end{proof}

It should be noted that Proposition \ref{prop:adjoint property wave operator} and Hypothesis \ref{hyp:Conjugate operator} yield
\begin{equation*}
    \norme{W_\pm(H,H_0)}_{\mathcal{B}(\Hi)}=\norme{W_\pm(H_0,H^\star)}_{\mathcal{B}(\Hi)}=\norme{W_\mp(H^\star,H_0)}_{\mathcal{B}(\Hi)}=\norme{W_\mp(H_0,H)}_{\mathcal{B}(\Hi)}.
\end{equation*}

In the next proposition we characterize the ranges and kernels or $W_\pm(H_0,H)$ and $W_\pm(H_0,H^\star)$. Its proof relies on the characterization of the set of asymptotically disappearing states of $H$, defined by
\begin{equation}\label{eq:defHads}
	\Hi_\mathrm{ads}^\pm(H):=\left\lbrace u\in\Hi,\lim_{t\rightarrow\pm \infty}\norme{e^{-itH}u}_\Hi=0\right\rbrace^\mathrm{cl}.
\end{equation} 
It has been proved in \cite{FaFr22_06} that, under our assumptions, $\Hi_\mathrm{ads}(H)$ coincide with the direct sum of the generalized eigenspaces associated to eigenvalues with negative/positive imaginary part, that is 
\begin{equation}\label{eq:defHads2}
	\Hi_\mathrm{ads}^\pm(H) = \Hi_\mathrm{p}^\pm(H):=\left\lbrace u\in\Ker(H-\lambda)^{\mathrm{m}_\lambda},\lambda\in\sigma_{\mathrm{p}}(H),\mp\im(\lambda)>0 \right\rbrace,
\end{equation}
where we recall that $\mathrm{m}_\lambda$ is the algebraic multiplicity of the eigenvalue $\lambda$. 

\begin{proposition}\label{prop:range/kernel wave operator bis}
Suppose that Hypotheses \ref{hyp:LAP H0}-\ref{hyp:Conjugate operator} hold. Then
\begin{equation*}
    \Ker(W_\pm(H_0,H))=\Hi_\mathrm{ac}(H^\star)^\perp=\Hi_\mathrm{p}(H), \quad \Ker(W_\pm(H_0,H^\star))=\Hi_\mathrm{ac}(H)^\perp=\Hi_\mathrm{p}(H^\star),
\end{equation*}
and
\begin{equation*}
    \Ran(W_\pm(H_0,H))^\mathrm{cl}=\Hi = \Ran(W_\pm(H_0,H^\star))^\mathrm{cl}.
\end{equation*}
\end{proposition}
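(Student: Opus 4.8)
The plan is to deduce the kernel identities by a direct computation and the range identities from the adjoint relations of Proposition~\ref{prop:adjoint property wave operator} together with the injectivity of $W_\pm(H,H_0)$ and $W_\pm(H^\star,H_0)$ proven in Proposition~\ref{prop:range_and_ker_of_wave_operator}. It will suffice to treat $W_+(H_0,H)$: the case of $W_-(H_0,H)$ follows by reversing the signs, and that of $W_\pm(H_0,H^\star)$ by replacing $H$ with $H^\star$, Hypotheses~\ref{hyp:LAP H0}--\ref{hyp:Conjugate operator} being stable under this substitution thanks to the conjugation $J$. I will freely use the two facts established in \cite{FaFr22_06}: the $J$-orthogonal decomposition $\Hi=\Hi_\mathrm{ac}(H)\oplus\Hi_\mathrm{p}(H)$ (and likewise for $H^\star$), and the identification $\Hi_\mathrm{ads}^\pm(H)=\Hi_\mathrm{p}^\pm(H)$ of the asymptotically disappearing states of \eqref{eq:defHads}--\eqref{eq:defHads2}. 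Since $\Hi_\mathrm{p}(H)$ is finite-dimensional by Hypothesis~\ref{hyp:VPH}, hence closed, and $\Hi_\mathrm{ac}(H^\star)=\Hi_\mathrm{p}(H)^\perp$, we have $\Hi_\mathrm{ac}(H^\star)^\perp=\Hi_\mathrm{p}(H)$, and symmetrically $\Hi_\mathrm{ac}(H)^\perp=\Hi_\mathrm{p}(H^\star)$; so for the kernels it only remains to prove $\Ker(W_+(H_0,H))=\Hi_\mathrm{p}(H)$.

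For the inclusion $\Hi_\mathrm{p}(H)\subset\Ker(W_+(H_0,H))$, I would note that $r_+(H)$ and $e^{-itH}$ commute with the spectral projection $\Pi_\mathrm{ac}(H)$, so for $u\in\Hi_\mathrm{p}(H)=\Ker(\Pi_\mathrm{ac}(H))$ one has $\Pi_\mathrm{ac}(H)r_+(H)e^{-itH}u=r_+(H)e^{-itH}\Pi_\mathrm{ac}(H)u=0$ for every $t$, hence $W_+(H_0,H)u=0$. Conversely, let $u\in\Ker(W_+(H_0,H))$. Using the unitarity of $e^{itH_0}$ and the same commutations, the vanishing of $W_+(H_0,H)u$ means $\lim_{t\to+\infty}\norme{e^{-itH}w}_\Hi=0$ with $w:=r_+(H)\Pi_\mathrm{ac}(H)u$. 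By \eqref{eq:defHads}--\eqref{eq:defHads2} this gives $w\in\Hi_\mathrm{ads}^+(H)=\Hi_\mathrm{p}^+(H)\subset\Hi_\mathrm{p}(H)$; on the other hand $w\in\Ran(\Pi_\mathrm{ac}(H))=\Hi_\mathrm{ac}(H)$ because $r_+(H)$ leaves $\Hi_\mathrm{ac}(H)$ invariant. As $\Hi_\mathrm{ac}(H)\cap\Hi_\mathrm{p}(H)=\{0\}$ we get $r_+(H)\Pi_\mathrm{ac}(H)u=0$. Finally, $r_+$ is a rational function whose only zeros are the outgoing spectral singularities $\lambda_j$ and whose poles (at $z_0$) lie in $\rho(H)$, so $\Ker(r_+(H))$ is spanned by generalized eigenvectors of $H$, hence contained in $\Hi_\mathrm{p}(H)$; therefore $\Pi_\mathrm{ac}(H)u\in\Hi_\mathrm{ac}(H)\cap\Hi_\mathrm{p}(H)=\{0\}$, i.e.\ $u\in\Hi_\mathrm{p}(H)$. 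This proves $\Ker(W_+(H_0,H))=\Hi_\mathrm{p}(H)$, and the analogous computation with $H$ replaced by $H^\star$ gives $\Ker(W_\pm(H_0,H^\star))=\Hi_\mathrm{p}(H^\star)=\Hi_\mathrm{ac}(H)^\perp$.

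For the ranges, Proposition~\ref{prop:adjoint property wave operator} gives $W_\pm(H_0,H)=(W_\pm(H^\star,H_0))^\star$ and $W_\pm(H_0,H^\star)=(W_\pm(H,H_0))^\star$, and the general Hilbert-space identity $\Ran(A^\star)^\mathrm{cl}=\Ker(A)^\perp$ then yields $\Ran(W_\pm(H_0,H))^\mathrm{cl}=\Ker(W_\pm(H^\star,H_0))^\perp$ and $\Ran(W_\pm(H_0,H^\star))^\mathrm{cl}=\Ker(W_\pm(H,H_0))^\perp$. Since $W_\pm(H^\star,H_0)$ and $W_\pm(H,H_0)$ are injective by Proposition~\ref{prop:range_and_ker_of_wave_operator}, both right-hand sides are $\{0\}^\perp=\Hi$.

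There is no real analytic difficulty here; the substance is carried by the results of \cite{FaFr22_06} (the spectral decomposition $\Hi=\Hi_\mathrm{ac}\oplus\Hi_\mathrm{p}$ and the description of the asymptotically disappearing states), the rest being a matter of commuting functions of $H$ with $\Pi_\mathrm{ac}(H)$ and of elementary duality. The one point that genuinely requires attention is the order of the arguments: the range part of Proposition~\ref{prop:range_and_ker_of_wave_operator} is itself deduced from the kernel identities established in the present proposition, so this proof must rely only on the \emph{injectivity} half of Proposition~\ref{prop:range_and_ker_of_wave_operator} (which is proven there independently) and never on its range half — which the argument above respects.
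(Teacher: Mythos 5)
Your proof is correct and follows essentially the same route as the paper's: the kernel is computed by passing to $\lim_{t\to+\infty}\|e^{-itH}r_+(H)\Pi_\mathrm{ac}(H)u\|=0$, invoking $\Hi_\mathrm{ads}^+(H)=\Hi_\mathrm{p}^+(H)$ and the injectivity of $r_+(H)$ on $\Hi_\mathrm{ac}(H)$, and the ranges follow by duality from the injectivity of $W_\pm(H^\star,H_0)$ and $W_\pm(H,H_0)$. You also correctly identify the one delicate point, namely that only the injectivity half of Proposition \ref{prop:range_and_ker_of_wave_operator} may be used here to avoid circularity, which is exactly how the paper argues.
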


\begin{proof}
	It follows from \cite{FaFr22_06} that $\Hi_\mathrm{ac}(H^\star)^\perp=\Hi_\mathrm{p}(H)$ and likewise for $\Hi_\mathrm{ac}(H)^\perp$. We prove that $\Ker(W_+(H_0,H))=\Hi_\mathrm{p}(H)$, the equalities $\Ker(W_-(H_0,H))=\Hi_\mathrm{p}(H)$ and $\Ker(W_\pm(H_0,H^\star))=\Hi_\mathrm{p}(H^\star)$ are proven in the same way.  It follows from the definition of $W_+(H_0,H)$ that $\Hi_\mathrm{p}(H)\subset\Ker(W_+(H_0,H))$. Hence we only need to prove the reverse inclusion. 
	
	Let $u\in\Ker(W_+(H_0,H))$. Using the unitarity of the group associated to $H_0$, we have that
	\begin{equation*}
		\lim_{t\rightarrow\infty}\norme{e^{- itH}r_+(H)\Pi_\mathrm{ac}(H)u}_\Hi=\lim_{t\rightarrow\infty}\norme{e^{ itH_0}r_+(H)\Pi_\mathrm{ac}(H)e^{- itH}u}_\Hi=0.
	\end{equation*}
	Thus $r_+(H)\Pi_\mathrm{ac}(H)u$ belongs to $\Hi_\mathrm{ads}^+(H)$. By the remark above, this implies that $r_+(H)\Pi_\mathrm{ac}(H)u\in\Hi_\mathrm{p}^+(H)\subset\Hi_\mathrm{p}(H)$. Together with the fact that $r_-(H)\Pi_\mathrm{ac}(H)u\in\Hi_\mathrm{ac}(H)$, this yields
	\begin{equation}
		r_-(H)\Pi_\mathrm{ac}(H)u=0.
	\end{equation}
	Finally as $r_-(H)$ is injective on $\Hi_\mathrm{ac}(H)$ (because the restriction of $H$ to $\Hi_\mathrm{ac}(H)$ has no point spectrum), we obtain that $\Pi_\mathrm{ac}(H)u=0$. This proves that $u\in\Hi_\mathrm{p}(H)$. 
	
	The fact that $\Ran(W_\pm(H_0,H))^\mathrm{cl}=\Hi$ follows from the facts that $W_\pm(H^\star,H_0)$ are injective by Proposition \ref{prop:range_and_ker_of_wave_operator} and that $(W_\pm(H_0,H))^\star=W_\pm(H^\star,H_0)$ by Proposition \ref{prop:adjoint property wave operator}.
\end{proof}

\subsection{The local regularized wave operators}\label{subsec:regularized wave operators}

It can be useful to define local wave operators on an interval $I$ of the essential spectrum. In this section we define the local regularized wave operators and we give their properties. 

We recall that given a closed in interval $I\subset\sigma_{\mathrm{ess}}(H_0)$, $h$ stands for the function defined in \eqref{eq:defh} which regularizes the spectral singularities of $H$ in $I$. It allows one to consider a regularized spectral projection $(h\mathds{1}_I)(H)$ as in \eqref{eq:def regularized spectral projection} which in turn induces a functional calculus \eqref{eq:reg_funct_calc}.

\begin{definition}
Let $I\subset\sigma_\mathrm{ess}(H)$ be a closed interval. We define the local regularized wave operators on the interval $I$ by 
\begin{equation*}
	W_\pm(H,H_0,I):=\slim_{t\rightarrow \pm\infty}e^{itH}(h\mathds{1}_I)(H)e^{-itH_0}\quad W_\pm(H^\star,H_0,I):=\slim_{t\rightarrow \pm\infty} e^{itH^\star}(\bar{h}\mathds{1}_I)(H^\star)e^{-itH_0}. 
\end{equation*}
\end{definition}

Note that in the particular case where $H$ does not have any spectral singularity in $I$, then the local wave operators associated to $H$ and $H_0$ reduce to
\begin{equation*}
		W_\pm(H,H_0,I):=\slim_{t\rightarrow \pm\infty}e^{itH}\mathds{1}_I(H)e^{-itH_0}\quad W_\pm(H^\star,H_0,I):=\slim_{t\rightarrow \pm\infty} e^{itH^\star}\mathds{1}_I(H^\star)e^{-itH_0}. 
\end{equation*}
Since $\mathds{1}_I(H)$ is a projection, this implies that in this case, $\Ran(W_\pm(H,H_0,I))\subset\Ran(\mathds{1}_I(H))$. 

\subsubsection{Existence of the local regularized wave operator}

As in Section \ref{subsec:reg_wave} for the regularized wave operators $W_\pm(H,H_0)$, the proof of the existence of the local regularized wave operators $W_\pm(H,H_0,I)$, is based on Proposition \ref{prop:existence strong limit wave operator general 1}. In order to apply this proposition, we need to prove that $C(h\mathds{1}_I)(H)$ is $H_0$-smooth, which is the purpose on the following lemma. Note that the proof, based on the functional calculus \eqref{eq:reg_funct_calc}, follows a different route  from that of Lemma \ref{Cr Kato smooth H}. The disadvantage of functional calculus is that we need to regularize both outgoing and incoming spectral singularities.
 
\begin{lemma}\label{lemma:Cr1_H H smooth}
	Suppose that Hypotheses \ref{hyp:LAP H0}-\ref{hyp:Conjugate operator} hold. Let $I\subset\sigma_\mathrm{ess}(H)$ be a closed interval. Then $C(h\mathds{1}_I)(H)$ is relatively smooth with respect to $H$ in the sense of Kato and $C(h\mathds{1}_{H^\star})(I)$ is relatively smooth with respect to $H^\star$. 
\end{lemma}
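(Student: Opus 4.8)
The plan is to prove the time-integral bound $\int_\R \norme{C(h\mathds{1}_I)(H)e^{-itH}u}_\Hi^2\,\mathrm{d}t\leq c\norme{u}_\Hi^2$ directly, exploiting the regularized functional calculus \eqref{eq:reg_funct_calc} rather than the interval-by-interval resolvent estimates of Lemma \ref{Cr Kato smooth H}. Since $(h\mathds{1}_I)(H)$ commutes with $H$, hence with $e^{-itH}$, and since \eqref{eq:reg_funct_calc} comes (as in the proof of Lemma \ref{Cr Kato smooth H}, and as established in \cite{FaFr22_06}) with a Parseval identity, the first step is to write
\begin{equation*}
	\int_\R \norme{C(h\mathds{1}_I)(H)e^{-itH}u}_\Hi^2\,\mathrm{d}t=\frac{1}{2\pi}\lim_{\varepsilon\rightarrow 0^+}\int_I\Big(\norme{h(\lambda+i\varepsilon)C\Res_H(\lambda+i\varepsilon)u}_\Hi^2+\norme{h(\lambda-i\varepsilon)C\Res_H(\lambda-i\varepsilon)u}_\Hi^2\Big)\,\mathrm{d}\lambda,
\end{equation*}
so that everything reduces to a uniform-in-$\varepsilon$ bound on the right-hand side.

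The core of the argument is the pointwise estimate
\begin{equation*}
	M_I:=\sup_{\substack{\re(z)\in I\\ 0<\pm\im(z)<\varepsilon_0}}\abso{h(z)}\big(1+\norme{C\Res_H(z)CW}_{\mathcal{B}(\Hi)}\big)<\infty.
\end{equation*}
To prove this I would cover $I$ by the small intervals $J_j$ ($j\in\mathcal{S}_I$) whose interiors isolate the spectral singularities contained in $I$, together with finitely many closed intervals free of spectral singularities. On $J_j$ the factors $r_k$ of $h$ with $k\in\mathcal{S}_I\setminus\{j\}$ are continuous and bounded away from $0$, so $\abso{h(z)}$ is controlled by $\abso{r_j(z)}$ there, and the local limiting absorption principle derived from Hypothesis \ref{hyp:spectral singularities} (namely $\sup_{\re(z)\in J_j,\,0<\pm\im(z)<\varepsilon_0}\abso{r_j(z)}\norme{C\Res_H(z)CW}_{\mathcal{B}(\Hi)}<\infty$, and its analogue near $\infty$ when $\infty\in\mathcal{S}_I$) then bounds $\abso{h(z)}\norme{C\Res_H(z)CW}_{\mathcal{B}(\Hi)}$; on the remaining intervals $\norme{C\Res_H(z)CW}_{\mathcal{B}(\Hi)}$ is itself bounded by Hypothesis \ref{hyp:spectral singularities} and $\abso{h}$ is bounded on the strip. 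With $M_I<\infty$ in hand, the second resolvent identity gives $C\Res_H(z)=\big(\Id-C\Res_H(z)CW\big)C\Res_0(z)$, whence
\begin{equation*}
	\norme{h(z)C\Res_H(z)u}_\Hi\leq M_I\norme{C\Res_0(z)u}_\Hi,\qquad \re(z)\in I,\ 0<\pm\im(z)<\varepsilon_0,
\end{equation*}
and therefore $\int_I\norme{h(\lambda\pm i\varepsilon)C\Res_H(\lambda\pm i\varepsilon)u}_\Hi^2\,\mathrm{d}\lambda\leq M_I^2\int_\R\norme{C\Res_0(\lambda\pm i\varepsilon)u}_\Hi^2\,\mathrm{d}\lambda$. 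Since $C$ is relatively smooth with respect to $H_0$ (Hypothesis \ref{hyp:LAP H0} together with \eqref{eq:Kato_smooth}--\eqref{eq:Smooth_Ka65_01}), the last integrals are bounded by $2\pi c_0^2\norme{u}_\Hi^2$ uniformly in $\varepsilon$, and passing to the limit yields the claim with $c=M_I^2c_0^2$.

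The statement for $H^\star$ follows verbatim: $H^\star=H_0+CW^\ast C$ satisfies Hypotheses \ref{hyp:LAP H0}--\ref{hyp:Conjugate operator} with the same $C$, and its spectral singularities are the complex conjugates of those of $H$ (via the conjugation operator $J$, which gives $J\Res_H(z)J=\Res_{H^\star}(\bar z)$), so $\bar h$ is the corresponding regularizing function and the argument above applies with $H$ replaced by $H^\star$ and $h$ by $\bar h$.

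The step I expect to be the main obstacle is not the resolvent identity or the $H_0$-smoothness input, but making the Parseval identity for the \emph{regularized} functional calculus \eqref{eq:reg_funct_calc} rigorous — i.e.\ justifying that the time integral equals the limit of the spectral-side integrals, not merely dominates a $\liminf$ — together with the book-keeping needed to verify $M_I<\infty$ uniformly over all the pieces of $I$, in particular treating the possible singularity at $\infty$ when $I$ is unbounded. Both points are supplied by the functional-calculus machinery of \cite{FaFr22_06} and by the local form of Hypothesis \ref{hyp:spectral singularities} recorded above.
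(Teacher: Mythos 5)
Your route is genuinely different from the paper's, and most of it is sound: the covering argument giving $M_I<\infty$, the identity $C\Res_H(z)=(\Id-C\Res_H(z)CW)C\Res_0(z)$, and the reduction to the $H_0$-smoothness bound $\int_\R\norme{C\Res_0(\lambda\pm i\varepsilon)u}_\Hi^2\,\mathrm{d}\lambda\le 2\pi c_0^2\norme{u}_\Hi^2$ are all correct, and they are essentially the same estimates the paper uses (there they appear as the bounds \eqref{eq:estimg1}--\eqref{eq:estimg2} on the scalar integrands $g_1^\pm,g_2^\pm$). The gap is the opening ``Parseval identity'', which you rightly flag as the main obstacle but then defer entirely to the functional-calculus machinery. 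Two problems with it. First, as written it is not the identity that Plancherel would produce: Fourier-transforming the representation \eqref{eq:reg_funct_calc} in $t$ over the whole line yields $\tfrac{1}{2\pi}\int_I\norme{h(\lambda+i0)C\Res_H(\lambda+i0)u-h(\lambda-i0)C\Res_H(\lambda-i0)u}_\Hi^2\,\mathrm{d}\lambda$, i.e.\ the squared norm of the \emph{difference} of the boundary values, not the sum of the two squared norms (harmless for an upper bound, but it shows the identity is being quoted rather than derived). Second, and more seriously, \eqref{eq:reg_funct_calc} is a weak limit for each fixed $t$; to apply vector-valued Plancherel you must (i) show that the boundary values $h(\lambda\pm i0)C\Res_H(\lambda\pm i0)u$ exist in $L^2(I;\Hi)$ for \emph{arbitrary} $u$ --- the hypotheses only give boundary values of $C\Res_H(\lambda\pm i0)CW$ --- and (ii) upgrade the fixed-$t$ weak limit to an identity in $L^2(\R_t;\Hi)$, so that $t\mapsto Ce^{-itH}(h\mathds{1}_I)(H)u$ really is the Fourier transform of that boundary-value function. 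Point (i) can be extracted from your own estimate $\norme{h(z)C\Res_H(z)u}_\Hi\le M_I\norme{C\Res_0(z)u}_\Hi$ together with the $L^2$ convergence of $C\Res_0(\lambda\pm i\varepsilon)u$, but point (ii) is a genuine interchange-of-limits argument that neither Hypothesis \ref{hyp:spectral singularities} nor the displayed form of \eqref{eq:reg_funct_calc} supplies.

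The paper avoids this entirely by never forming a vector-valued Fourier transform of $Ce^{itH}(h\mathds{1}_I)(H)u$: it tests against $v$, applies the second resolvent identity twice to split $\scal{v}{Ce^{itH}(h\mathds{1}_I)(H)u}_\Hi$ into the free term $\scal{v}{Ce^{itH_0}h(H_0)\mathds{1}_I(H_0)u}_\Hi$ plus the Fourier transforms of two scalar functions $g_1^\pm,g_2^\pm\in L^1\cap L^2$, and then uses scalar Plancherel, the Kato smoothness of $C$ with respect to $H_0$, and Fatou's lemma. If you want to keep your cleaner structure, the honest fix is to run your $M_I$ estimate on the $\varepsilon$-approximants in \eqref{eq:def regularized spectral projection}, deduce the existence of $L^2(I;\Hi)$ boundary values, and justify the $L^2(\R_t;\Hi)$ identification by dominated convergence; as it stands the proof rests on an identity that is neither stated correctly nor proved.
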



\begin{proof}
	It suffices to show that there exists $c>0$ such that, for all $u\in\Hi$, 
	\begin{equation}\label{eq:proof kato smooth local estimate to show}
		\int_0^\infty\norme{Ce^{itH}(h\mathds{1}_I)(H)u}_\Hi^2\mathrm{d}t\leq c\norme{u}_\Hi^2,
	\end{equation}
	the proof of  
	\begin{equation*}
		\int_0^\infty \norme{Ce^{-itH}(h\mathds{1}_I)(H)u}_\Hi^2\mathrm{d}t\leq c\norme{u}_\Hi^2
	\end{equation*}
	being similar. 
	Let $u,v\in\Hi$. Using the integral representation \eqref{eq:def regularized spectral projection} of $(h\mathds{1}_I)(H)$, for all $t\in\R$, we have 
	\begin{align}\label{eq:local kato smooth proof}
		&\scal{v}{Ce^{itH}(h\mathds{1}_I)(H)u}_\Hi=\scal{v}{Ce^{itH_0}h(H_0)\mathds{1}_{I}(H_0)u}_\Hi\\
		&-\frac{1}{2\pi i}\int_{I}e^{ it\lambda}h(\lambda)\scal{C\Res_0(\lambda\mp i0^+)Cv}{WC\Res_0(\lambda\pm i0^+)u}_\Hi\mathrm{d}\lambda\nonumber\\
		&+\frac{1}{2\pi i}\int_{I}e^{ it\lambda}h(\lambda)\scal{C\Res_{H^\star}(\lambda\mp i0^+)CW^\star C\Res_0(\lambda \mp i0^+)Cv}{WC\Res_0(\lambda\pm i0^+)u}_\Hi\mathrm{d}\lambda.\nonumber
	\end{align}
	Let $g_1^\pm$ and $g_2^\pm$ be the functions defined by
	\begin{align*}
		\R\ni\lambda\mapsto g_1^\pm(\lambda):&=h(\lambda)\scal{W^\star C\Res_0(\lambda\mp i0^+)Cv}{C\Res_0(\lambda\pm i0^+)u}_\Hi\mathds{1}_I(\lambda)\\
		&=h(\lambda)\scal{v}{C\Res_0(\lambda\pm i0^+)CWC\Res_0(\lambda\pm i0^+)u}_\Hi\mathds{1}_I(\lambda),
	\end{align*}
	and 
	\begin{align*}
		\R\ni\lambda\mapsto g_2^\pm(\lambda):&=h(\lambda)\scal{W^\star C\Res_{H^\star}(\lambda\mp i0^+)CW^\star C\Res_0(\lambda \mp i0^+)Cv}{C\Res_0(\lambda\mp i0^+)u}_\Hi\mathds{1}_I(\lambda)\\
		&=h(\lambda)\scal{v}{C\Res_0(\lambda\pm i0^+)CWC\Res_H(\lambda\pm i0^+)CWC\Res_0(\lambda\mp i0^+)u}_\Hi\mathds{1}_I(\lambda).
	\end{align*}
	We claim that $g_1^\pm$ and $g_2^\pm$ are in $L^1(\R,\C)\cap L^2(\R,\C)$. Indeed for all $\lambda\in I$, by Hypotheses \ref{hyp:LAP H0} and \ref{hyp:spectral singularities} and the Cauchy-Schwarz inequality, we have 
	\begin{equation}\label{eq:estimg1}
		\abso{g_1^\pm(\lambda)}^2\leq \norme{h}^2_\infty\sup_{z\in\C^\pm}\norme{C\Res_0(z)C}_{\mathcal{B}(\Hi)}^2\norme{W^\star}_{\mathcal{B}(\Hi)}\norme{C\Res_0(\lambda\pm i0^+)u}_\Hi^2\norme{v}_\Hi^2,
	\end{equation}
	and
	\begin{align}
		\abso{g_2^\pm(\lambda)}^2\leq&\left( \sup_{\substack{\re(z)\in I\\ \abso{\im(z)}\leq \varepsilon_0}}\left(\abso{h(z)}\norme{C\Res_H(z)CW\star}_{\mathcal{B}(\Hi)}\right)^2\right) \left(\sup_{z\in\C^\pm}\norme{C\Res_0(z)C}_{\mathcal{B}(\Hi)}^2\right)\nonumber\\
		&\times \norme{W}_{\mathcal{B}(\Hi)}^2\norme{C\Res_0(\lambda\pm i0^+)u}_\Hi^2\norme{v}^2_\Hi. \label{eq:estimg2}
	\end{align}
	Since $C$ is relatively smooth with respect to $H_0$, the function  $\lambda\mapsto C\Res_0(\lambda \pm i0^+)u$ is square integrable on  $I$.
	Moreover, the Cauchy-Schwarz inequality, there exists $c_1>0$ such that 
	\begin{align*}
		\int_\R\abso{g_1^\pm(\lambda)}\mathrm{d}\lambda&\leq \left(\int_I\abso{h(z)}\norme{W^\star C\Res_0(\lambda\mp i0^+)Cv}^2_\Hi\mathrm{d}\lambda\right)^{\frac{1}{2}} \left( \int_I\norme{C\Res_0(\lambda\pm i0^+)u}^2_\Hi\mathrm{d}\lambda\right)^{\frac{1}{2}}\\
		&\leq c_1\norme{u}_\Hi\norme{v}_\Hi.
	\end{align*}
	In the same way, we have 
	\begin{align*}
		\int_\R\abso{g_2(\lambda)}\mathrm{d}\lambda\leq c_2\norme{u}_\Hi\norme{v}_\Hi. 
	\end{align*}
	
	Using \eqref{eq:local kato smooth proof}, we deduce that there exists $c_3>0$ such that 
	\begin{align*}
		\int_0^\infty \abso{\scal{v}{Ce^{itH}(h\mathds{1}_I)(H)u}_\Hi}^2\mathrm{d}t&\leq c_3\bigg(\int_\R \abso{\scal{v}{Ce^{itH_0}\mathds{1}_{I}(H_0)u}_\Hi}^2\mathrm{d}t\\
		&+\int_\R\abso{\check{g}_1^\pm(t)}^2\mathrm{d}t+\int_\R\abso{\check{g}_2^\pm(t)}^2\mathrm{d}t\bigg),
	\end{align*}
	where for $i\in\lbrace 1,2\rbrace$,  $\check{g}_i^\pm$ is the inverse Fourier transform of $g_i^\pm$, i.e.
	\begin{equation*}
		\check{g}_i^\pm(t):=\int_\R e^{it\lambda}g_i^\pm(\lambda)\mathrm{d}\lambda
	\end{equation*}
	As $g_1^\pm$ and $g_2^\pm$ are in $L^2(\R,\C)$, Plancherel's equality implies that there exists $c_4>0$ such that 
	\begin{align*}
		\int_0^\infty \abso{\scal{v}{Ce^{itH}(h\mathds{1}_I)(H)u}_\Hi}^2\mathrm{d}t\leq c_4\bigg( &\int_\R \abso{\scal{v}{Ce^{itH_0}h(H_0)\mathds{1}_{I}(H_0)u}_\Hi}^2\mathrm{d}t\\
		&+\int_\R\abso{\check{g}_1^\pm(\lambda)}^2\mathrm{d}\lambda+\int_\R\abso{\check{g}_2^\pm(\lambda)}^2\mathrm{d}\lambda\bigg).
	\end{align*}
	Inserting \eqref{eq:estimg1}--\eqref{eq:estimg2} into the last inequality and using that $C$ is relatively smooth with respect to $H_0$, we obtain
	\begin{align*}
		\int_0^\infty \abso{\scal{v}{Ce^{itH}(h\mathds{1}_I)(H)u}_\Hi}^2\mathrm{d}t\leq c_5\|u\|_\Hi^2\|v\|_\Hi^2.
		\end{align*}
	Using $\norme{Ce^{itH}(h\mathds{1}_I)(H)u}_\Hi=\lim_{n\to\infty}\scal{v_n}{Ce^{itH}(h\mathds{1}_I)(H)u}_\Hi$ for some normalized sequence $(v_n)_{n\in\N}$ together with Fatou's lemma, this leads to
\eqref{eq:proof kato smooth local estimate to show}.
\end{proof} 

Using this lemma we obtain the existence of the local regularized wave operators.

\begin{proposition}
	Suppose that Hypotheses \ref{hyp:LAP H0}-\ref{hyp:Conjugate operator} hold. Let $I\subset\sigma_\mathrm{ess}(H)$ be a closed interval. Then $W_\pm(H,H_0,I)$ and $W_\pm(H^\star,H_0,I)$ exist.
\end{proposition}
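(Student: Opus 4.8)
The plan is to derive the statement from the abstract Cook-type criteria of Propositions~\ref{prop:existence strong limit wave operator general 1} and \ref{prop:existence strong limit wave operator general 2}, in exactly the same way as the (non-local) wave operators were obtained in Proposition~\ref{prop:wave operator exist}. The only genuinely new ingredient required is the relative smoothness of the regularized spectral projection, which is precisely the content of Lemma~\ref{lemma:Cr1_H H smooth} just established; so the proof reduces to checking that the hypotheses of Propositions~\ref{prop:existence strong limit wave operator general 1}--\ref{prop:existence strong limit wave operator general 2} are met.

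Concretely, I would first note that $(h\mathds{1}_I)(H)$ is constructed from the resolvent of $H$ (see the integral representation \eqref{eq:def regularized spectral projection}), hence commutes with $H$ and with the group $\{e^{itH}\}_{t\in\R}$; this rewrites $W_\pm(H,H_0,I)=\slim_{t\to\pm\infty}(h\mathds{1}_I)(H)\,e^{itH}e^{-itH_0}$, which is exactly the shape handled by Proposition~\ref{prop:existence strong limit wave operator general 1} with $A=(h\mathds{1}_I)(H)$. To verify the hypothesis of that proposition, I would identify the adjoint $A^\star=(\bar h\mathds{1}_I)(H^\star)$ by taking adjoints in \eqref{eq:def regularized spectral projection} (using $\Res_H(z)^\star=\Res_{H^\star}(\bar z)$ and $\overline{h(\bar z)}=\bar h(z)$), observe that it commutes with $e^{itH^\star}$, and then invoke the $H^\star$-part of Lemma~\ref{lemma:Cr1_H H smooth}, namely that $C(\bar h\mathds{1}_I)(H^\star)$ is relatively smooth with respect to $H^\star$. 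This gives $\int_0^\infty\norme{Ce^{\pm itH^\star}A^\star u}_\Hi^2\,\mathrm{d}t\le c\norme{u}_\Hi^2$ for both signs (each being dominated by the full-line smoothness integral), so Proposition~\ref{prop:existence strong limit wave operator general 1} applies and the strong limits $W_+(H,H_0,I)$ and $W_-(H,H_0,I)$ exist.

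For $W_\pm(H^\star,H_0,I)$ I would run the identical argument with the roles of $H$ and $H^\star$ exchanged: Hypothesis~\ref{hyp:LAP H0} involves only $H_0$ and is untouched, $-iH^\star$ still generates a strongly continuous group since $H^\star=H_0+V^*$ with $V^*$ bounded, $(\bar h\mathds{1}_I)(H^\star)$ commutes with $e^{itH^\star}$, its adjoint is $(h\mathds{1}_I)(H)$, and the relative smoothness of $C(h\mathds{1}_I)(H)$ with respect to $H$ from Lemma~\ref{lemma:Cr1_H H smooth} supplies the needed bound; alternatively one can apply Proposition~\ref{prop:existence strong limit wave operator general 2} directly. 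In either case the strong limits defining $W_\pm(H^\star,H_0,I)$ exist.

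I do not anticipate a real obstacle: the substantive estimate (Kato-smoothness of $C(h\mathds{1}_I)(H)$, whose proof goes through the functional calculus \eqref{eq:reg_funct_calc}) has already been carried out in Lemma~\ref{lemma:Cr1_H H smooth}, and what remains is bookkeeping — the commutation of the regularized spectral projections with the respective evolution groups, the identification $((h\mathds{1}_I)(H))^\star=(\bar h\mathds{1}_I)(H^\star)$, and matching the hypotheses of Propositions~\ref{prop:existence strong limit wave operator general 1}--\ref{prop:existence strong limit wave operator general 2} with the correct operator in each slot.
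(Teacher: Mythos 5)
Your proposal is correct and follows exactly the paper's own route: the paper's proof is the one-line observation that Lemma \ref{lemma:Cr1_H H smooth} supplies the Kato-smoothness bounds needed to apply Propositions \ref{prop:existence strong limit wave operator general 1} and \ref{prop:existence strong limit wave operator general 2} with $A=(h\mathds{1}_I)(H)$ (resp.\ its adjoint). Your additional bookkeeping — the commutation of $(h\mathds{1}_I)(H)$ with $e^{itH}$ and the identification of its adjoint with $(\bar h\mathds{1}_I)(H^\star)$ — is exactly what the paper leaves implicit.
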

\begin{proof}
It suffices to apply Lemma \ref{lemma:Cr1_H H smooth} together with Propositions \ref{prop:existence strong limit wave operator general 1} and \ref{prop:existence strong limit wave operator general 2}.
\end{proof}

Similarly as before, the adjoints of $W_\mp(H^\star,H_0,I)$ and $W_\mp(H,H_0,I)$ are formally given by
\begin{equation*}
	W_\pm(H_0,H,I):=\slim_{t\rightarrow\pm\infty}e^{itH_0}(h\mathds{1}_I)(H)e^{-itH}\quad 	W_\pm(H_0,H^\star,I):=\slim_{t\rightarrow\pm\infty}
e^{itH_0}(\bar{h}\mathds{1}_I)(H^\star)e^{-itH^\star}.
\end{equation*}
We then have the following existence result.

\begin{proposition}\label{prop:adjoint_local}
	Suppose that Hypotheses \ref{hyp:LAP H0}-\ref{hyp:Conjugate operator} hold. Let $I\subset\sigma_\mathrm{ess}(H)$ be a closed interval. Then the wave operators $W_\pm(H_0,H,I)$ and $W_\pm(H_0,H^\star,I)$ exist and satisfy
	\begin{equation*}
	W_\pm(H_0,H,I)^\star=W_\mp(H^\star,H_0,I), \quad W_\pm(H_0,H^\star,I)^\star=W_\mp(H,H_0,I).
	\end{equation*}
\end{proposition}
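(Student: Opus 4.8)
The argument parallels the proofs of Propositions \ref{prop:existence wave operator adjoint} and \ref{prop:adjoint property wave operator} and requires no new estimate beyond Lemma \ref{lemma:Cr1_H H smooth}. For the existence of $W_\pm(H_0,H,I)$, I would apply Proposition \ref{prop:existence strong limit wave operator general 2} with $A:=(h\mathds{1}_I)(H)$: by Lemma \ref{lemma:Cr1_H H smooth} the operator $C(h\mathds{1}_I)(H)$ is relatively smooth with respect to $H$ in the sense of Kato, so $\int_0^\infty\norme{Ce^{\pm itH}Au}_\Hi^2\mathrm{d}t\leq c\norme{u}_\Hi^2$ and Proposition \ref{prop:existence strong limit wave operator general 2} yields the existence of $\slim_{t\rightarrow\pm\infty}e^{itH_0}e^{-itH}(h\mathds{1}_I)(H)$. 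Since $(h\mathds{1}_I)(H)$ commutes with $e^{-itH}$, by the functional calculus \eqref{eq:reg_funct_calc} of \cite{FaFr22_06} (which gives $e^{itH}(h\mathds{1}_I)(H)=(h\mathds{1}_I)(H)e^{itH}$), this strong limit coincides with $W_\pm(H_0,H,I)$. In the same way, applying Proposition \ref{prop:existence strong limit wave operator general 1} with $A:=(h\mathds{1}_I)(H)$, so that $A^\star=(\bar h\mathds{1}_I)(H^\star)$, and invoking the second smoothness estimate of Lemma \ref{lemma:Cr1_H H smooth}, one obtains the existence of $\slim_{t\rightarrow\pm\infty}e^{itH_0}e^{-itH^\star}(\bar h\mathds{1}_I)(H^\star)=W_\pm(H_0,H^\star,I)$.

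For the adjoint relations, I would proceed exactly as in the proof of Proposition \ref{prop:adjoint property wave operator}. The only algebraic ingredient needed is the identity
\begin{equation*}
(h\mathds{1}_I)(H)^\star=(\bar h\mathds{1}_I)(H^\star),
\end{equation*}
which follows from the weak-limit representation \eqref{eq:def regularized spectral projection} of $(h\mathds{1}_I)(H)$ together with $\Res_H(z)^\star=\Res_{H^\star}(\bar z)$ and $\overline{h(z)}=\bar h(\bar z)$ (recall that the spectral singularities $\lambda_j\in\sigma_\mathrm{ess}(H)$ are real, and that $\bar h$ is the regularizer built with $\bar z_0$ in place of $z_0$); alternatively, it may be cited from \cite{FaFr22_06}. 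Granting this, for all $u,v\in\Hi$ and all $t$ one has the pointwise identity
\begin{equation*}
\scal{e^{itH_0}(h\mathds{1}_I)(H)e^{-itH}u}{v}_\Hi=\scal{u}{e^{itH^\star}(\bar h\mathds{1}_I)(H^\star)e^{-itH_0}v}_\Hi,
\end{equation*}
using $(e^{-itH})^\star=e^{itH^\star}$ and $(e^{itH_0})^\star=e^{-itH_0}$. Letting $t$ tend to the relevant infinity, the left-hand side converges to $\scal{W_\pm(H_0,H,I)u}{v}_\Hi$ and the right-hand side to $\scal{u}{W_\mp(H^\star,H_0,I)v}_\Hi$, whence $W_\pm(H_0,H,I)^\star=W_\mp(H^\star,H_0,I)$. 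The identity $W_\pm(H_0,H^\star,I)^\star=W_\mp(H,H_0,I)$ follows in the same way after exchanging the roles of $H$ and $H^\star$ (using $H^{\star\star}=H$ and $\overline{\bar h}=h$).

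I do not expect a genuine obstacle here. The two points deserving care are: first, the commutation $e^{itH}(h\mathds{1}_I)(H)=(h\mathds{1}_I)(H)e^{itH}$, which lets one move the regularized spectral projection freely inside the strong limits, so that the conclusions of Propositions \ref{prop:existence strong limit wave operator general 1}--\ref{prop:existence strong limit wave operator general 2} match the definitions of $W_\pm(H_0,H,I)$ and $W_\pm(H_0,H^\star,I)$; and second, the correct identification of the adjoint of $(h\mathds{1}_I)(H)$, which requires both passing from $H$ to $H^\star$ and conjugating the regularizer. All the analytic difficulty has already been absorbed into Lemma \ref{lemma:Cr1_H H smooth}.
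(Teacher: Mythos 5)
Your argument is exactly the paper's: the printed proof consists of one sentence invoking Lemma \ref{lemma:Cr1_H H smooth} together with Propositions \ref{prop:existence strong limit wave operator general 1} and \ref{prop:existence strong limit wave operator general 2} for existence, and ``a direct computation'' for the adjoint identities. Your existence part fills this in correctly, including the two points you single out (commuting $(h\mathds{1}_I)(H)$ through $e^{-itH}$ via the functional calculus, and identifying $(h\mathds{1}_I)(H)^\star=(\bar h\mathds{1}_I)(H^\star)$ from \eqref{eq:def regularized spectral projection}).

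One bookkeeping point in the adjoint part does not go through as written. Your displayed identity
\begin{equation*}
\scal{e^{itH_0}(h\mathds{1}_I)(H)e^{-itH}u}{v}_\Hi=\scal{u}{e^{itH^\star}(\bar h\mathds{1}_I)(H^\star)e^{-itH_0}v}_\Hi
\end{equation*}
is correct, but if $W_\pm(H_0,H,I)$ is the limit as $t\to\pm\infty$ (the natural reading of the definition, by analogy with $W_\pm(H_0,H)$), then sending $t\to+\infty$ on both sides gives the right-hand limit $\scal{u}{W_+(H^\star,H_0,I)v}_\Hi$, since $W_+(H^\star,H_0,I)=\slim_{t\to+\infty}e^{itH^\star}(\bar h\mathds{1}_I)(H^\star)e^{-itH_0}$ by definition. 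The computation therefore yields $W_\pm(H_0,H,I)^\star=W_\pm(H^\star,H_0,I)$, with the \emph{same} superscript on both sides --- consistent with the non-local analogue $(W_\pm(H,H_0))^\star=W_\pm(H_0,H^\star)$ of Proposition \ref{prop:adjoint property wave operator} --- whereas your final sentence asserts the $\mp$ version stated in the proposition. Note that here, unlike in the non-local case, no compensating swap of regularizers occurs, because the single function $h$ regularizes all singularities in $I$ for both time directions. The phrase ``letting $t$ tend to the relevant infinity'' hides exactly this step; you should make the direction of the limit explicit, and either conclude the same-sign identity or point out that the $\mp$ in the statement requires the definition of $W_\pm(H_0,H,I)$ to be taken with $t\to\mp\infty$. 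This is a discrepancy inherited from the paper's own (inconsistent) sign conventions in this subsection rather than a flaw in your method, but as written the last line of your proof does not follow from the computation preceding it.
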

\begin{proof}
Again, it suffices to apply Lemma \ref{lemma:Cr1_H H smooth} together with Propositions \ref{prop:existence strong limit wave operator general 1} and \ref{prop:existence strong limit wave operator general 2} to obtain the existence. The adjoint properties follow from a direct computation.
\end{proof}

\subsubsection{Properties of the local regularized wave operators}

In this section we reformulate the properties of regularized wave operators stated in Section \ref{subsec:reg_wave} for the local regularized wave operators. First we have the following intertwining properties whose proofs, left to the reader, rely on the same arguments as those used in  the proof of Proposition \ref{prop:interwining property}.

\begin{proposition}\label{prop:interwining property local}
	Suppose that Hypotheses \ref{hyp:LAP H0}-\ref{hyp:Conjugate operator} hold.
	Then for all $t\in\R$, for all $u\in\Hi$,
	\begin{equation*}
		e^{itH}W_\pm(H,H_0,I)u=W_\pm(H,H_0,I)e^{itH_0}u \quad \text{and}\quad e^{itH^\star}W_\pm(H^\star,H_0,I)u=W_\pm(H^\star,H_0,I)e^{itH_0}u.
	\end{equation*}
	In particular, $W_\pm(H,H_0,I)\mathcal{D}(H_0)\subset\mathcal{D}(H_0)$,  $W_\pm(H^\star,H_0,I)\mathcal{D}(H_0)\subset\mathcal{D}(H_0)$ and for all $ u\in\mathcal{D}(H_0)$,
	\begin{equation*}
		HW_\pm(H,H_0,I)u=W_\pm(H,H_0,I)H_0u\quad \text{and}\quad H^\star W_\pm(H^\star,H_0,I)u=W_\pm(H^\star,H_0,I)H_0u.
	\end{equation*}
	Moreover, 
	\begin{align}\label{eq:interwining local spec proj}
		&(h\mathds{1}_I)(H)W_\pm(H,H_0,I)u=W\pm(H,H_0,I)h(H_0)\mathds{1}_I(H_0)u\\
		&(h\mathds{1}_I)(H^\star)W_\pm(H^\star,H_0,I)u=W_\pm(H^\star,H_0,I)h(H_0)\mathds{1}_I(H_0)u
	\end{align}
\end{proposition}

	Note that \eqref{eq:interwining local spec proj} relies on the existence of the spectral projection and the interwining property for the resolvant of $H$ and $H_0$, where the interwining property for the resolvent of $H$ and $H_0$ is proved by using the Laplace transform of the resolvent.


The next two propositions characterize the kernels and ranges of $W_\pm(H,H_0,I)$ and $W_\pm(H_0,H,I)$. 
The proof of Proposition \ref{prop:kernel and range local wave operator} relies on the same idea as that of Proposition \ref{prop:range_and_ker_of_wave_operator}, up to some technicalities due to the use of the regularized functional calculus \eqref{eq:reg_funct_calc}. Recall that $h$ stands for the function defined in \eqref{eq:defh} which regularizes the spectral singularities of $H$ in $I$.

\begin{proposition}\label{prop:kernel and range local wave operator}
		Suppose that Hypotheses \ref{hyp:LAP H0}-\ref{hyp:Conjugate operator} hold. Let $I\subset\sigma_\mathrm{ess}(H)$ be a closed interval. Then 
		\begin{equation*}
			\Ker(W_\pm(H,H_0,I))=\Ker(\mathds{1}_I(H_0)) \quad \text{and} \quad \Ker(W_\pm(H^\star,H_0,I))=\Ker(\mathds{1}_I(H_0)).
		\end{equation*}
		 Moreover,
		\begin{equation*}
			\Ran(W_\pm(H,H_0,I))^\mathrm{cl}= \Ran((r\mathds{1}_I)(H))^\mathrm{cl},\quad \text{and}\quad \Ran(W_\pm(H,H^\star_0,I))^\mathrm{cl}= \Ran((r\mathds{1}_I)(H^\star))^\mathrm{cl}.
		\end{equation*} 
\end{proposition}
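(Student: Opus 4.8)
The plan is to follow the scheme of Propositions \ref{prop:range_and_ker_of_wave_operator} and \ref{prop:range/kernel wave operator bis}, with the regularized functional calculus \eqref{eq:reg_funct_calc} playing the role that the resolution of identity \eqref{eq:resolution identity} played there. It is enough to treat $W_+(H,H_0,I)$; the operators $W_-(H,H_0,I)$ and $W_\pm(H^\star,H_0,I)$ are handled in the same way after reversing time and/or conjugating with the operator $J$ of Hypothesis \ref{hyp:Conjugate operator}. I would first collect three ingredients. From \eqref{eq:reg_funct_calc} one gets a constant $M_I>0$ with $\norme{e^{itH}(h\mathds{1}_I)(H)w}_\Hi\le M_I\norme{(h\mathds{1}_I)(H)w}_\Hi$ for all $t\in\R$ and $w\in\Hi$; since $(h\mathds{1}_I)(H)$ commutes with $e^{itH}$, replacing $w$ by $e^{itH}w$ also gives the lower bound $M_I^{-1}\norme{(h\mathds{1}_I)(H)w}_\Hi\le\norme{e^{itH}(h\mathds{1}_I)(H)w}_\Hi$. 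Next, $K:=(h\mathds{1}_I)(H)-h(H_0)\mathds{1}_I(H_0)$ is compact, by the same resolvent-identity argument (using the relative compactness of $C$) that shows $r(H)-r(H_0)$ is compact in the proof of Proposition \ref{prop:range_and_ker_of_wave_operator}; in particular $(h\mathds{1}_I)(H)(\mathrm{Id}-\mathds{1}_I(H_0))=K(\mathrm{Id}-\mathds{1}_I(H_0))$ is compact. Finally, $e^{-itH_0}u\rightharpoonup 0$ as $t\to\pm\infty$ for every $u$ (since $H_0$ has purely absolutely continuous spectrum), and $h(H_0)$ is injective (since $h$ vanishes only on a finite subset of $\R$ and $H_0$ is absolutely continuous).

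For the kernel: if $\mathds{1}_I(H_0)u=0$ then $u=(\mathrm{Id}-\mathds{1}_I(H_0))u$, so $(h\mathds{1}_I)(H)e^{-itH_0}u=K e^{-itH_0}u\to 0$ strongly (a compact operator applied to a weakly null net), whence $\norme{e^{itH}(h\mathds{1}_I)(H)e^{-itH_0}u}_\Hi\le M_I\norme{(h\mathds{1}_I)(H)e^{-itH_0}u}_\Hi\to 0$ and $u\in\Ker(W_+(H,H_0,I))$. Conversely, if $W_+(H,H_0,I)u=0$, the lower bound forces $\norme{(h\mathds{1}_I)(H)e^{-itH_0}u}_\Hi\to 0$; since $Ke^{-itH_0}u\to 0$, we obtain $\norme{h(H_0)\mathds{1}_I(H_0)u}_\Hi=\norme{e^{-itH_0}h(H_0)\mathds{1}_I(H_0)u}_\Hi\to 0$, so $h(H_0)\mathds{1}_I(H_0)u=0$ and therefore $\mathds{1}_I(H_0)u=0$ by injectivity of $h(H_0)$. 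Hence $\Ker(W_+(H,H_0,I))=\Ker(\mathds{1}_I(H_0))$, and likewise for $W_+(H^\star,H_0,I)$.

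For the range I would pass to the adjoint: by Proposition \ref{prop:adjoint_local}, $(W_+(H,H_0,I))^\star=W_-(H_0,H^\star,I)$, so $\Ran(W_+(H,H_0,I))^{\mathrm{cl}}=\Ker(W_-(H_0,H^\star,I))^\perp$, and it remains to identify this kernel, reproducing the argument of Proposition \ref{prop:range/kernel wave operator bis}. Let $u\in\Ker(W_-(H_0,H^\star,I))$. Using unitarity of $e^{itH_0}$ and that $(\bar h\mathds{1}_I)(H^\star)$ commutes with $e^{-itH^\star}$, this means $\norme{e^{-itH^\star}(\bar h\mathds{1}_I)(H^\star)u}_\Hi\to 0$ as $t\to-\infty$, i.e. $(\bar h\mathds{1}_I)(H^\star)u\in\Hi_\mathrm{ads}^-(H^\star)=\Hi_\mathrm{p}^-(H^\star)\subset\Hi_\mathrm{p}(H^\star)$ by \eqref{eq:defHads2}. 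On the other hand $(\bar h\mathds{1}_I)(H^\star)$ commutes with $\Pi_\mathrm{p}(H^\star)$ and annihilates $\Hi_\mathrm{p}(H^\star)$ (the generalized eigenspaces for eigenvalues outside $I$ disappear in the limit defining \eqref{eq:def regularized spectral projection}, while those for embedded eigenvalues in $I$, which are spectral singularities, are killed since $\bar h$ vanishes there), so $\Ran((\bar h\mathds{1}_I)(H^\star))\cap\Hi_\mathrm{p}(H^\star)=\{0\}$ and thus $(\bar h\mathds{1}_I)(H^\star)u=0$. Writing $r=hq$ with $q:=\prod_{j\notin\mathcal{S}_I}r_j$, which is holomorphic and zero-free on a neighbourhood of $I$, one checks by pulling $\bar q(H^\star)$ out of the integral in \eqref{eq:def regularized spectral projection} that $(\bar r\mathds{1}_I)(H^\star)=\bar q(H^\star)(\bar h\mathds{1}_I)(H^\star)$, with $\bar q(H^\star)$ injective on $\Ran((\bar h\mathds{1}_I)(H^\star))\subset\Hi_\mathrm{ac}(H^\star)$ (the restriction of $H^\star$ to $\Hi_\mathrm{ac}(H^\star)$ has no eigenvalue). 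Hence $\Ker(W_-(H_0,H^\star,I))=\Ker((\bar h\mathds{1}_I)(H^\star))=\Ker((\bar r\mathds{1}_I)(H^\star))$, and, taking orthogonal complements and using $((r\mathds{1}_I)(H))^\star=(\bar r\mathds{1}_I)(H^\star)$ (from \cite{FaFr22_06}), $\Ran(W_+(H,H_0,I))^{\mathrm{cl}}=\Ran((r\mathds{1}_I)(H))^{\mathrm{cl}}$. The statement for $W_\pm(H^\star,H_0,I)$ follows from $(W_\pm(H^\star,H_0,I))^\star=W_\mp(H_0,H,I)$ in the same way.

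The main obstacle in this plan is the compactness of $K=(h\mathds{1}_I)(H)-h(H_0)\mathds{1}_I(H_0)$: although morally it is the off-diagonal part of a resolvent-type difference and $C$ is relatively compact, making this rigorous requires controlling the boundary values $C\Res_H(\lambda\pm i0^+)C$ on $I$ (via Hypotheses \ref{hyp:LAP H0} and \ref{hyp:spectral singularities}) together with the norm-convergence of the integral defining the regularized spectral projection \eqref{eq:def regularized spectral projection}, which is where the results of \cite{FaFr22_06} enter. The remaining points — that $(\bar h\mathds{1}_I)(H^\star)$ kills $\Hi_\mathrm{p}(H^\star)$ and that $\bar q(H^\star)$ can be pulled out of the defining integral — are routine bookkeeping with the regularized functional calculus.
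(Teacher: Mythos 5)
Your treatment of the ranges is essentially the paper's: duality via Proposition \ref{prop:adjoint_local}, identification of $\Ker(W_\mp(H_0,H^\star,I))$ with $\Ker((\bar h\mathds{1}_I)(H^\star))$ through $\Hi_\mathrm{ads}^\pm(H^\star)=\Hi_\mathrm{p}^\pm(H^\star)$ and the vanishing of the $\Hi_\mathrm{p}$-component of vectors in $\Ran((\bar h\mathds{1}_I)(H^\star))$; your extra factorization $r=hq$ to pass from $\bar h$ to $\bar r$ is a reasonable way to land on $\Ran((r\mathds{1}_I)(H))^{\mathrm{cl}}$. The kernel part, however, rests on an inequality that the functional calculus does not provide. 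What \eqref{eq:reg_funct_calc} gives is $\norme{e^{itH}(h\mathds{1}_I)(H)w}_\Hi\le M_I\norme{w}_\Hi$, with $\norme{w}_\Hi$ — not $\norme{(h\mathds{1}_I)(H)w}_\Hi$ — on the right; since $(h\mathds{1}_I)(H)$ is not a projection, your stronger form does not follow, and the group trick then only yields $M_I^{-1}\norme{(h\mathds{1}_I)(H)^2w}_\Hi\le\norme{e^{itH}(h\mathds{1}_I)(H)w}_\Hi$, with an extra power of $(h\mathds{1}_I)(H)$. The extra power is harmless for $\Ker(W_+(H,H_0,I))\subset\Ker(\mathds{1}_I(H_0))$ — it is precisely why the paper's proof manipulates $(r\mathds{1}_I)(H)^2$ and $(r\mathds{1}_I)(H)^3$ — but it is fatal for the converse inclusion: knowing that $(h\mathds{1}_I)(H)e^{-itH_0}u\to0$ strongly gives no control on $e^{itH}(h\mathds{1}_I)(H)e^{-itH_0}u$, because $e^{itH}$ alone is not uniformly bounded and the available bound only dominates this quantity by $M_I\norme{u}_\Hi$. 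The paper proves $\Ker(\mathds{1}_I(H_0))\subset\Ker(W_+(H,H_0,I))$ by a different route: from $\mathds{1}_I(H_0)u=0$ and the intertwining relations one gets $(r\mathds{1}_I)(H)W_+(H,H_0,I)u=0$ and $(r\mathds{1}_{I^{\mathrm{c}}})(H)W_+(H,H_0,I)u=0$; since $W_+(H,H_0,I)u\in\Hi_\mathrm{ac}(H)$ one also has $r(H)\Pi_\mathrm{p}(H)W_+(H,H_0,I)u=0$, and the resolution of identity \eqref{eq:resolution identity} then forces $r(H)W_+(H,H_0,I)u=0$, whence $W_+(H,H_0,I)u=0$ by injectivity of $r(H)$ on $\Hi_\mathrm{ac}(H)$. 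You need to replace your argument for this inclusion by something of that type.

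The second issue is the compactness of $K=(h\mathds{1}_I)(H)-h(H_0)\mathds{1}_I(H_0)$, which you correctly flag as the main obstacle but underestimate: it is not "the same resolvent-identity argument" as for $r(H)-r(H_0)$. The latter is a difference of rational functions with non-real poles and is handled by elementary resolvent identities; $K$ is a difference of boundary-value integrals defined only as \emph{weak} limits in \eqref{eq:def regularized spectral projection}, and the second resolvent identity leaves singly-weighted factors $\Res_0(\lambda\pm i0^+)C$ whose boundary values need not exist in norm, so compactness of the limit is not automatic. The paper deliberately avoids this object: its only compactness inputs are $r(H)^3\Pi_\mathrm{p}(H)$ (finite rank) and $r(H)^3-r(H_0)^3$, at the price of the longer bookkeeping with the splitting $\sigma_\mathrm{ess}(H)=I\cup I^{\mathrm{c}}$ and \eqref{eq:resolution identity}. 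As written, then, your kernel argument has a genuine gap in both directions — one repairable by importing the paper's intertwining argument, the other by trading the compactness of $K$ for the paper's combination of the resolution of identity with the compactness of $r(H)-r(H_0)$.
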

\begin{proof}
	We show that $\Ker(W_+(H,H_0,I))=\Ker(\mathds{1}_I(H_0))$, the proof the other equalities characterizing the kernels are similar. Let $u\in \Ker(W_+(H,H_0,I))$, then 
	\begin{equation*}
		\lim_{t\rightarrow \infty} \norme{e^{itH}(h\mathds{1}_I)(H)e^{-itH_0}u}_\Hi=0. 
	\end{equation*}
	Composing by $(r\mathds{1}_I)(H)$ and using the intertwining properties of Proposition \ref{prop:interwining property local}, we obtain
	\begin{equation*}
		\lim_{t\rightarrow \infty}\norme{e^{itH}(h\mathds{1}_I)(H)e^{-itH_0}\mathds{1}_I(H_0)r(H_0)u}_\Hi=0. 
	\end{equation*}
	Next since $r=h\tilde r$ for some rational function $\tilde r$, we also have
	\begin{equation*}
		\lim_{t\rightarrow \infty} \norme{e^{itH}(r\mathds{1}_I)(H)e^{-itH_0}\mathds{1}_I(H_0)r(H_0)u}_\Hi=0. 
	\end{equation*}
	We claim that for all $t\in\R$, for all $v\in\Hi$, 
	\begin{equation}\label{eq:inequality proof kernel local regularized wave operator}
		\norme{e^{itH}(h\mathds{1}_I)(H)e^{-itH_0}v}_\Hi\geq \norme{(r\mathds{1}_I)(H)^2e^{-itH_0}v}_\Hi.
	\end{equation}
	Indeed, for all $t\in\R$, by the functional calculus \eqref{eq:reg_funct_calc} $\lbrace e^{itH}\rbrace_{t\in\R}$ is uniformly bounded on the range of $h(H)$. Thus we have 
	\begin{equation*}
		\norme{e^{-itH}(r\mathds{1}_I)(H)v}_\Hi\leq\norme{v}_\Hi.
	\end{equation*}
	Thus for all $t\in\R$, 
	\begin{equation*}
		\norme{(r\mathds{1}_I)(H)v}_\Hi=\norme{e^{-itH}(r\mathds{1}_I)(H)e^{itH}v}_\Hi\leq\norme{e^{itH}v}_\Hi.
	\end{equation*}	
	Applying the last inequality to $(r\mathds{1}_I)(H)e^{itH_0}v$, we obtain \eqref{eq:inequality proof kernel local regularized wave operator}. Now \eqref{eq:inequality proof kernel local regularized wave operator} yields
	\begin{equation*}
		\lim_{t\rightarrow\infty}\norme{(r\mathds{1}_I)(H)^2e^{-itH_0}\mathds{1}_I(H_0)r(H_0)u}_\Hi=0.
	\end{equation*}
	and hence
	\begin{equation}\label{eq:tech0}
		\lim_{t\rightarrow\infty}\norme{(r\mathds{1}_I)(H)^3e^{-itH_0}\mathds{1}_I(H_0)r(H_0)u}_\Hi=0.
	\end{equation}
	
	Now let $I^\mathrm{c} := (\sigma_\mathrm{ess}(H)\setminus I)^{\mathrm{cl}}$. We claim that 
	\begin{equation}\label{eq:proof inequality proof kernel local 2}
		\lim_{t\rightarrow\infty}\norme{(r\mathds{1}_{I^\mathrm{c}})(H)^3e^{-itH_0}\mathds{1}_I(H_0)r(H_0)u}_\Hi=0.
	\end{equation}
	Indeed, it follows again from the functional calculus \eqref{eq:reg_funct_calc} that there exists $c>0$ such that, for all $t\in\mathbb{R}$,
	\begin{equation*}
		\norme{(r\mathds{1}_{I^\mathrm{c}})(H)^3e^{-itH_0}\mathds{1}_I(H_0)r(H_0)u}_\Hi \leq c\norme{(r\mathds{1}_{I^\mathrm{c}})(H)e^{-itH}(r\mathds{1}_{I^\mathrm{c}})(H)e^{-itH_0}\mathds{1}_{I}(H_0)r(H_0)u}_\Hi.
	\end{equation*}
	Letting $t$ go to $\infty$ yields
	\begin{equation*}
		\limsup_{t\rightarrow\infty}\norme{(r\mathds{1}_{I^\mathrm{c}})(H)^3e^{-itH_0}\mathds{1}_I(H_0)r(H_0)u}_\Hi \leq c\norme{(r\mathds{1}_{I^\mathrm{c}})(H)W_+(H,H_0,I^\mathrm{c})\mathds{1}_{I}(H_0)r(H_0)u}_\Hi.
	\end{equation*}
	Finally using the intertwining properties of Proposition \ref{prop:interwining property local} together with the facts that $\mathds{1}_{I\cap I^\mathrm{c}}(H_0)=0$ since $H_0$ has purely absolutely continuous spectrum, we obtain \eqref{eq:proof inequality proof kernel local 2}. Proceeding in the same way, we have that
	\begin{equation*}
		\limsup_{t\rightarrow\infty}\norme{(r\mathds{1}_{I})(H)(r\mathds{1}_{I^\mathrm{c}})(H)^2e^{-itH_0}\mathds{1}_I(H_0)r(H_0)u}_\Hi=0.
	\end{equation*}
	and 
	\begin{equation*}
		\limsup_{t\rightarrow\infty}\norme{(r\mathds{1}_{I^\mathrm{c}})(H)(r\mathds{1}_{I})(H)^2e^{-itH_0}\mathds{1}_I(H_0)r(H_0)u}_\Hi=0.
	\end{equation*}
	Summing the last two previous limits with \eqref{eq:tech0} and \eqref{eq:proof inequality proof kernel local 2}  gives
	\begin{equation}\label{eq:tech 1}
		\limsup_{t\rightarrow\infty} \norme{(r\mathds{1}_\mathrm{ess})(H)^3e^{-itH_0}r(H_0)\mathds{1}_I(H_0)u}_\Hi=0. 
	\end{equation}
	
	Now $r(H)^3\Pi_\mathrm{p}(H)$ is compact under our assumptions and therefore
	\begin{equation}\label{eq:tech 2}
		\lim_{t\rightarrow \infty} \norme{r(H)^3\Pi_\mathrm{p}(H)e^{-itH_0}\mathds{1}_I(H_0)r(H_0)u}_\Hi=0.
	\end{equation}
	Summing \eqref{eq:tech 1} and \eqref{eq:tech 2}, it follows from the spectral resolution formula \eqref{eq:resolution identity} and the triangular inequality that
	\begin{equation*}
		\limsup_{t\rightarrow\infty}\norme{r(H)^3e^{-itH_0}r(H_0)\mathds{1}_I(H_0)u}_\Hi=0. 
	\end{equation*}
	Writing
	\begin{equation*}
		r(H)^3=(r(H)^2(r(H)-r(H_0))+r(H)(r(H)-r(H_0))r(H_0)+(r(H)-r(H_0))r(H_0)^2+r(H_0)^3,
	\end{equation*}
	and using that $r(H)-r(H_0)$ is compact, we obtain
	\begin{equation*}
		\lim_{t\rightarrow\infty}\norme{(r(H)^3-r(H_0)^3)e^{-itH_0}\mathds{1}_I(H_0)r(H_0)u}_\Hi=0.
	\end{equation*}
	and hence
	\begin{equation*}
		\lim_{t\rightarrow\infty}\norme{e^{-itH_0}\mathds{1}_I(H_0)r^4(H_0)u}_\Hi=0.
	\end{equation*}
	Since $r(H_0)^4$ is injective (because $H_0$ has no eigenvalue) and $\lbrace e^{-itH_0}\rbrace_{t\in\R}$ is unitary, we conclude that $u\in \Ker(\mathds{1}_I(H_0))$.
	
	Conversely, let $u\in\Ker(\mathds{1}_I(H_0))$. Then 
	\begin{equation*}
		W_+(H,H_0,I)(r\mathds{1}_I)(H_0)u=0
	\end{equation*}
	and using the intertwining properties of Proposition \ref{prop:interwining property local}, we also have
	\begin{equation}\label{eq:a1}
		(r\mathds{1}_I)(H)W_+(H,H_0,I)u=0. 
	\end{equation}
	Moreover, using that 
	\begin{equation*}
		(r\mathds{1}_{I^\mathrm{c}})(H)(r\mathds{1}_I)(H)=0,
	\end{equation*} 
	as follows from the functional calculus \eqref{eq:reg_funct_calc}, where $I^\mathrm{c} := (\sigma_\mathrm{ess}(H)\setminus I)^{\mathrm{cl}}$, we obtain
	\begin{equation}\label{eq:a2}
		(r\mathds{1}_{I^\mathrm{c}})(H)W_+(H,H_0,I)u=0.
	\end{equation}
	
	Now, we claim that 
	\begin{equation}\label{eq:proof kernel local wave operator recip}
		r(H)\Pi_\mathrm{p}(H)W_+(H,H_0,I)u=0. 
	\end{equation}
	Indeed, it suffices to prove that $W_+(H,H_0,I)u\in\Hi_\mathrm{ac}(H)$, where we recall that $\Hi_\mathrm{ac}(H)$ is defined as the closure of $\mathcal{M}(H)$, see \eqref{eq:def_M(H)}.
	Since $H_0$ has purely absolutely continuous spectrum, there exists a sequence $(u_n)_{n\in\N}\subset\mathcal{M}(H_0)$ such that $(u_n)_{n\in\N}$ converges to $u$ when $n\rightarrow \infty$. Using the intertwining properties of Proposition \ref{prop:interwining property local}, we deduce that, for all $n\in\N$, 
	\begin{align*}
		\int_\R\abso{\scal{e^{itH}W_+(H,H_0,I)u_n}{v}_\Hi}^2\mathrm{d}t&=\int_\R\abso{\scal{e^{itH_0}u_n}{W_+(H,H_0,I)^\star v}_\Hi}^2\mathrm{d}t\\
		&\leq c_{u_n}\norme{W_+(H,H_0,I)}_{\mathcal{B}(\Hi)}\norme{v}_\Hi^2. 
	\end{align*}
	Thus $(W_+(H,H_0,I)u_n)_{n\in\N}\subset\mathcal{M}(H)$, which implies \eqref{eq:proof kernel local wave operator recip}. Finally, using the spectral resolution formula \eqref{eq:resolution identity} together with \eqref{eq:a1}, \eqref{eq:a2} and \eqref{eq:proof kernel local wave operator recip} we obtain
	\begin{equation*}
		r(H)W_+(H,H_0,I)u=0.
	\end{equation*}
	Since the restriction $r(H)$ to $\Hi_\mathrm{ac}(H)$ is injective, it follows that $u\in\Ker(W_+(H,H_0,I)$. 
	
	The characterizations of the ranges follow from Propositions \ref{prop:adjoint_local} and \ref{prop: range and kernel of local wave operators bis}.
	%
	\end{proof}


Now we turn to the study of the kernels and ranges of $W_\pm(H_0,H,I)$.

\begin{proposition}\label{prop: range and kernel of local wave operators bis}
	Suppose that Hypotheses \ref{hyp:LAP H0}-\ref{hyp:Conjugate operator} hold. Let $I\subset\sigma_\mathrm{ess}(H)$ be a closed interval. Then 
	\begin{equation*}
		\Ker((h\mathds{1}_I)(H))=\Ker(W_\pm(H_0,H,I)) \quad \text{and}\quad \Ker((\bar{h}\mathds{1}_I)(H^\star))= \Ker((W_\pm(H_0,H^\star,I)).
	\end{equation*}
	Moreover we have 
	\begin{equation*}
		 \Ran(W_\pm(H_0,H,I))^\mathrm{cl} = \Ran(\mathds{1}_I(H_0)) \quad \text{and}\quad \Ran(W_\pm(H_0,H^\star,I))^\mathrm{cl} = \Ran(\mathds{1}_I(H_0)).
	\end{equation*}
\end{proposition}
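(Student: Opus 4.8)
The plan is to establish the two kernel identities directly --- in the spirit of the proof of Proposition~\ref{prop:range/kernel wave operator bis} --- and then to read off the statements on the ranges from the adjoint relations of Proposition~\ref{prop:adjoint_local} combined with the \emph{kernel} part of Proposition~\ref{prop:kernel and range local wave operator}. There is no circularity here: the range assertions of Proposition~\ref{prop:kernel and range local wave operator} were deduced from the present proposition, but only from its kernel assertions, which we prove here independently of Proposition~\ref{prop:kernel and range local wave operator}.

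For the kernels I would treat $W_+(H_0,H,I)$ in detail; the cases of $W_-(H_0,H,I)$ and of the two operators involving $H^\star$ are identical after replacing $t\to+\infty$ by $t\to-\infty$, $H$ by $H^\star$, and $h$ by $\bar h$. The inclusion $\Ker((h\mathds{1}_I)(H))\subseteq\Ker(W_+(H_0,H,I))$ is immediate: $(h\mathds{1}_I)(H)$, being a weak limit of integrals of the resolvents $\Res_H(\lambda\pm i\varepsilon)$, commutes with each bounded operator $e^{-itH}$, so if $(h\mathds{1}_I)(H)u=0$ then $e^{itH_0}(h\mathds{1}_I)(H)e^{-itH}u=e^{itH_0}e^{-itH}(h\mathds{1}_I)(H)u=0$ for all $t$, whence $W_+(H_0,H,I)u=0$. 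For the reverse inclusion, assume $W_+(H_0,H,I)u=0$. Using unitarity of $\{e^{itH_0}\}$ and this commutation, $\norme{e^{-itH}(h\mathds{1}_I)(H)u}_\Hi=\norme{e^{itH_0}(h\mathds{1}_I)(H)e^{-itH}u}_\Hi\to0$ as $t\to+\infty$, so the vector $w:=(h\mathds{1}_I)(H)u$ lies in $\Hi_\mathrm{ads}^+(H)$. By \eqref{eq:defHads2}, $\Hi_\mathrm{ads}^+(H)=\Hi_\mathrm{p}^+(H)$, a finite-dimensional, $\{e^{itH}\}$-invariant subspace on which the orbit $t\mapsto e^{itH}x$ is unbounded as $t\to+\infty$ for every $x\neq0$ (indeed $\norme{e^{-itH}|_{\Hi_\mathrm{p}^+(H)}}_{\mathcal{B}(\Hi)}\to0$ as $t\to+\infty$, so $\norme{e^{itH}x}_\Hi\ge\norme{x}_\Hi/\norme{e^{-itH}|_{\Hi_\mathrm{p}^+(H)}}_{\mathcal{B}(\Hi)}\to\infty$). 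On the other hand $w\in\Ran((h\mathds{1}_I)(H))$, on which $\{e^{itH}\}_{t\in\R}$ is uniformly bounded by the functional calculus \eqref{eq:reg_funct_calc}, so $\sup_{t\in\R}\norme{e^{itH}w}_\Hi<\infty$. These two facts are compatible only if $w=0$, i.e. $u\in\Ker((h\mathds{1}_I)(H))$. Thus $\Ker((h\mathds{1}_I)(H))=\Ker(W_\pm(H_0,H,I))$, and running the same argument with $H^\star,\bar h$ in place of $H,h$ yields $\Ker((\bar h\mathds{1}_I)(H^\star))=\Ker(W_\pm(H_0,H^\star,I))$.

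For the ranges, Proposition~\ref{prop:adjoint_local} gives $W_\pm(H_0,H,I)^\star=W_\mp(H^\star,H_0,I)$, hence $\Ran(W_\pm(H_0,H,I))^\mathrm{cl}=\Ker(W_\pm(H_0,H,I)^\star)^\perp=\Ker(W_\mp(H^\star,H_0,I))^\perp$. By the kernel identity already established in Proposition~\ref{prop:kernel and range local wave operator}, $\Ker(W_\mp(H^\star,H_0,I))=\Ker(\mathds{1}_I(H_0))$, and since $H_0$ is self-adjoint, $\mathds{1}_I(H_0)$ is an orthogonal projection, so $\Ker(\mathds{1}_I(H_0))^\perp=\Ran(\mathds{1}_I(H_0))$. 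This proves $\Ran(W_\pm(H_0,H,I))^\mathrm{cl}=\Ran(\mathds{1}_I(H_0))$, and the same reasoning with $W_\pm(H_0,H^\star,I)^\star=W_\mp(H,H_0,I)$ and $\Ker(W_\mp(H,H_0,I))=\Ker(\mathds{1}_I(H_0))$ gives $\Ran(W_\pm(H_0,H^\star,I))^\mathrm{cl}=\Ran(\mathds{1}_I(H_0))$.

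The step I expect to be the main obstacle is the reverse kernel inclusion: one must recognise that being asymptotically disappearing forces $w$ into $\Hi_\mathrm{p}^\pm(H)$, where the group $\{e^{itH}\}$ is unbounded in one time direction, while lying in $\Ran((h\mathds{1}_I)(H))$ forces the group to be bounded in both directions --- the only resolution being $w=0$. A secondary point is the book-keeping confirming that this argument uses only the \emph{kernel} half of Proposition~\ref{prop:kernel and range local wave operator}, so that no circular dependence with that proposition's range assertions arises.
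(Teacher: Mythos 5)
Your proof is correct, and its overall skeleton is the one the paper uses: the easy inclusion via the commutation of $(h\mathds{1}_I)(H)$ with $e^{-itH}$, the identification of $w:=(h\mathds{1}_I)(H)u$ as an element of $\Hi_\mathrm{ads}^+(H)=\Hi_\mathrm{p}^+(H)$ via unitarity of $e^{itH_0}$, and the range statements read off from Proposition \ref{prop:adjoint_local} together with the kernel half of Proposition \ref{prop:kernel and range local wave operator} (your remark that this avoids circularity is exactly the right book-keeping). The one place where you genuinely diverge is the final step showing $w=0$. The paper writes $w=\Pi_\mathrm{p}^+(H)(h\mathds{1}_I)(H)u$ and then asserts, from the integral representation \eqref{eq:def regularized spectral projection}, that $\scal{\Pi_\mathrm{p}^+(H)(h\mathds{1}_I)(H)u}{v}_\Hi=0$ for all $v$ (the point being that the boundary values $\Res_H(\lambda+i0^+)-\Res_H(\lambda-i0^+)$ cancel on the finite-dimensional invariant subspace $\Hi_\mathrm{p}^+(H)$, whose spectrum is off the real axis). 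You instead prove the stronger and cleaner statement $\Hi_\mathrm{p}^+(H)\cap\Ran((h\mathds{1}_I)(H))=\{0\}$ by a growth dichotomy: on the finite-dimensional invariant subspace $\Hi_\mathrm{p}^+(H)$ every nonzero orbit $t\mapsto e^{itH}x$ blows up as $t\to+\infty$, while on $\Ran((h\mathds{1}_I)(H))$ the group is uniformly bounded by the functional calculus \eqref{eq:reg_funct_calc}. This buys you a self-contained argument that does not require unpacking the integral representation against $\Pi_\mathrm{p}^+(H)$, at the (small) cost of invoking Hypothesis \ref{hyp:VPH} to get finite-dimensionality and hence the operator-norm decay of $e^{-itH}$ on $\Hi_\mathrm{p}^+(H)$. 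Both routes are valid; yours makes explicit a step the paper leaves as ``not difficult to verify''.
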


\begin{proof}
	We show that $\Ker(W_+(H_0,H,I))=\Ker((h\mathds{1}_I)(H))$, the proof the other equalities characterizing the kernels are similar. Let $u\in\Ker(W_+(H_0,H,I))$. Then 
	\begin{equation*}
		\lim_{t\rightarrow \infty}\norme{e^{itH_0}e^{-itH}(h\mathds{1}_I)u}_\Hi=0
	\end{equation*}
	and therefore, by unitarity of $e^{itH_0}$,
	\begin{equation*}
			\lim_{t\rightarrow \infty}\norme{e^{-itH}(h\mathds{1}_I)u}_\Hi=0. 
	\end{equation*}
Hence $(h\mathds{1}_I)u\in\Hi_\mathrm{ads}(H)^+$ (see \eqref{eq:defHads}) which in turn yields $(h\mathds{1}_I)u\in\Hi_\mathrm{p}^+(H)$ by \eqref{eq:defHads2}. Thus we have 
\begin{equation*}
	(h\mathds{1}_I)(H)u=\Pi_\mathrm{p}^+(H)(h\mathds{1}_I)(H)u.
\end{equation*}
Using \eqref{eq:def regularized spectral projection} it is not difficult to verify that, for all $v\in\Hi$, 
\begin{equation*}
	\scal{\Pi_\mathrm{p}^+(H)(h\mathds{1}_I)(H)u}{v}_\Hi=0. 
\end{equation*}
Therefore $\Pi_\mathrm{p}^+(H)(h\mathds{1}_I)(H)u=0$ and hence $u\in \Ker((h\mathds{1}_I)(H))$. 

Conversely, if $u\in\Ker((h\mathds{1}_I)(H))$, then by definition of $W_+(H_0,H,I)$, $u\in\Ker(W_+(H_0,H,I))$. 

	The characterizations of the ranges follow from Propositions \ref{prop:adjoint_local} and \ref{prop:kernel and range local wave operator}.
\end{proof}


\section{Asymptotic completeness}\label{sec:Assymptotic completeness}

In this section, we study a notion of asymptotic completeness for the wave operator $W_\pm(H,H_0)$. 

\begin{definition}
We say that the wave operator $W_\pm(H,H_0)$ or $W_\pm(H^\star,H_0)$ is asymptotically complete if its range is closed.  
\end{definition}

This definition is motivated by Proposition \ref{prop:range_and_ker_of_wave_operator}. Indeed, we already know that, under our assumptions, the wave operators $W_\pm(H,H_0)$ are injective and that their ranges are dense in $\Hi_\mathrm{ac}(H)$. Hence if the wave operators are asymptotically complete, they are bijective. In the following we give conditions ensuring that asymptotic completeness holds. Note that $W_-(H,H_0)$ may be asymptotically complete while $W_+(H,H_0)$ is not. 

\subsection{Asymptotic completeness and spectral singularities}

Our first result shows that if $H$ does not have any spectral singularity, then the wave operators are asymptotically complete. A similar result has been proved in \cite{FaFr18_01} for dissipative operators.

We recall that if $H$ has no incoming/outgoing spectral singularity, then $W_\pm(H,H_0)$ and $W_\pm(H_0,H)$ reduce to 
\begin{equation*}
W_\pm(H,H_0)=\slim_{t\rightarrow\pm\infty}e^{itH}\Pi_\mathrm{ac}(H)e^{-itH_0}, \quad W_\pm(H_0,H):=\slim_{t\rightarrow\pm\infty}e^{itH_0}\Pi_\mathrm{ac}(H)e^{-itH}.
\end{equation*}

\begin{theorem}\label{thm:asymptotic completeness if absence of spectral singularities}
Suppose that Hypotheses \ref{hyp:LAP H0}-\ref{hyp:Conjugate operator} hold. Suppose that $H$ does not have any spectral singularities. Then $W_\pm(H,H_0)$ and $W_\pm(H^\star,H_0)$ are asymptotically complete. 
\end{theorem}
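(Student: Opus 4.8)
The strategy is to establish, in the absence of spectral singularities, the composition (``chain rule'') identities
\begin{equation*}
W_\pm(H_0,H)\,W_\pm(H,H_0)=\mathrm{Id}_\Hi,\qquad W_\pm(H_0,H^\star)\,W_\pm(H^\star,H_0)=\mathrm{Id}_\Hi .
\end{equation*}
Once these are known, $\norme{u}_\Hi\le\norme{W_\pm(H_0,H)}_{\mathcal{B}(\Hi)}\,\norme{W_\pm(H,H_0)u}_\Hi$ for all $u\in\Hi$, so $W_\pm(H,H_0)$ is bounded below and therefore has closed range; since this range is dense in $\Hi_\mathrm{ac}(H)$ by Proposition~\ref{prop:range_and_ker_of_wave_operator}, it equals $\Hi_\mathrm{ac}(H)$, i.e. $W_\pm(H,H_0)$ is asymptotically complete, and likewise for $W_\pm(H^\star,H_0)$. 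Observe first that no spectral singularity means $r_\pm(H)=\mathrm{Id}$, so that $W_\pm(H,H_0)=\slim_{t\to\pm\infty}e^{itH}\Pi_\mathrm{ac}(H)e^{-itH_0}$ and $W_\pm(H_0,H)=\slim_{t\to\pm\infty}e^{itH_0}\Pi_\mathrm{ac}(H)e^{-itH}$; moreover, conjugating by $J$ (Hypothesis~\ref{hyp:Conjugate operator}) one checks that $\norme{C\Res_{H^\star}(z)CW^*}_{\mathcal{B}(\Hi)}=\norme{C\Res_H(\bar z)CW}_{\mathcal{B}(\Hi)}$, so $H^\star$ has no spectral singularity either and the argument below applies verbatim to the starred operators.

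To prove $W_+(H_0,H)W_+(H,H_0)=\mathrm{Id}_\Hi$, fix $u\in\Hi$ and set $v:=W_+(H,H_0)u\in\Hi_\mathrm{ac}(H)$. Since $\Pi_\mathrm{ac}(H)$ commutes with $e^{-isH}$ and $\Pi_\mathrm{ac}(H)v=v$, we have $W_+(H_0,H)v=\slim_{s\to+\infty}e^{isH_0}e^{-isH}v$; and the intertwining relation of Proposition~\ref{prop:interwining property} gives $e^{-isH}v=W_+(H,H_0)e^{-isH_0}u$. Writing $K:=W_+(H,H_0)-\Pi_\mathrm{ac}(H)\in\mathcal{B}(\Hi)$ and using $\Pi_\mathrm{ac}(H)=\mathrm{Id}-\Pi_\mathrm{p}(H)$, we get
\begin{equation*}
e^{isH_0}e^{-isH}v=u-e^{isH_0}\Pi_\mathrm{p}(H)e^{-isH_0}u+e^{isH_0}Ke^{-isH_0}u .
\end{equation*}
As $\Pi_\mathrm{p}(H)$ is finite rank (Hypothesis~\ref{hyp:VPH}) and $e^{-isH_0}u\rightharpoonup0$ as $s\to+\infty$ (the spectrum of $H_0$ being purely absolutely continuous), the second term goes to $0$ in norm, exactly as in \eqref{eq:Pip}. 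So it remains to show $e^{isH_0}Ke^{-isH_0}u\to0$ as $s\to+\infty$.

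For $u\in\mathcal{D}(H_0)$, Cook's formula gives $Ke^{-isH_0}u=i\int_0^\infty e^{irH}\Pi_\mathrm{ac}(H)\,CWC\,e^{-i(r+s)H_0}u\,\mathrm{d}r$. Pairing with $\phi\in\Hi$, using $\Pi_\mathrm{ac}(H)^*=\Pi_\mathrm{ac}(H^\star)$, $(e^{irH})^*=e^{-irH^\star}$, $C^*=C$ and the commutation of $\Pi_\mathrm{ac}(H^\star)$ with $e^{-irH^\star}$, followed by the Cauchy--Schwarz inequality, one obtains
\begin{equation*}
\abso{\scal{Ke^{-isH_0}u}{\phi}_\Hi}\le\norme{W}_{\mathcal{B}(\Hi)}\Bigl(\int_s^\infty\norme{Ce^{-i\rho H_0}u}_\Hi^2\,\mathrm{d}\rho\Bigr)^{1/2}\Bigl(\int_0^\infty\norme{Ce^{-irH^\star}\Pi_\mathrm{ac}(H^\star)\phi}_\Hi^2\,\mathrm{d}r\Bigr)^{1/2}.
\end{equation*}
By Lemma~\ref{Cr Kato smooth H} (with $\bar r_\mp(H^\star)=\mathrm{Id}$) the last factor is $\le c_0^{1/2}\norme{\phi}_\Hi$, while the first tends to $0$ as $s\to+\infty$ because $C$ is relatively $H_0$-smooth, that is, \eqref{eq:Kato_smooth} holds. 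Hence $\norme{Ke^{-isH_0}u}_\Hi\to0$ for every $u\in\mathcal{D}(H_0)$; since $\sup_s\norme{e^{isH_0}Ke^{-isH_0}}_{\mathcal{B}(\Hi)}\le\norme{K}_{\mathcal{B}(\Hi)}<\infty$ and $\mathcal{D}(H_0)$ is dense, an $\varepsilon/3$ argument yields $e^{isH_0}Ke^{-isH_0}u\to0$ for all $u\in\Hi$. Collecting the three terms gives $W_+(H_0,H)W_+(H,H_0)u=u$; the limit $s\to-\infty$ is treated identically (using the $e^{+irH^\star}$ half of Lemma~\ref{Cr Kato smooth H} and the tail $\int_{-\infty}^{s}$), and replacing $H$ by $H^\star$ (resp. $W$ by $W^*$) throughout handles $W_\pm(H^\star,H_0)$. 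This proves the chain rule identities, and hence the theorem.

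The crux of the argument is the decay $e^{isH_0}Ke^{-isH_0}u\to0$: approximating $v$ directly by $e^{iTH}\Pi_\mathrm{ac}(H)e^{-iTH_0}u$ turns out to be circular, and the genuine input is the Cook-type estimate above, which relies on the fact that $C\Pi_\mathrm{ac}(H^\star)$ is \emph{globally} $H^\star$-smooth (Lemma~\ref{Cr Kato smooth H} with trivial regularizing factors) and that $W_\pm(H,H_0)$ is bounded (through \eqref{eq:group of H uniformly bounded }). Both facts hold precisely because $H$, hence $H^\star$, has no spectral singularity; this is where the hypothesis of the theorem enters in an essential way, and it is also what distinguishes the present situation from the general (singular) case, where $\{e^{itH}\}$ need not be uniformly bounded on $\Hi_\mathrm{ac}(H)$.
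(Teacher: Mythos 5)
Your proof is correct, but it runs in the opposite direction from the paper's. You establish the left-inverse identity $W_\pm(H_0,H)W_\pm(H,H_0)=\mathrm{Id}_\Hi$ on all of $\Hi$, which makes $W_\pm(H,H_0)$ bounded below, hence of closed range, and then you invoke the density of the range in $\Hi_\mathrm{ac}(H)$ from Proposition \ref{prop:range_and_ker_of_wave_operator}. The paper instead proves the right-inverse identity $W_\pm(H,H_0)W_\pm(H_0,H)=\mathrm{Id}$ on $\Hi_\mathrm{ac}(H)$, in one line: for $u\in\Hi_\mathrm{ac}(H)$ it writes $u=\Pi_\mathrm{ac}(H)e^{itH}e^{-itH_0}\cdot e^{itH_0}e^{-itH}\Pi_\mathrm{ac}(H)u$ and passes to the limit $t\to\pm\infty$, using only the uniform bound \eqref{eq:group of H uniformly bounded } (valid precisely because $r_\mp(H)=\mathrm{Id}$) to justify that a uniformly bounded, strongly convergent family applied to a norm-convergent family converges; this exhibits $\Hi_\mathrm{ac}(H)\subset\Ran(W_\pm(H,H_0))$ directly, with no need for the Cook-type tail estimate, the finite-rank/Riemann--Lebesgue step, or Proposition \ref{prop:range_and_ker_of_wave_operator}. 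Your route is longer and leans on more machinery (the intertwining relations, Lemma \ref{Cr Kato smooth H} for $H^\star$, Hypothesis \ref{hyp:VPH}, and the density of the range), but it buys the explicit quantitative lower bound $\norme{W_\pm(H,H_0)u}_\Hi\geq c\norme{u}_\Hi$ and the identity $W_\pm(H_0,H)W_\pm(H,H_0)=\mathrm{Id}_\Hi$ on the whole space, whereas the paper obtains two-sided invertibility only after combining its surjectivity argument with the separately proven injectivity. Your closing observation correctly isolates where the absence of spectral singularities enters (global smoothness and uniform boundedness of the group on $\Hi_\mathrm{ac}(H)$); the same is true of the paper's argument, which uses it through \eqref{eq:group of H uniformly bounded }. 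All steps of your argument check out.
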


\begin{proof}
We prove that $W_+(H,H_0)$ is asymptotically complete. By proposition \ref{prop:range_and_ker_of_wave_operator}, we already know that $\Ran(W_+(H,H_0))\subset \Hi_\mathrm{ac}(H)$. Hence, to prove that the range of $W_+(H,H_0)$ is closed, it suffices to show that $\Hi_\mathrm{ac}(H)\subset\Ran(W_+(H,H_0))$. Let $u\in\Hi_\mathrm{ac}(H)$. Then 
\begin{align*}
u&=\Pi_\mathrm{ac}(H)\Pi_\mathrm{ac}(H)u=\Pi_\mathrm{ac}(H)e^{itH}e^{-itH_0}e^{itH_0}e^{-itH}\Pi_\mathrm{ac}(H)u.
\end{align*}
As $e^{itH}$ is uniformly bounded in $t\in\mathbb{R}$ by \eqref{eq:group of H uniformly bounded }, letting $t\rightarrow\infty$, we easily obtain 
\begin{equation*}
u=W_+(H,H_0)W_+(H_0,H)u.
\end{equation*}
Thus $u\in\Ran(W_+(H,H_0))$, which concludes the proof.
\end{proof}

Under the conditions of Theorem \ref{thm:asymptotic completeness if absence of spectral singularities}, we see in particular that $W_+(H,H_0)$ is a left inverse of $W_+(H_0,H)$ and, as $W_+(H,H_0)$ is invertible in $\mathcal{B}(\Hi,\Hi_\mathrm{ac}(H))$, we have
\begin{equation*}
W_+(H,H_0)^{-1}=W_+(H_0,H).
\end{equation*}
Note also that the intertwining properties show that 
\begin{equation*}
Hu=W_+(H,H_0)H_0W_+(H_0,H)u,\quad \forall u\in\mathcal{D}(H_0)\cap\Hi_\mathrm{ac}(H),
\end{equation*}
and 
\begin{equation*}
e^{itH}u=W_+(H,H_0)e^{itH_0}W_+(H_0,H)u,\quad \forall u\in\Hi_\mathrm{ac}(H),\quad\forall t\in \R. 
\end{equation*}
In other words, the restriction of $H$ to $\Hi_\mathrm{ac}(H)$ is similar to $H_0$.

The next result is a reformulation of Theorem \ref{mainthm:boundedness of solution of Schrodinger equation}.

\begin{theorem}\label{thm:boundedness of solution of Schrodinger equation}
Suppose that Hypotheses \ref{hyp:LAP H0}-\ref{hyp:Conjugate operator} hold. Suppose that $H$ does not have spectral singularities. Then there exist $m_1>0$, and $m_2>0$ such that 
\begin{equation*}
m_1\norme{u}_\Hi\leq \norme{e^{itH}u}_\Hi\leq m_2\norme{u}_\Hi, \quad \forall u\in\Hi_\mathrm{ac}(H),\quad \forall t \in \R. 
\end{equation*}
\end{theorem}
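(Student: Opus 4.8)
The plan is to derive both inequalities from the bounded invertibility of $W_+(H,H_0)$ together with the intertwining property. The absence of spectral singularities enters only through the resulting simplification $r_\pm(H)=\mathrm{Id}$, which makes Theorem \ref{thm:asymptotic completeness if absence of spectral singularities} applicable.

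First I would collect the relevant facts. The operator $W_+(H,H_0)\in\mathcal{B}(\Hi)$ exists (Proposition \ref{prop:wave operator exist}); by Proposition \ref{prop:range_and_ker_of_wave_operator} it is injective with $\Ran(W_+(H,H_0))^{\mathrm{cl}}=\Hi_\mathrm{ac}(H)$, and by Theorem \ref{thm:asymptotic completeness if absence of spectral singularities} its range is closed. Hence $\Ran(W_+(H,H_0))=\Hi_\mathrm{ac}(H)$ and, by the bounded inverse theorem, $W_+(H,H_0)$ is invertible in $\mathcal{B}(\Hi,\Hi_\mathrm{ac}(H))$. In particular there is a constant $c>0$ such that $c\norme{v}_\Hi\le\norme{W_+(H,H_0)v}_\Hi\le\norme{W_+(H,H_0)}_{\mathcal{B}(\Hi)}\norme{v}_\Hi$ for every $v\in\Hi$.

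Next, fix $u\in\Hi_\mathrm{ac}(H)$ and let $v:=W_+(H,H_0)^{-1}u\in\Hi$, so that $\norme{W_+(H,H_0)}_{\mathcal{B}(\Hi)}^{-1}\norme{u}_\Hi\le\norme{v}_\Hi\le c^{-1}\norme{u}_\Hi$. By the intertwining property of Proposition \ref{prop:interwining property}, which holds for every $t\in\R$, we have $e^{itH}u=e^{itH}W_+(H,H_0)v=W_+(H,H_0)e^{itH_0}v$. Since $e^{itH_0}$ is unitary, $\norme{e^{itH_0}v}_\Hi=\norme{v}_\Hi$, and applying the two-sided bound above with $w=e^{itH_0}v$ gives, for all $t\in\R$,
\begin{equation*}
\frac{c}{\norme{W_+(H,H_0)}_{\mathcal{B}(\Hi)}}\,\norme{u}_\Hi\ \le\ \norme{e^{itH}u}_\Hi\ \le\ \frac{\norme{W_+(H,H_0)}_{\mathcal{B}(\Hi)}}{c}\,\norme{u}_\Hi ,
\end{equation*}
so one takes $m_1=c\,\norme{W_+(H,H_0)}_{\mathcal{B}(\Hi)}^{-1}$ and $m_2=c^{-1}\norme{W_+(H,H_0)}_{\mathcal{B}(\Hi)}$. (The upper bound can alternatively be read off directly from \eqref{eq:group of H uniformly bounded }, which already yields $\sup_{t\in\R}\norme{e^{itH}\Pi_\mathrm{ac}(H)}_{\mathcal{B}(\Hi)}<\infty$, together with $\Pi_\mathrm{ac}(H)u=u$ on $\Hi_\mathrm{ac}(H)$.)

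There is no serious obstacle: the whole substance is packaged in Theorem \ref{thm:asymptotic completeness if absence of spectral singularities}, and everything else is elementary functional analysis. The only point requiring a little care is to use the intertwining identity $e^{itH}W_+(H,H_0)=W_+(H,H_0)e^{itH_0}$ in the form valid for \emph{every} $t\in\R$ (Proposition \ref{prop:interwining property}), not merely the defining strong limit as $t\to+\infty$, so that a single pair of constants works simultaneously for all times.
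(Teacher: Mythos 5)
Your proof is correct and follows essentially the same route as the paper: both hinge on the invertibility of $W_+(H,H_0)$ in $\mathcal{B}(\Hi,\Hi_\mathrm{ac}(H))$ coming from Theorem \ref{thm:asymptotic completeness if absence of spectral singularities}, the intertwining relation of Proposition \ref{prop:interwining property}, and the unitarity of $e^{itH_0}$. The only (cosmetic) difference is that the paper obtains the upper bound separately from the functional calculus \eqref{eq:functional_calculus_normal}, whereas you extract both bounds from the single similarity identity $e^{itH}u=W_+(H,H_0)e^{itH_0}W_+(H,H_0)^{-1}u$, which is a slight streamlining rather than a different argument.
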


\begin{proof}
Let $u\in\Hi_\mathrm{ac}(H)$. Using the functional calculus \eqref{eq:functional_calculus_normal}, we deduce that there exists $m_2>0$ such that for all $t\in\R$, for all $v\in\Hi$, 
\begin{equation*}
\abso{\scal{v}{e^{itH}u}_\Hi}=\abso{\frac{1}{2\pi i}\lim_{\varepsilon\rightarrow 0^+} \int_{\sigma_\mathrm{ess}(H)}\scal{v}{e^{it\lambda}\left(\Res_H(\lambda+i\varepsilon)-\Res_H(\lambda-i\varepsilon)\right)u}_\Hi\mathrm{d}\lambda}\leq m_2\norme{u}_\Hi\norme{v}_\Hi.
\end{equation*}
Thus 
\begin{equation*}
\norme{e^{itH}u}_\Hi\leq m_2\norme{u}_\Hi.
\end{equation*}
For the other inequality, we use that for all $t\in\R$, 
\begin{equation*}
\norme{e^{itH}u}_\Hi=\norme{W_+(H,H_0)e^{itH_0}W_+(H_0,H)u}_\Hi.
\end{equation*}
By the remarks above, $W_+(H,H_0)$ and $W_+(H_0,H)$ are invertible in $\mathcal{B}(\Hi,\Hi_{\mathrm{ac}}(H))$ and $\mathcal{B}(\Hi_{\mathrm{ac}}(H),\Hi)$, respectively, and the evolution group generated by $H_0$ is unitary. Hence there exists $m_1>0$ such that 
\begin{equation*}
\norme{e^{itH}u}_\Hi\geq m_1\norme{u}_\Hi
\end{equation*}
This concludes the proof. 
\end{proof}

Note that the proof of the second inequality of the theorem above does not involve scattering theory but only functional calculus. 

\subsection{Completeness of the local wave operators} 

%

We recall that if $I$ is a closed interval contained in $\sigma_{\mathrm{ess}}(H)$ such that $H$ does not have any spectral singularity in $I$, the the local wave operators associated to $H$ and $H_0$ reduce to
\begin{equation*}
	W_\pm(H,H_0,I):=\slim_{t\rightarrow\pm \infty} e^{itH}\mathds{1}_I(H)e^{-itH_0}
\end{equation*}
The next theorem shows that in this case $W_\pm(H,H_0,I)$ are asymptotically complete in a natural sense.

\begin{theorem}
	Suppose that Hypotheses \ref{hyp:LAP H0}-\ref{hyp:Conjugate operator} hold. Let $I\subset\sigma_\mathrm{ess}(H)$ be a closed interval not containing any spectral singularity of $H$. Then $W_\pm(H,H_0,I)$ and $W_\pm(H^\star,H_0,I)$ are asymptotically complete in the sense that
	\begin{align*}
	\Ran(W_+(H,H_0,I))=\Ran(\mathds{1}_I(H)), \quad \Ran(W_+(H^\star,H_0,I))=\Ran(\mathds{1}_I(H^\star)).
	\end{align*} 
\end{theorem}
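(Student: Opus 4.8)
The plan is to adapt the proof of Theorem~\ref{thm:asymptotic completeness if absence of spectral singularities}, with $\Pi_\mathrm{ac}(H)$ replaced by the spectral projection $\mathds{1}_I(H)$. First I would record that, since $I$ contains no spectral singularity, one has $h\equiv 1$, so the local wave operators simplify to $W_\pm(H,H_0,I)=\slim_{t\to\pm\infty}e^{itH}\mathds{1}_I(H)e^{-itH_0}$ and $W_\pm(H_0,H,I)=\slim_{t\to\pm\infty}e^{itH_0}\mathds{1}_I(H)e^{-itH}$, the latter being defined on all of $\Hi$ by Proposition~\ref{prop:adjoint_local}. I would treat $W_+(H,H_0,I)$ in detail; the cases of $W_-(H,H_0,I)$ and of $H^\star$ follow verbatim.

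The inclusion $\Ran(W_+(H,H_0,I))\subset\Ran(\mathds{1}_I(H))$ comes for free: $\mathds{1}_I(H)$ is a bounded projection, hence has closed range, and commutes with $e^{itH}$, so $e^{itH}\mathds{1}_I(H)e^{-itH_0}u\in\Ran(\mathds{1}_I(H))$ for all $t,u$, and the strong limit stays in this closed subspace. For the reverse inclusion, let $u=\mathds{1}_I(H)u$ and set $v_t:=e^{itH_0}e^{-itH}\mathds{1}_I(H)u=e^{itH_0}\mathds{1}_I(H)e^{-itH}u$, which converges to $v:=W_+(H_0,H,I)u$ as $t\to+\infty$. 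Using that $\mathds{1}_I(H)$ commutes with both groups and is idempotent, I would write, for all $t\ge 0$,
\begin{equation*}
u=\mathds{1}_I(H)^2u=e^{itH}\mathds{1}_I(H)e^{-itH_0}v_t,
\end{equation*}
split $e^{itH}\mathds{1}_I(H)e^{-itH_0}v_t=e^{itH}\mathds{1}_I(H)e^{-itH_0}v+e^{itH}\mathds{1}_I(H)e^{-itH_0}(v_t-v)$, and let $t\to+\infty$: the first term tends to $W_+(H,H_0,I)v$ by definition, and the second is bounded by $M\norme{v_t-v}_\Hi\to0$ where $M:=\sup_{t\in\R}\norme{e^{itH}\mathds{1}_I(H)}_{\mathcal{B}(\Hi)}<\infty$. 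This yields $u=W_+(H,H_0,I)W_+(H_0,H,I)u\in\Ran(W_+(H,H_0,I))$, hence $\Ran(W_+(H,H_0,I))=\Ran(\mathds{1}_I(H))$.

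The finiteness of $M$ — equivalently, the uniform boundedness of $\{e^{itH}\mathds{1}_I(H)\}_{t\in\R}$ — is exactly what the bounded functional calculus \eqref{eq:functional_calculus_normal} provides, and this is where Hypotheses~\ref{hyp:LAP H0}--\ref{hyp:Conjugate operator} and the absence of spectral singularities in $I$ enter. That uniform bound is also the one delicate point of the argument, since it is what legitimizes passing to the limit inside $u=e^{itH}\mathds{1}_I(H)e^{-itH_0}v_t$ while $v_t$ itself still depends on $t$; everything else (commutation and idempotency of $\mathds{1}_I(H)$, existence of $W_+(H_0,H,I)$ from Proposition~\ref{prop:adjoint_local}, unitarity of $e^{-itH_0}$) is already available from the preceding sections.
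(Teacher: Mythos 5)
Your proposal is correct and follows essentially the same route as the paper: both establish $\Ran(W_+(H,H_0,I))\subset\Ran(\mathds{1}_I(H))$ from the intertwining/commutation with $\mathds{1}_I(H)$, and both prove the reverse inclusion by writing $u=e^{itH}\mathds{1}_I(H)e^{-itH_0}\,e^{itH_0}e^{-itH}\mathds{1}_I(H)u$ and passing to the limit using the uniform bound on $e^{itH}\mathds{1}_I(H)$ furnished by the bounded functional calculus, arriving at $u=W_+(H,H_0,I)W_+(H_0,H,I)u$. Your version merely spells out the $v_t\to v$ splitting that the paper leaves implicit in its $o(1)$ term.
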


\begin{proof}
	We prove the equality for $W_+(H,H_0,I)$. The proof of the equalities for $W_-(H,H_0,I)$ and $W_\pm(H^\star,H_0,I)$ are similar. If $I$ does not have any spectral singularity, then with the interwining property, 
	\begin{equation}\label{eq:proof inversibility local wave operator}
		W_+(H,H_0,I)=\mathds{1}_I(H)W_+(H,H_0,I)=W_+(H,H_0,I)\mathds{1}_I(H_0).
	\end{equation}
	So by proposition \ref{prop:kernel and range local wave operator} and with \ref{eq:proof inversibility local wave operator} the restriction of $W_+(H,H_0,I)$ to $\Ran(\mathds{1}_I(H_0))$ is injective and $\Ran(W_+(H,H_0,I))\subset\Ran(\mathds{1}_I(H))$. We will show the reverse inclusion. Let $u\in\Ran(\mathds{1}_I(H))$, then 
	\begin{align*}
		u=\mathds{1}_I(H)u=e^{itH}\mathds{1}_I(H)e^{-itH_0}e^{itH_0}e^{-itH}\mathds{1}_I(H). 
	\end{align*} 
	As $t\mapsto e^{itH}\mathds{1}_I(H)e^{-itH_0}$, is uniformely bounded in $\mathcal{B}(\Hi)$, then 
	\begin{align*}
		u=e^{itH}\mathds{1}_I(H)e^{-itH_0}W_+(H_0,H,I)u+o(1),\quad (t\rightarrow\infty)\\
		=W_+(H,H_0,I)W_+(H_0,H,I)u
	\end{align*}
	This prove the reverse inclusion. 
\end{proof}

The previous result shows that if $H$ does not have any spectral singularity in $I$, then the inverses of $W_\pm(H,H_0,I)$ are $W_\pm(H_0,H,I)$. It should be noted that in order to construct the local wave operators on an interval $I$ not containing any spectral singularity, it suffices to have the existence of the spectral projection in $I$. Next it follows from Theorem \ref{thm:local asymptotic completeness } that they are invertible, and hence in particular injective with closed range. In the general case where $I$ contains spectral singularities, on the other hand, we relied on the whole spectral structure of $H$ in order to prove that $W_\pm(H,H_0,I)$ are injective (see Proposition \ref{prop:kernel and range local wave operator}).

\subsection{Remarks on the notion of asymptotic completeness}

Theorem \ref{thm:asymptotic completeness if absence of spectral singularities} shows that if $H$ does not have any spectral singularities then the wave operators $W_\pm(H,H_0)$ are asymptotically complete. We conjecture that if $H$ has a spectral singularity, then the regularized wave operators are \emph{not} asymptotically complete. We give in this section partial results supporting this conjecture.

First we consider regularized wave operators with an `over-regularizing' operator in the following sense. For the sake of simplicity, suppose that $H$ has only one incoming spectral singularity $\lambda_0$ of finite order $\nu_0$. Consider the regularized wave operator 
\begin{equation*}
	\widetilde{W}_+(H,H_0):=\slim_{t\rightarrow\infty}e^{itH}\Pi_\mathrm{ac}(H)(H-\lambda_0)^{\nu_0+1}\Res_H(z_0)^{\nu_0+1}e^{-itH_0}.
\end{equation*}
This definition should be compared to the Definition \ref{def:ref wave operators} where the regularizing operator is $(H-\lambda_0)^{\nu_0}\Res_H(z_0)^{\nu_0}$ instead of $(H-\lambda_0)^{\nu_0+1}\Res_H(z_0)^{\nu_0+1}$.
In particular we have that 
\begin{equation*}
	\widetilde{W}_+(H,H_0)=(H-\lambda_0)\Res_H(z_0)W_+(H,H_0),
\end{equation*}
where $W_+(H,H_0)$ is the regularized wave operator associated to $H$ and $H_0$ from Definition \ref{def:ref wave operators}. Under our assumptions, it follows from Proposition \ref{prop:wave operator exist} that $\widetilde{W}_+(H,H_0)$ exists, is injective and its range is dense in $\Hi_\mathrm{ac}(H)$. The next proposition shows that $\widetilde{W}(H,H_0)$ is not surjective. 

\begin{proposition}\label{prop:not asymptotical completeness too much regularized}
	Suppose that Hypotheses \ref{hyp:LAP H0}-\ref{hyp:Conjugate operator} hold. Then $\widetilde{W}(H,H_0)$ is not asymptotically complete. 
\end{proposition}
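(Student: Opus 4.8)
The plan is to exploit the factorization $\widetilde{W}_+(H,H_0)=(H-\lambda_0)\Res_H(z_0)W_+(H,H_0)$ recorded above, and to show that the bounded operator $T:=(H-\lambda_0)\Res_H(z_0)=\mathrm{Id}+(z_0-\lambda_0)\Res_H(z_0)$ fails to be surjective onto $\Hi_\mathrm{ac}(H)$. First I would note that $T$ commutes with $\Res_H(z_0)$, hence with $\Pi_\mathrm{ac}(H)=\mathrm{Id}-\Pi_\mathrm{p}(H)$ (since $\Pi_\mathrm{p}(H)$ commutes with $H$), so $T$ maps $\Hi_\mathrm{ac}(H)$ into itself; and since $\Ran(W_+(H,H_0))\subset\Hi_\mathrm{ac}(H)$ by Proposition \ref{prop:range_and_ker_of_wave_operator}, we obtain $\Ran(\widetilde{W}_+(H,H_0))\subset T\big(\Hi_\mathrm{ac}(H)\big)$. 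Thus it suffices to prove $T\big(\Hi_\mathrm{ac}(H)\big)\subsetneq\Hi_\mathrm{ac}(H)$ and then to combine this with the density of $\Ran(\widetilde{W}_+(H,H_0))$ in $\Hi_\mathrm{ac}(H)$ (stated just before the proposition) to conclude that the range is a dense, proper, hence non-closed subspace.

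Next I would check that $T$ restricted to $\Hi_\mathrm{ac}(H)$ is injective: if $Tu=0$ with $u\in\Hi_\mathrm{ac}(H)$, set $v=\Res_H(z_0)u\in\mathcal{D}(H)\cap\Hi_\mathrm{ac}(H)$; then $(H-\lambda_0)v=0$, and since the restriction of $H$ to $\Hi_\mathrm{ac}(H)$ has no point spectrum (recalled from \cite{FaFr22_06}), $v=0$, hence $u=(H-z_0)v=0$. Then I would argue that $T$ cannot be onto $\Hi_\mathrm{ac}(H)$: one has $T(\Hi_\mathrm{ac}(H))=(H-\lambda_0)\big(\mathcal{D}(H)\cap\Hi_\mathrm{ac}(H)\big)$, so if $T$ were surjective then $H-\lambda_0$ would be a bijection from $\mathcal{D}(H)\cap\Hi_\mathrm{ac}(H)$ onto $\Hi_\mathrm{ac}(H)$, and by the closed graph theorem its inverse would be bounded, i.e. $\lambda_0\in\rho\big(H|_{\Hi_\mathrm{ac}(H)}\big)$. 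This contradicts the fact that $\lambda_0\in\sigma_\mathrm{ess}(H)$ (spectral singularities lie in $\sigma_\mathrm{ess}(H)$ by definition), together with $\sigma_\mathrm{ess}(H|_{\Hi_\mathrm{ac}(H)})=\sigma_\mathrm{ess}(H)$, which holds because $\Hi=\Hi_\mathrm{p}(H)\oplus\Hi_\mathrm{ac}(H)$ with $\Hi_\mathrm{p}(H)$ finite-dimensional by Hypothesis \ref{hyp:VPH}. Hence $T(\Hi_\mathrm{ac}(H))$ is a proper subspace of $\Hi_\mathrm{ac}(H)$.

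Putting the pieces together, $\Ran(\widetilde{W}_+(H,H_0))\subset T(\Hi_\mathrm{ac}(H))\subsetneq\Hi_\mathrm{ac}(H)$, while $\Ran(\widetilde{W}_+(H,H_0))$ is dense in $\Hi_\mathrm{ac}(H)$. A dense proper subspace is never closed, so $\Ran(\widetilde{W}_+(H,H_0))$ is not closed, which by definition means $\widetilde{W}_+(H,H_0)$ is not asymptotically complete. (If $H$ also had outgoing spectral singularities one would treat $\widetilde{W}_-(H,H_0)$ identically; here there is only the incoming singularity $\lambda_0$, so no further case is needed.)

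I expect the only genuinely delicate point to be the non-surjectivity of $T$ on $\Hi_\mathrm{ac}(H)$ — concretely, making precise that $\lambda_0$ lies in $\sigma(H|_{\Hi_\mathrm{ac}(H)})$ and is not an eigenvalue there, which rests on the $J$-orthogonal decomposition $\Hi=\Hi_\mathrm{ac}(H)\oplus\Hi_\mathrm{p}(H)$ and the absence of point spectrum of $H$ on $\Hi_\mathrm{ac}(H)$ from \cite{FaFr22_06}. Everything else is a routine application of the bounded inverse theorem combined with the factorization identity and the previously established density of the range.
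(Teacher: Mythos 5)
Your proof is correct, but it takes a genuinely different route from the paper's. The paper argues by contradiction through the dynamics: if the range were closed, injectivity plus the open mapping theorem would give a lower bound $\|\widetilde{W}_+(H,H_0)u\|_\Hi\geq c\|u\|_\Hi$; the uniform boundedness of $e^{itH}r_-(H)\Pi_\mathrm{ac}(H)$ transfers this to $\lim_{t\to\infty}\|(H-\lambda_0)\Res_H(z_0)e^{-itH_0}u\|_\Hi$; the compactness of $(H-\lambda_0)\Res_H(z_0)-(H_0-\lambda_0)\Res_{H_0}(z_0)$, combined with $e^{-itH_0}u\rightharpoonup 0$, replaces $H$ by $H_0$; and unitarity of $e^{-itH_0}$ then yields $\|(H_0-\lambda_0)u\|_\Hi\geq c'\|u\|_\Hi$, which would force the self-adjoint injective operator $H_0-\lambda_0$ to be bijective --- impossible since $\lambda_0\in\sigma_\mathrm{ess}(H_0)$. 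You instead stay entirely at the level of $H$: you trap the range inside $T\bigl(\Hi_\mathrm{ac}(H)\bigr)$ with $T=(H-\lambda_0)\Res_H(z_0)$ and show that this is a proper subspace of $\Hi_\mathrm{ac}(H)$ because $\lambda_0$ lies in the spectrum of $H$ restricted to $\Hi_\mathrm{ac}(H)$ without being an eigenvalue there, so that the dense range cannot be closed. Both arguments are sound, and both ultimately exploit the same obstruction, namely that $(\cdot-\lambda_0)$ cannot be boundedly invertible at a point of the essential spectrum. What your version buys is a structural description of the defect, $\Ran(\widetilde{W}_+(H,H_0))\subset(H-\lambda_0)\bigl(\mathcal{D}(H)\cap\Hi_\mathrm{ac}(H)\bigr)$, at the price of leaning on the decomposition $\Hi=\Hi_\mathrm{p}(H)\oplus\Hi_\mathrm{ac}(H)$ and on the identification $\lambda_0\in\sigma\bigl(H|_{\Hi_\mathrm{ac}(H)}\bigr)$ --- the one step you rightly flag as delicate. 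To make it airtight, observe that if $\lambda_0$ belonged to $\rho\bigl(H|_{\Hi_\mathrm{ac}(H)}\bigr)$ then $H-\lambda_0$ would be Fredholm on $\Hi$ (its kernel and cokernel being controlled by the finite-dimensional invariant complement $\Hi_\mathrm{p}(H)$), whereas $H_0-\lambda_0$ is not Fredholm (injective with non-closed range) and $V$ is relatively compact, so $H-\lambda_0$ is not Fredholm either. The paper's route avoids restricting $H$ to $\Hi_\mathrm{ac}(H)$ altogether by pushing the obstruction onto the free operator, where self-adjointness makes the final contradiction immediate.
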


\begin{proof}
	Suppose that $\widetilde{W}(H,H_0)$ is asymptotically complete. Then its range is closed and hence there exists $c>0$ such that, for all $u\in\Hi$, 
	\begin{equation*}
		\norme{\widetilde{W}(H,H_0)u}_\Hi\geq c\norme{u}_\Hi.
	\end{equation*}	
	It then follows from \eqref{eq:group of H uniformly bounded } that there exists $c_1>0$ such that 
	\begin{equation*}
		c_1\lim_{t\rightarrow\infty}\norme{(H-\lambda_0)\Res_H(z_0)e^{-itH_0}u}_\Hi\geq c\norme{u}_\Hi.
	\end{equation*}
As $(H-\lambda_0)\Res_H(z_0)-(H_0-\lambda_0)\Res_{H_0}(z_0)$ is compact, we deduce that
\begin{equation*}
	c_1\lim_{t\rightarrow\infty}\norme{(H_0-\lambda_0)\Res_{H_0}(z_0)e^{-itH_0}u}_\Hi\geq c\norme{u}_\Hi.
\end{equation*}
Finally, using that $\lbrace e^{-itH_0}\rbrace_{t\in\R}$ is unitary and that $\Res_{H_0}(z_0)$ is bounded, this implies that there exists $c_2>0$ such that
\begin{equation*}
	c_2\norme{(H_0-\lambda_0)u}_\Hi\geq c\norme{u}_\Hi.
\end{equation*}
This is a contradiction because this would imply that $H_0-\lambda_0$ is bijective (since it is self-adjoint and injective with closed range), which is impossible since $\lambda_0$ is in the essential spectrum of $H_0$. 
\end{proof}

Our last result shows that if $H$ has a incoming spectral singularity and if $-iH$ is the generator of a strongly continuous one parameter semigroup which is uniformly bounded for positive times, then $W_+(H,H_0)$ is not asymptotically complete. It's especially the case if $\im(V)\leq 0$ because in this case $-iH$ become dissipative and thus generates a contraction group. 

\begin{proposition}
	Suppose that Hypotheses \ref{hyp:LAP H0}-\ref{hyp:Conjugate operator} hold. If $\lbrace e^{itH}\rbrace_{\pm t\ge0}$ is uniformly bounded and $H$ has a incoming/outgoing spectral singularity then $W_\pm(H,H_0)$ are not asymptotically complete. 
\end{proposition}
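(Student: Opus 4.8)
The plan is to argue by contradiction, and for definiteness I would treat $W_+(H,H_0)$; the case of $W_-(H,H_0)$, where one assumes $\{e^{itH}\}_{t\le 0}$ uniformly bounded and $H$ having an \emph{outgoing} spectral singularity, will be handled identically after replacing $r_-$ by $r_+$ and letting $t\to-\infty$. So suppose $H$ has an incoming spectral singularity, set $M:=\sup_{t\ge 0}\norme{e^{itH}}_{\B(\Hi)}<\infty$, and assume that $W_+(H,H_0)$ is asymptotically complete. By Proposition \ref{prop:range_and_ker_of_wave_operator}, $W_+(H,H_0)$ is injective with $\Ran(W_+(H,H_0))^\mathrm{cl}=\Hi_\mathrm{ac}(H)$, so if its range is also closed it is a bijection from $\Hi$ onto $\Hi_\mathrm{ac}(H)$, and the open mapping theorem provides $c>0$ such that $\norme{W_+(H,H_0)u}_\Hi\ge c\norme{u}_\Hi$ for all $u\in\Hi$.

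The crucial step is to replace, inside the strong limit defining $W_+(H,H_0)$, the operator $\Pi_\mathrm{ac}(H)r_-(H)$ by $r_-(H_0)$ at the cost of a compact error that disappears in the limit. Write $\Pi_\mathrm{ac}(H)r_-(H)=r_-(H_0)+K$ with $K:=-\,\Pi_\mathrm{p}(H)r_-(H)+\bigl(r_-(H)-r_-(H_0)\bigr)$: the first summand is finite rank by Hypothesis \ref{hyp:VPH}, and the second is compact by the same argument showing that $r(H)-r(H_0)$ is compact in the proof of Proposition \ref{prop:range_and_ker_of_wave_operator} (double induction on the number of spectral singularities and their multiplicities, using that $C$ is relatively compact with respect to $H_0$), so $K$ is compact. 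Since $H_0$ has purely absolutely continuous spectrum, $e^{-itH_0}u\rightharpoonup 0$ as $t\to\infty$ for every $u\in\Hi$, hence $Ke^{-itH_0}u\to 0$ in $\Hi$, and because $\norme{e^{itH}}_{\B(\Hi)}\le M$ for $t\ge 0$ we get $e^{itH}Ke^{-itH_0}u\to 0$. As $r_-(H_0)$ commutes with $e^{-itH_0}$, this yields
\begin{equation*}
W_+(H,H_0)u=\slim_{t\to\infty}e^{itH}e^{-itH_0}r_-(H_0)u,\qquad u\in\Hi.
\end{equation*}
This identity is precisely where the extra uniform boundedness of $\{e^{itH}\}_{t\ge 0}$ genuinely enters — the generic bound \eqref{eq:group of H uniformly bounded } only controls $e^{itH}r_-(H)\Pi_\mathrm{ac}(H)$, which is not enough — and it is, I expect, the delicate point of the argument, because the compact remainder $K$ must be annihilated by composition with $e^{-itH_0}$ on the right \emph{only after} being commuted past $e^{itH}$.

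Combining the identity with the lower bound, and using that $e^{-itH_0}$ is unitary together with $\norme{e^{itH}}_{\B(\Hi)}\le M$ for $t\ge0$, we obtain $c\norme{u}_\Hi\le M\norme{r_-(H_0)u}_\Hi$ for all $u\in\Hi$, i.e. $r_-(H_0)$ is bounded below. To contradict this it suffices to exhibit a normalized sequence $(u_n)_{n\in\N}$ in $\Hi$ with $\norme{r_-(H_0)u_n}_\Hi\to 0$. Recall that $r_-=\prod_j r_j$ over the incoming spectral singularities $\lambda_j\in\sigma_\mathrm{ess}(H)\cup\{\infty\}$ and that $\sigma_\mathrm{ess}(H)=\sigma(H_0)$. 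If $H$ has a finite incoming spectral singularity $\lambda_0$, then $r_-(\lambda_0)=0$ by \eqref{eq:rj}, and writing $r_-(z)-r_-(\lambda_0)=(z-\lambda_0)g(z)$ with $g$ rational and bounded on $\R$, any $(u_n)$ with $\norme{u_n}_\Hi=1$ and $\norme{(H_0-\lambda_0)u_n}_\Hi\to 0$ (which exist since $\lambda_0\in\sigma(H_0)$) gives $\norme{r_-(H_0)u_n}_\Hi=\norme{g(H_0)(H_0-\lambda_0)u_n}_\Hi\to 0$. If instead the only incoming spectral singularity is at $\infty$, then $r_-=r_\infty$ with $\nu_\infty\ge 1$, $\sigma(H_0)$ is unbounded (otherwise $\infty$ would trivially be a regular spectral point of $H$), and choosing $u_n$ of norm one in $\Ran(\mathds{1}_{[n,\infty)}(H_0))$ gives $\norme{r_-(H_0)u_n}_\Hi\le\sup_{\lambda\ge n}\abso{r_-(\lambda)}\to 0$. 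Either way $r_-(H_0)$ is not bounded below, a contradiction; hence $W_+(H,H_0)$ is not asymptotically complete, and the same reasoning applied with $r_+$, $t\to-\infty$, and $\sup_{t\le 0}\norme{e^{itH}}_{\B(\Hi)}<\infty$ proves it for $W_-(H,H_0)$.
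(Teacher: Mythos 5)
Your proof is correct and follows essentially the same route as the paper: assume a lower bound from closed range, replace $\Pi_\mathrm{ac}(H)r_-(H)$ by $r_-(H_0)$ modulo a compact operator that is killed by $e^{-itH_0}u\rightharpoonup 0$ and the uniform bound on $e^{itH}$, and then contradict the resulting lower bound on $r_-(H_0)$ using that the singularity lies in $\sigma(H_0)$. Your explicit Weyl-sequence verification (including the case of a singularity at infinity) is merely a more detailed version of the paper's one-line conclusion that $r_-(H_0)$ cannot be bijective.
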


\begin{proof}
	We prove the proposition for $W_+(H,H_0)$. The proof for $W_-(H,H_0)$ is very similar. Suppose that $W_+(H,H_0)$ is asymptotically complete. Then there exists $c>0$ such that for all $u\in\Hi$, 
	\begin{equation*}
		\norme{W_+(H,H_0)u}_\Hi\geq c\norme{u}_\Hi.
	\end{equation*}
Using that $\lbrace e^{itH}\rbrace_{t\ge0}$ is uniformly bounded and arguing as in the proof of Proposition \ref{prop:not asymptotical completeness too much regularized}, we have that there exists $c_1>0$ such that, for all $u\in\Hi$,
\begin{equation*}
	\norme{r_-(H_0)u}_\Hi\geq c_1\norme{u}_\Hi.
\end{equation*}
Since $r_-(H_0)$ is self-adjoint and injective, this would imply that it is bijective. However $r_-(H_0)$ is not bijective by definition, because spectral singularities of $H$ are points of the essential spectrum of $H_0$. 
\end{proof}

\appendix

\section{Proofs of Propositions \ref{prop:existence strong limit wave operator general 1} and \ref{prop:interwining property}}\label{App:existence and property wave operators}

In this appendix, for completeness, we recall the proofs of some standard results about wave operators. 

\begin{proof}[proof of Proposition \ref{prop:existence strong limit wave operator general 1}]
	We suppose that there exist $c>0$ such that 
	\begin{equation}\label{eq:proof existence wave operator estimate H}
		\int_0^\infty \norme{Ce^{-itH^\star}A^\star u}^2_\Hi\mathrm{d}t\leq c\norme{u}^2_\Hi. 
	\end{equation}
	and we prove that the strong limit $\slim_{t\rightarrow\infty} Ae^{itH}e^{-itH_0}$  exist. The existence of the other strong limits are similar. Because of the uniform boundedness principle, it suffices to show that for all $u\in\Hi$, 
	\begin{equation*}
		\lim_{t\rightarrow\infty} Ae^{itH}e^{-itH_0}u
	\end{equation*}
	exists in $\Hi$. We claim that for all $t>0$, for all $u\in\Hi$, we have
	\begin{equation}\label{eq:Wave operator integral t}
		Ae^{itH}e^{- itH_0}u=Au + i\int_0^t Ae^{isH}CWCe^{-isH_0}u\mathrm{d}s.
	\end{equation}
	Indeed, let $u\in\mathcal{D}(H_0)$ and $t\in\R$, then as $\lbrace  e^{itH}\rbrace _{t\in\R}  $ preserves $\mathcal{D}(H_0)$, then 
	\begin{align*}
		Ae^{itH}e^{-itH_0}u&=Au-\int_0^t \partial_s\left(Ae^{isH}e^{-isH_0}u\right)\mathrm{d}s\\
		&=Au-\int_0^t Ae^{isH}(iH-iH_0)e^{-isH_0}u\mathrm{d}s\\
		&=Au-i\int_0^t Ae^{isH}Ve^{-isH_0}u\mathrm{d}s.
	\end{align*}
	As the equality 
	\begin{equation*}
		Ae^{itH}e^{-itH_0}u=Au- i\int_0^t Ae^{isH}Ve^{-isH_0}u\mathrm{d}s
	\end{equation*}
	makes sense if  $u\in\Hi$, using that $\mathcal{D}(H_0)$ is dense in $\Hi$ together with Lebesgue's dominated convergence theorem, we deduce that \eqref{eq:Wave operator integral t} holds.
	 
	To prove that the limit in the left hand side of \eqref{eq:Wave operator integral t} exists when $t\rightarrow \infty$, we show that the integral in the right hand side of \eqref{eq:Wave operator integral t} converges when $t\rightarrow\infty$, To do this, we  show that the integral satisfies a Cauchy criterion. As $C$ is relatively smooth with respect to $H_0$, it follows from \eqref{eq:Cr Kato smooth H star} that there exists $c_0>0$ such that, for all $x,y\in\R$, 
	\begin{align*}
		&\norme{\int_x^y Ae^{isH}CWCe^{-isH_0}u\mathrm{d}s}_{\Hi}\\
		&=\sup_{\substack{v\in \Hi\\\norme{v}_\Hi=1}}\abso{\int_x^y\scal{Ce^{-isH^\star}A^\star v}{WCe^{-isH_0}u}_\Hi\mathrm{d}s}\\
		&\leq \sup_{\substack{v\in \Hi\\\norme{v}_\Hi=1}}\int_x^y\abso{\scal{Ce^{-isH^\star}Av}{WCe^{-isH_0}u}_\Hi}\mathrm{d}s\\
		&\leq \sup_{\substack{v\in \Hi\\\norme{v}_\Hi=1}}\left(\int_x^y\norme{Ce^{- isH^\star}Av}_\Hi^2\mathrm{d}s\right)^{1/2}\left(\int_x^y\norme{WCe^{-isH_0}u}_\Hi^2\mathrm{d}s\right)^{1/2}\\
		&\leq c_0\norme{u}_\Hi,
	\end{align*}
	where we used \eqref{eq:proof existence wave operator estimate H} and \eqref{eq:Kato_smooth} in the last inequality. Since $\Hi$ is complete the map $s\mapsto Ae^{isH}Ve^{-isH_0}u$ belongs to $L^1([0,\infty),\Hi)$ and hence the limit of \eqref{eq:Wave operator integral t} exists when t$\rightarrow \infty$. 
\end{proof}

\begin{proof}[proof of Proposition \ref{prop:interwining property}]
	We show the proposition for $W_+(H,H_0)$, the proof is similar for the other wave operators. Let $u\in\Hi$ and $t\in\R$. Then 
	\begin{align*}
		e^{itH}W_+(H,H_0)u&=e^{itH}e^{isH}\Pi_\mathrm{ac}(H)r_-(H)e^{-isH_0}u+o(1),\quad (s\rightarrow \infty)\\
		&=e^{i(t+s)H}\Pi_\mathrm{ac}(H)r_-(H)e^{-i(t+s)H_0}e^{itH_0}u+o(1),\quad (s\rightarrow \infty)\\
		&=e^{irH}\Pi_\mathrm{ac}(H)r_-(H)e^{-irH_0}e^{itH_0}u+o(1),\quad (r\rightarrow\infty)\\
		&=W_+(H,H_0)e^{itH_0}u. 
	\end{align*}
	This shows \eqref{eq:interwining property group}. To prove that $W_+(H,H_0)$ preserves $\mathcal{D}(H_0)$, it suffices to use that 
	\begin{equation*}
		\mathcal{D}(H_0)=\left\lbrace u\in\Hi,\quad \lim_{t\rightarrow 0^+}\frac{1}{-it}\left(e^{-itH}-\Id\right)u ~\text{exists in} ~\Hi\right\rbrace.
	\end{equation*}
	together with \ref{eq:interwining property group}. To prove \eqref{eq:interwining property operator}, we use that 
	\begin{equation*}
		Hu=\lim_{t\rightarrow 0}\frac{1}{-it}\left(e^{-itH}-\Id\right)u
	\end{equation*}
	and \eqref{eq:interwining property group} again. 
\end{proof}

\textit{Acknowledgements:} I would like to warmly thank Jérémy Faupin for his advice, useful discussions and his encouragements.


\begin{thebibliography}{10}


\bibitem{Aa21_01}
M. Aafarani, \textit{Large time behavior of solutions to Schrödinger equation with complex-valued potential}, J. Math. Pures Appl., 150, (2021), 64--111.
	
	
\bibitem{AnTr12_12}
J-P.~Antoine and C.~Trapani, \textit{Partial inner product spaces, metric operators and generalized hermicity}, J.Phys. A: Math Theor., 46, (2013), 21p. 

\bibitem{Ba15_01}
F. Bagarello, J.-P. Gazeau, H.F Szafraniec, M. Znojil, \textit{Non-Selfadjoint Operators in Quantum Physics. Mathematical Aspects}, Hoboken, NJ: John Wiley \& Sons, Inc, Hoboken, NJ, 2015.

\bibitem{Bo36_01}
N.~Bohr. \textit{Neutron capture and nuclear constitution}, Nature, 137, (1936), 344--348.

\bibitem{CFKS}
H.L. Cycon, R.G. Froese, W. Kirsch, B. Simon, 
\textit{Schr\"odinger operators, with application to quantum mechanics and global geometry}, Springer Study edition, Texts and Monographs in Physics, 1987.

\bibitem{Da78_01}
E.~B. Davies. \textit{Two-channel {H}amiltonians and the optical model of nuclear  scattering}, Ann. Inst. H. Poincar\'e Sect. A (N.S.), 29, (1978), 395--413.

\bibitem{Da07_01}
E.~B. Davies. \textit{Linear operators and their spectra}, volume 106 of {\em Cambridge Studies in Advanced Mathematics}. Cambridge University Press, Cambridge, 2007.


\bibitem{DyZw19_01}
S.~Dyatlov and M.~Zworski. \textit{Mathematical theory of scattering resonances}, AMS studies in Mathematics 200, 2019.

\bibitem{EnNa20_01}
K.-J. Engel and R.~Nagel, \textit{One-parameter semigroups for linear evolution equations}, volume 194 of {\em Graduate Texts in Mathematics}. Springer-Verlag, New York, 2000.

\bibitem{Fa21_01}
J. Faupin,  \textit{Generic nature of asymptotic completeness in dissipative scattering theory}, Rev. Math. Phys., 33, No. 1, (2021), 23 p.

\bibitem{FaNi18_01} J.~Faupin and F.~Nicoleau, \textit{Scattering matrices for dissipative quantum systems}, J. Funct. Anal., 9, (2019), 3062--3097.

\bibitem{FaFr18_01} J.~Faupin and J.~Fr\"ohlich, \textit{Asymptotic completeness in dissipative scattering theory}, Adv. Math., 340, (2018), 300--362.

\bibitem{FaFr22_06} 
J.~Faupin and N.~Frantz \textit{Spectral decomposition of some non self-adjoint operators} preprint on ArXiv 2203.12406 (2022) \textit{accepted by Annales Henri Lebesgue}

\bibitem{FePoWe54_01}
H.~Feshbach, C.~Porter and V.~Weisskopf. \textit{Model for nuclear reactions with neutrons}, Phys. Rev., 96, (1954), 448--464.

\bibitem{Fe58_01} H.~Feshbach, \textit{Unified theory of nuclear reactions}, Ann. Phys. (NY), (1958), 357--390.

\bibitem{Fe58_02} H.~Feshbach, \textit{The optical model and its justification}, Ann. Rev. Nucl. Sci., 8, (1958).

\bibitem{Fe62_01} H. Feshbach, \textit{A unified theory of nuclear reactions II.} Ann. Phys. (NY), (1962), 287--313.

\bibitem{Fe92_01}
H.~Feshbach, \textit{Theoretical Nuclear Physics, Nuclear Reactions}, Wiley, New York, 1992.

\bibitem{FrLaSa16_01}
R.~L. Frank, A.~Laptev, and O.~Safronov,\textit{On the number of eigenvalues of Schr\"odinger operators with	complex potentials}, J. Lond. Math. Soc., (2) 94, (2016), 377--390.

\bibitem{Go10_01}
M.~Goldberg, \textit{A dispersive bound for three-dimensional {S}chr\"odinger operators
	with zero energy eigenvalues}, Comm. Partial Differential Equations, 35, (2010), 1610--1634.


\bibitem{Go70}
C.Goldstein. \textit{Perturbation of non-self-adjoint operators. I.} Arch. Rational Mech., 37, (1970), 268-296. 

\bibitem{Go71}
C.Goldstein. \textit{Perturbation of non-self-adjoint operators. II.} Arch. Rational Mech., 42, (1971), 380-402. 

\bibitem{Ho71_01}
P.~E. Hodgson, \textit{The nuclear optical model}, Rep. Prog. Phys., 34, (1971), 765--819.

\bibitem{KaYa76_09}
T.~Kako and K.Yajima, \textit{Spectral and Scattering Theory for a Class of Non-Slefadjoint Operator} Sci. Papers College Gen. Ed. Univ. Tokyo, 26, (1976).

\bibitem{Ka65_01}
T.~Kato, \textit{Wave operators and similarity for some non-selfadjoint operators}, Math. Ann., 162, (1965/1966), 258--279.


\bibitem{Kato67_05} 
T.~Kato \textit{Scattering theory with two Hilbert spaces} J.Funct. Anal. \textbf{1} (1967) 342-369

\bibitem{Kr17_01}  D. Krejcirik, \textit{Mathematical aspects of quantum mechanics with non-self-adjoint operators}, Habilitation Thesis, Universitas Carolina Pragensis, 2017. 

\bibitem{Mo67_09}
K.Mochizuki. \textit{Eigenfunction expansions associated with the Schrödinger operator with a complex potential and the scattering inverse problem}, Proceedings of the Japan Academy, Proc. Japan Acad. 43(7), (1967), 638-643.

\bibitem{Mo66_11}
K.Mochizuki. \textit{On the large perturbation by a class on non-self-adjoint operators}, J.Math. Soc. Japan., 19(2), (1967), 123-158. 

\bibitem{Mo_76/77}
K.Mochzuki. \textit{Scattering theory for wave equations with dissipative terms}, J. Operator Theory, 14(1):57-66, 1985.

\bibitem{ReSi80_01}
M.~Reed and B.~Simon.
\newblock {\em Methods of modern mathematical physics. {I}--{IV}}.
\newblock Academic Press [Harcourt Brace Jovanovich, Publishers], New
  York-London, 1975--1980.
  
 \bibitem{Rud1980}
 W. Rudin, \textit{Function theory in the Unit Ball of $\C^n$}, New York, NY: Springer New York, 1980. 
  
\bibitem{Si79}
B.Simon. \textit{Phase space analysis of simple scattering systems : extensions of some work of Enss.} Duke Math. J., 46(1):119-168,1979.

\bibitem{St13_11}
S.A. Stepin. \textit{Complex Potentials : Bound States, Quantum Dynamics and Wave Operators},
Semigroups of Operators - Theory and Application (Bedlewo Poland October 2013), Springer-Verag (2015), 287-297.

\bibitem{StS19_05}
S.A. Stepin, \textit{Complete Wave operators in non-self-adjoint Kato model of smooth perturbation theory}, Russian Journal of Mathematical Physics \textbf{26}, 94-108 (2019).

\bibitem{Wa20_01}
X.P. Wang, \textit{Gevrey estimates of the resolvent and sub-exponential time-decay for the heat and Schrödinger semigroups}, J. Math. Pures Appl., 135, (2020), 284--338.


\bibitem{Wa12_01}
X.~P. Wang. \textit{Time-decay of semigroups generated by dissipative {S}chr\"odinger  operators} J. Differential Equations, 253, (2012), 3523--3542.

\bibitem{YA98}
D.R~Yafaev, \textit{Mathematical Scattering Theory: General Theory}, American Mathematical Society, (1998). 

















\end{thebibliography}
\end{document}